\documentclass[journal,onecolumn,12pt]{IEEEtran}
\usepackage[doublespacing,nodisplayskipstretch]{setspace}
\usepackage[utf8]{inputenc} 
\usepackage[T1]{fontenc}
\usepackage{url,comment}
\usepackage{epsfig,xcolor,graphics,graphicx}
\usepackage{amsmath,amssymb,amsthm}

\usepackage{soul}

\newtheorem{theorem}{Theorem}
\newtheorem{lemma}{Lemma}
\newtheorem{corollary}{Corollary}
\newtheorem{definition}{Definition}

\newtheorem{proposition}{Proposition}

\newcommand{\cO}{\mathcal{O}}
\newcommand{\cC}{\mathcal{C}}
\newcommand{\pr}{\mathbb{P}}
\newcommand{\R}{\mathbb{R}}
\newcommand{\Z}{\mathbb{Z}}
\newcommand{\N}{\mathbb{N}}
\newcommand{\E}{\mathbb{E}}
\newcommand{\F}{\mathbb{F}}
\newcommand{\hX}{\hat{X}}
\newcommand{\hY}{\hat{Y}}
\newcommand{\hx}{\hat{x}}

\newcommand{\hz}{\hat{z}}
\newcommand{\hU}{\hat{U}}
\newcommand{\hpi}{\hat{\pi}}
\newcommand{\tX}{\tilde{X}}
\newcommand{\tY}{\tilde{Y}}

\newcommand{\nrd}[1]{\textcolor{red}{#1}}
\newcommand{\nbl}[1]{\textcolor{blue}{#1}}

\usepackage[hidelinks]{hyperref}
\allowdisplaybreaks[1]

\title{Diversity in Coded TE-QKD Channels: Achieving Infinite Diversity out of Finite System Resources}
\author{Shaikha S. Al-Qahtani, Siyao Li, and Joseph J. Boutros\\
shaikha@tamu.edu, lis14@erau.edu, boutros@ieee.org}
\date{July 2025}

\begin{document}
\maketitle
\vspace{-2cm}
\begin{abstract}
%%\nrd{SL edited on July 24.\\}
In this paper, we establish conditions and give proofs on how an error-correcting code can attain infinite diversity in a time-entanglement quantum key distribution (TE-QKD) reconciliation. 
The shocking result, never encountered in the literature on coding and communication theory, is that a decoder exhibits an infinite diversity
order while the channel has finite diversity and the code has a relatively short finite length. 
Indeed, TE-QKD encodes multiple raw key bits per entangled photon pair over discrete time bins. Ideally, time entanglement yields identical photon-arrival times, and thus identical raw keys, at Alice's and Bob's detectors. However, practical detector timing jitter introduces measurement errors. Modeled as Gaussian noise, this jitter causes disagreements between Alice's and Bob's bin indices and necessitates information reconciliation. 
This paper studies the  diversity order of coded TE-QKD reconciliation, defined by the asymptotic slope of the error probability as the signal-to-noise ratio (SNR) $\gamma$ is asymptotically large, i.e., in the low noise regime. At high SNR, the channel exhibits two error mechanisms: a jump to a neighboring bin has probability $\Theta(\gamma^{-1/2})$, while a jump of two bins or more has exponentially decaying probability 
as $\cO(e^{-\tfrac{\gamma}{4}})$. 
 This separation permits a code to transform the uncoded diversity order $1/2$ into infinite diversity by eliminating all decoding failures composed solely of single-bin jumps. 
 For bounded-distance algebraic decoding, we derive a necessary and sufficient condition in terms of the number of photons per codeword and the decoding radius. 
 For soft-decision decoding, we introduce the maximal finite diversity (MFD) property and prove that infinite diversity is achieved if and only if the code is MFD deficient. 
 The resulting conditions further yield information rate bounds and show that soft reconciliation can attain infinite diversity at substantially higher rates, or with fewer bins per frame, than algebraic reconciliation. 
 This behavior has no counterpart in classical fading channels, where decoding can only multiply a finite diversity order by a finite factor. 
 Examples of short codes based on Golay, Reed-Solomon, Bose-Chaudhuri-Hocquenghem (BCH), and Reed-Muller codes validate the analysis and illustrate how the TE-QKD system parameters and the relatively short code parameters affect the achievable diversity for both hard and soft information reconciliation.
\end{abstract}

\begin{IEEEkeywords}
Quantum key distribution, time entanglement, information reconciliation, diversity order, error-correcting code, algebraic decoding, soft-decision decoding.
\end{IEEEkeywords}

%%--------------------------------------------------------------
%%--------------------------------------------------------------
\section{Introduction}
Quantum Key Distribution (QKD) protocols enable two authenticated parties, Alice and Bob, to establish a shared secret key whose security rests on the laws of quantum mechanics~\cite{Bennett2014,Ekert1991}.
However, practical QKD is fundamentally constrained by optical attenuation, detector imperfections, background noise, and the classical post-processing required to convert correlated measurements into identical secret keys~\cite{Xu2020,Pirandola2020}. Therefore, the design of efficient information reconciliation codes is central to converting physical-layer correlations into a useful secret-key rate,
e.g., see the non-exhaustive list on error-correcting codes for information reconciliation in polarization-based QKD \cite{traisilanun2007}-\cite{Tarable2024}.

The term \emph{diversity} appears in several different contexts in quantum communications, ranging from the variety of QKD protocols and physical implementations~\cite{Rehman2025} to the use of multiple spatial paths, optical modes, or time intervals in free-space and satellite links~\cite{Zhu2002,Scarfe2023,Cao2020}.
These techniques mitigate turbulence, beam wandering, and time-varying loss, which can reduce the secret key rate and appear as excess noise in continuous-variable QKD~\cite{Wang2018}.
Previous work has considered post-selection, transmittance clustering, pilot-assisted estimation, and security analysis under fading (see, e.g., \cite{Usenko2012,Hosseinidehaj2015,Pirandola2021}). Spatial-mode and space-division multiplexing have also been explored to increase aggregate key rates or improve robustness against stochastic channel impairments. In particular, recent work on continuous-variable quantum communications has shown that combining independently faded spatial modes can increase the average secret key rate relative to single-mode transmission in some operating regimes~\cite{Koudia2025}. This physical-layer diversity is analogous to path or antenna diversity in classical wireless systems.

In this work, we study the \emph{diversity order} of information reconciliation in time-entanglement quantum key distribution (TE-QKD) under algebraic/hard-decision and soft-decision decoding. The diversity order is defined as $d = -\lim_{\gamma \to \infty} \tfrac{\log P_{\rm e}(\gamma)}{\log \gamma}$ where $\gamma$ denotes the signal-to-noise ratio (SNR) and $P_{\rm e}$ is the reconciliation error probability. 
In coded and uncoded fading channels, this quantity describes the asymptotic slope of an error-probability curve on a log--log scale.
Independent fading realizations, multipath components, or properly designed channel codes can increase the diversity order and hence accelerate the polynomial decay of the error probability~\cite{Boutros1998,Malkamaki2002,Tse2005}. For a channel having an uncoded diversity order $d_0$, $d_0 < \infty$, 
bounded-distance hard decoding yields a diversity equal to $d_0 \times(t+1)$, whereas soft-decision decoding yields a diversity equal to $d_0 \times d_{Hmin}$, 
where $d_{Hmin}$ is the minimum Hamming distance of the code
and $t$ is the decoding radius, $t=\lfloor \tfrac{d_{Hmin}-1}{2}\rfloor$. Typical conditions for such finite-factor diversity improvement in standard wireless communications is recalled below
in Section~\ref{sec_div_wireless}. In both hard and soft decision cases, a finite channel diversity remains finite after decoding.
To our knowledge, there exists no special case in the literature 
where infinite diversity is created out of a finite one.

Time-entanglement-based QKD (TE-QKD) provides a markedly different setting by exploiting the strong correlation between the arrival times of entangled photon pairs~\cite{Boutros2023,Dolecek2023}. The observation interval is partitioned into bins and frames, and the occupied bin index provides a high-dimensional raw key symbol, similarly to pulse position modulation (PPM)~\cite{Zhong2015}.
With $N=2^m$ bins, one detected pair can convey as many as $m$ raw key bits. This photon efficiency is attractive, but timing jitter perturbs the measured arrival positions at Alice and Bob and can move their observations across bin boundaries. The resulting raw-key disagreements must be corrected through information reconciliation, during which Alice transmits syndrome or parity information to Bob over an authenticated public channel. 
Existing studies of QKD reconciliation have primarily focused on the reconciliation efficiency, decoding complexity, frame error rate, and resulting secret key rate (see, e.g.,~\cite{Boutros2023,Wang2022,Almeida2023}).
In TE-QKD, coding gains have been demonstrated numerically for Reed--Solomon, BCH, and LDPC reconciliation~\cite{Boutros2023}. 
However, it is not straightforward to guess how conventional coding parameters such as length, rate, and minimum Hamming distance do determine the decay rate of the TE-QKD reconciliation error,  
whether it decays polynomially in SNR (finite diversity) or exponentially in SNR (infinite diversity). This paper establishes the proofs and the conditions to achieve infinite diversity.

\paragraph*{Our Main Contributions} 
This work addresses the question of creating infinite diversity in a finite-diversity system by developing a diversity theory for coded TE-QKD reconciliation. The main contributions are summarized as follows.
%%~\\
%%\nrd{Joseph and Siyao, we resume on July 29 at this point.}
%%~\\
\begin{itemize}
%% single, multi-bin jump
\item We derive high-SNR transition laws for the hard- and soft-output TE-QKD channels via two equivalent models, and establish two asymptotically distinct error mechanisms. We prove that single-bin jumps occur with probability $\Theta(\gamma^{-1/2})$, whereas jumps of two or more bins decay exponentially with $\gamma$ as~$\cO(e^{-\tfrac{\gamma}{4}})$. 
See Lemmas~\ref{lem_U-X-Y-1-jump}--\ref{lem_X-Y-2-jump} and 
Proposition~\ref{prop:single-double-jumps}.

%\nbl{We formulate a discrete channel model for the TE-QKD raw-key disagreement induced by detector jitter, building on the PPM-based time-binning framework of~\cite{Boutros2023}. We show that the hard-output TE-QKD channel has two distinct high-SNR error mechanisms: i) single-bin jumps, whose probabilities decay polynomially and yield the uncoded diversity order $1/2$, and ii) multi-bin jumps, whose probabilities decay exponentially and yield the infinite diversity.}
%%% hard decoding
\item For bounded-distance algebraic decoding, we establish a necessary and sufficient condition for infinite diversity in terms of  the number of photons required per codeword and the decoder correction radius. We also extend the necessary and sufficient condition to non-binary codes over finite fields of characteristic~2.
See Theorem~\ref{thm:infinite-diversity-hard} and  Corollary~\ref{cor_infinite-diversity-hard-nonbinary}.
%%% soft decoding
\item For soft-decision decoding, we introduce the maximal-finite-diversity (MFD) property in Definition~\ref{def_MFD_code}, which captures whether a non-zero codeword can be formed entirely from labels reachable through single-bin jumps. 
We develop a demilitarized-zone (DMZ) analysis in Lemma~\ref{lem_DMZ} that isolates the dominant soft-decision error region and proves that all decoding errors outside the DMZ have
exponentially decaying probability under a condition on the DMZ width.
\item The most important result in this paper is the proof that soft reconciliation has infinite diversity if and only if the code is MFD deficient, see Theorem~\ref{thm:infinite-diversity-soft}.
A simple sufficient condition for infinite diversity based on the code minimum Hamming distance is provided.  
%%% rate corollaries
\item The Singleton bound converts the infinite diversity condition into an upper bound on the coding rate in Corollaries~\ref{cor_rate_loss_hard} and~\ref{cor_rate_loss_soft}, and we show that the rate penalty under soft-decision decoding is half that of algebraic decoding. 
\item Finally, examples based on short linear codes illustrate the
conditions proved in Theorems~\ref{thm:infinite-diversity-hard} and \ref{thm:infinite-diversity-soft}. Examples include the binary self-dual Golay code, short Reed-Solomon codes, binary and quaternary BCH codes, as well as binary Reed-Muller codes. 
See Section~\ref{sec_code_examples}. 
The examples show that soft decoding generally reaches infinite diversity with fewer bins per frame, or equivalently with a number of photons per word not as small as for algebraic decoding.
\end{itemize}

These results reveal a non-standard diversity behavior of coded TE-QKD channels. 
Unlike conventional fading channels, where finite input diversity remains finite after decoding, the TE-QKD channel decoder can exhibit infinite diversity when the code structure eliminates all dominant single-bin-jump error events.

\paragraph*{Paper Organization} Section~\ref{sec_model} introduces the TE-QKD channel model and notation. Section~\ref{sec_finite} reviews diversity in classical fading channels and contrasts it with the diversity behavior of uncoded and conventionally coded TE-QKD channels.
Section~\ref{sec_probability} analyzes the probabilities of single-bin and multi-bin jumps.
Section~\ref{sec_infinity-div} establishes the main infinite-diversity conditions under algebraic and soft-decision decoding respectively.
Section~\ref{sec_code_examples} illustrates these conditions with short binary and non-binary codes.
Section~\ref{sec_conclusion} concludes this work.

%%--------------------------------------------------------------
%%--------------------------------------------------------------
\section{TE-QKD Channel Model and Notations\label{sec_model}}

\begin{figure}[!h]
\begin{center}
\includegraphics[width=10cm]{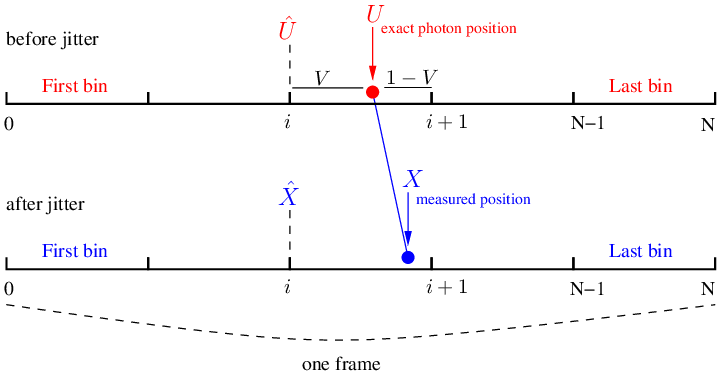}
\caption{The frame structure on Alice's side showing the photon position before and after detection jitter noise.\label{fig_modeloneside}}
\end{center}
\end{figure}

We consider the TE-QKD channel involving two distant parties, Alice and Bob. We assume independent photon-pair realizations and independent detector jitters across channel uses. Therefore, the induced discrete or soft-ouput channel is memoryless under the adopted model.
Following the same notations used in \cite{Boutros2023}, the measured \emph{soft} positions by Alice and Bob are modeled as 
\begin{align}
\tilde{X} = U + Z_1, ~~~~  \tilde{Y} = U + Z_2, 
\label{eq:channel-model}
\end{align}
$\tilde{X},\tilde{Y} \in \R$, where $Z_1$ and $Z_2$ are independent identically distributed (i.i.d.) Gaussian noise with zero mean and variance $\sigma^2$, denoted as $\mathcal N(0, \sigma^2)$, modeling the detection jitter. 
The random variable $U \sim \mathrm{Unif}([0,N[)$ is uniform in the real range $[0,N[$, $N \in \N$, $N \ge 2$. The variable $U$ is the jitter-free photon position within a frame of $N$ bins, bin number $i$ being the interval $[i,i+1[$, $i\in\Z_N=\{0, 1, \ldots, N-1\}$.  A frame is \emph{valid} for a party if its measured soft position is in $[0,N[$. Alice and Bob reject invalid (empty) frames and frames with more than one received photon. Let $X$ and $Y$ denote the instances of $\tilde{X}$ and $\tilde{Y}$ within the interval $[0, N[$, and let $\hat{X}$ and $\hat{Y}$ be the bin number inside a frame. Conditioned on both frames being valid, define \emph{hard} decisions \begin{align}
   \hX=\lfloor X\rfloor \in \Z_N, ~~~~ \hY=\lfloor Y\rfloor \in \Z_N,
\end{align}
where 
\begin{align}
 X=\tilde{X}, \text{ for } \tilde{X} \in [0, N[, ~~~~  Y=\tilde{Y}, \text{ for } \tilde{Y} \in [0, N[.   
\end{align}  
The notation $\lfloor x \rfloor$ refers to the floor function of $x$, $x \in \R$. Figure~\ref{fig_modeloneside} illustrates a frame on Alice's side.
The frame structure is similar on Bob's side. The exact position $U$ is identical
in both frames thanks to the time entanglement; however, the photon detectors
generate different jitters $Z_1$ and $Z_2$. This mathematical model does not
take into account the sleeping phase of a detector after receiving a photon.
Such a simplification has no effect on the code design and the error-rate performance evaluation. Similar to $\hX$ and $\hY$, we define the bin number of the exact photon position as $\hU=\lfloor U\rfloor \in \Z_N$. As illustrated in Figure~\ref{fig_modeloneside},
the distance to the left bin is $V=U-\hU$ and the distance to the right bin is $1-V$.

For reconciliation in the QKD protocol, it is considered that $\hat{X}$ is the channel input (on Alice 's side), $\hat{Y}$ and $Y$ are the hard output and the soft output respectively of the channel (on Bob's side). Appendix~\ref{app_fundamental} includes a list of fundamental equations characterizing the $\hat{X}-\hat{Y}$ hard-output (algebraic) channel and the $\hat{X}-Y$ soft-output channel. 
The apriori probability of a bin $\hat{\pi}_i=\pr(\hat{X}=i)$ is given by (\ref{equ_hpi}) with a simplified expression (\ref{equ_hpi_simple}) at high SNR. For analyzing the performance and the diversity order, and for the sake of simplicity, taking $\hat{\pi}_i=1/N$ (equal probability bins) is sufficient and does not hurt the analysis. However, (\ref{equ_hpi}) is utilized for exact numerical evaluations especially at low and moderate SNR. Appendix~\ref{app_fundamental} also summarizes the expressions of the channel likelihood, i.e., the conditional density $p(y|\hat{x})$ in (\ref{equ_Y_cond_hX_tY}) for the soft-output channel, the transition probability $p_{ij}$ for the algebraic channel in (\ref{equ_pij}), and finally the density $p(u|\tX, \tY \in [0,N[)$ of $U$ conditioned on both Alice and Bob's frames being valid in~(\ref{equ_pdfU_valid}). At high SNR, as given by (\ref{equ_pdfU_highsnr}), $U$ becomes uniform in the range $[0,N[$.\\ 

The main channel parameter $\gamma$ is the signal-to-noise ratio (SNR) parameter defined as 
\begin{align}
    \gamma = \frac{E_s}{\sigma^2} = \frac{1}{\sigma^2}
\end{align}
where the average symbol energy is normalized to $E_s=1$, representing the energy cost per transmitted photon. 
In the absence of error-correcting codes, the information rate of the channel is $R_b=m=\log_2(N)$ bits per photon, or equivalently $R_b$ bits per channel use (bpcu). 
In the presence of a rate $k/n$ error-correcting code in the reconciliation process,
the information rate becomes $R_b=\tfrac{k}{n}\cdot m=\tfrac{k}{n}\cdot \log_2(N)$ information bits per photon, for $m$ coded bits per photon. 

For this TE-QKD model, when the eavesdropper has access only to the public communication and no correlated side information, the secrecy capacity equals the mutual information between Alice's and Bob's measurement~\cite[p.567]{ElGamal2011}, i.e., $I(X; Y)$, as derived in~\cite[Theorem 3]{Boutros2023}. This secrecy capacity serves as an achievable upper bound on the rate after privacy amplification. In the low-noise regime, as stated by Corollary~2 in~\cite{Boutros2023}, $I(X; Y)$ is asymptotic to $\tfrac{1}{2}\log(\tfrac{\gamma}{4\pi e})$ expressing the addition of Gaussian jitter noises on both Alice's and Bob's sides.\\
 
\begin{definition}
\label{def_div}
Let $P_e=P_e(\gamma)$ be the probability of error of a communication system. 
The latter could be coded or uncoded. The error probability $P_e$ could refer to 
a bit error probability, or a symbol error probability, or a word error probability,
before or after decoding. The diversity order $d$ associated to $P_e$ is the
quantity
\begin{equation}
d=-\lim_{\gamma \rightarrow \infty} \frac{\log(P_e)}{\log(\gamma)}.
\end{equation}
\end{definition}
Let $P_{e1}(\gamma)$ have diversity $d_1$ and $P_{e2}(\gamma)$ have diversity $d_2$;
if there exists $\gamma_0$ such that $P_{e1}(\gamma) \le P_{e2}(\gamma)$, $\forall \gamma \ge \gamma_0$, then we have $d_2 \le d_1$.
Two cases are of great interest for our work. Firstly, the finite-diversity case
where the probability of error has the general shape
\begin{equation}
\label{equ_pe_findiv}
P_e(\gamma)=\frac{K(\gamma)}{\gamma^{d+\alpha(\gamma)}}+\beta(\gamma)=\Theta(\gamma^{-d}),
\end{equation}
where the diversity order is a constant $0<d<\infty$ (not necessarily an integer),
with the following mild conditions on the remaining terms:  
a) $|\alpha(\gamma)|=O(\gamma^{-\lambda})$ for some constant $\lambda>0$, b) $K(\gamma)>0$ and $\lim_{\gamma \rightarrow \infty} \log(K(\gamma))/\log(\gamma)=0$, and c) $\lim_{\gamma \rightarrow \infty} \beta(\gamma)\cdot\gamma^{d+\alpha(\gamma)}/K(\gamma)=0$. 
At asymptotic SNR, $P_e$ decreases as $K/\gamma^d$ where the slope in a double-logarithmic scale is equal to $-d$. After writing $K/\gamma^d=1/(K^{-1/d}\gamma)^d$, the {\em coding gain} refers to the factor $K^{-1/d}$. Thus, $K$ affects the coding gain with no effect on the slope, i.e., the $P_e(\gamma)$ plot moves left or right without a change in the slope. This paper mainly studies the diversity of a coded TE-QKD system, not its coding gain.\\
The second diversity case corresponds to an exponential decrease of the error probability,
\begin{equation}
\label{equ_pe_infdiv}
P_e(\gamma)=K \cdot \exp(-\alpha \gamma^{\beta}), ~~~~K,\alpha,\beta >0,
\end{equation}
where the constant $K$ could also be a polynomial or a sub-exponential function of $\gamma$. In this second case, $-\lim_{\gamma \rightarrow \infty} \tfrac{\log(P_e)}{\log(\gamma)}=\infty$, i.e., we have an infinite diversity in (\ref{equ_pe_infdiv}) as opposed to the finite diversity $d$ achieved in (\ref{equ_pe_findiv}). 

In communication theory and information theory, the literature usually focuses on the diversity order as given in Definition~\ref{def_div}, 
see for example \cite{Tse2005}\cite{Biglieri2005}\cite{Boutros2010}. 
It is also possible to study the so-called effective diversity at finite
SNR as in \cite{Vashakidze2024}, which is the slope of $P_e(\gamma)$ in a doubly logarithmic scale for a given finite $\gamma$. In this paper, 
we focus on formal diversity as stated by Definition~\ref{def_div}. 
Section~\ref{sec_finite} recalls the general behavior of diversity on standard fading channels with additive Gaussian noise and its ``expected'' 
behavior on time-entangled QKD channels, before describing ``unexpected''
behavior in Section~\ref{sec_infinity-div} where infinite diversity is created out of a finite diversity.

%%--------------------------------------------------------------
%%--------------------------------------------------------------
\section{From Standard Fading Channels to TE-QKD \label{sec_finite}}
In wireless communication, signals often travel from a transmitter to a receiver via multiple paths due to reflections off buildings, terrain, and other objects. This multipath propagation can cause the signal strength at the receiver to fluctuate randomly, which is known as fading~\cite{Tse2005}. A simple mathematical model for a fading channel in baseband, referred to as coherent frequency-non-selective (no inter-symbol interference), is
\begin{equation}
\label{eq_fadingchannel}
Y_t=\alpha_t X_t + Z_t,~~~~\alpha_t> 0,
\end{equation}
for linear modulations \cite{Proakis2008}, where $t \in \Z$ is the channel instance.
The integer $t$ could represent discrete-time, a channel propagation path index, 
or the index of a transmit or a receive antenna, or the number of a communication link with a specific carrier frequency. Typically, the random variable $\alpha_t$ is a fading that follows a Rayleigh distribution with density 
$p(\alpha)=2\alpha e^{-\alpha^2}$,
for $\alpha>0$ and a unit second order moment. 
Under some particular conditions, fading could follow other types of
distributions such as Rice and Nakagami distributions.   

When the receiver has access to a unique Rayleigh fading $\alpha_t$ during a signal detection,
we say that diversity is $1$, which is equivalent to no diversity or absence of diversity.
The weak received signal corresponding to $\alpha_t$ close to $0$ (called a deep fading) makes the detector
fail in finding $X_t$ from $Y_t$, with a high probability. It can be proven that the probability of error varies as $P_e=\Theta(\tfrac{1}{\gamma})$ \cite{Tse2005}\cite{Proakis2008}. 
If the receiver has access to $d$ independent Rayleigh fadings, then diversity is $d$.
You may say that the diversity order is $d$. In this case, it can also be proven that 
$P_e=\Theta(\tfrac{1}{\gamma^d})$ if the receiver exploits the $d$-diversity.
Correlation between the $d$ fading variables usually impacts the coding gain only, 
e.g., see~\cite{Veeravalli2001}. The $d$~independent or correlated fadings could
be observed on a wireless channel with multiple paths or with multiple antennas.
Frequency hopping and time variations also create a high-order diversity known
as frequency diversity and time diversity respectively. At this point, the reader
should conclude that $d$ is always a non-negative integer when dealing with Rayleigh fading.
In some urban wireless channels where fading follows a Nakagami-$d$ distribution, of density $p(\alpha)=\tfrac{2}{\Gamma(d)}d^d\alpha^{2d-1}e^{-d\alpha^2}$, 
$\alpha>0$, $\Gamma()$ is the gamma function, the diversity order $d$ can take any non-negative real value,  i.e., non-integer diversity is encountered on Nakagami fading channels (even less than 1).\\

Understanding diversity in the standard context, summarized above for wireless communications, provides a clear foundation for the analysis of the diversity order in TE-QKD channels, where the source of performance degradation is not signal fading but the exact photon position with respect to the bin borders.

%%--------------------------------------------------------------------------
\subsection{Coded Diversity in Standard Wireless Fading Channels \label{sec_div_wireless}}
Consider a binary $[n,k,t]_2$ code \cite{MacWilliams1977}. Assume a hard-decision (algebraic) decoder with a diversity $d_0$ at its input, i.e., the bit error rate at the channel output is
$P_e=\Theta(\tfrac{1}{\gamma^{d_0}})$. After algebraic decoding with a correction radius of $t$, the probability of error per word $P_{ew}$ at the decoder output satisfies
\begin{equation}
\label{equ_pe_bounds}
P_e^{t+1} (1-P_e)^{n-t-1} \le P_{ew} \le \sum_{i=t+1}^n {n \choose i} P_e^i (1-P_e)^{n-i}.
\end{equation}
Hence, $P_{ew}$ has diversity $d_0 \times (t+1)$. The probability of error per bit after decoding, $P_{eb}$, will also achieve a diversity order of $d_0 \times (t+1)$
because $P_{eb} \le P_{ew} \le nP_{eb}$. The reader should note that it is mathematically impossible to achieve infinite diversity after such an algebraic decoding of a finite-length binary error-correcting code on a finite-diversity wireless communication channel.\\

For a non-binary $[n,k,t]_q$ code defined over a finite field $\F_q$, bounds similar to~(\ref{equ_pe_bounds}) could be established and therefore the diversity order
after algebraic decoding is also $d_0 \times (t+1)$. The non-binary nature of the error-correcting code cannot achieve an infinite diversity, not on wireless communication channels.\\

Now, assume soft-decision decoding of a binary $[n,k,t]_2$ code whose
digits are transmitted via BPSK (bipolar symbols) over the fading channel
defined in (\ref{eq_fadingchannel}) with i.i.d. fadings $\alpha_t$ following a
Nakagami-$d_0$ distribution, for $t=1 \ldots n$. 
Then, the average probability of error per bit 
before decoding is $P_e=E[Q(\sqrt{\alpha_t^2\cdot 2\gamma})]=\Theta(\tfrac{1}{\gamma^{d_0}})$, and $\gamma=E_s/N_0$ is the signal-to-noise ratio per coded bit \cite[Chapter 9]{Simon2004fading}. Conditioned on the fading coefficients, the probability of error per word after decoding satisfies
\begin{equation}
Q\left(\sqrt{\sum_{i=1}^{d_{Hmin}}\alpha_i^2 \cdot 2\gamma}\right) \le P_{ew}(\alpha) \le
\sum_{w=d_{Hmin}}^n A_w Q\left(\sqrt{\sum_{i=1}^w\alpha_i^2 \cdot 2\gamma}\right).
\end{equation}
The union bound in the right inequality involves the Hamming weight enumerator
$A(x)=\sum_{w=d_{Hmin}}^n A_w x^w$ of the binary code 
with minimum Hamming distance $d_{Hmin} \ge 2t+1$ \cite{MacWilliams1977}. 
For simplicity, without any loss after expectation, we located the $w$ positions 
from $1$ to $w$ (a Hamming distance between two distinct codewords does not always
involve the same bit positions). 
After expectation over the Nakagami-$d_0$ distributions, 
we get $P_{ew}=\E[P_{ew}(\alpha)]=\Theta(\tfrac{1}{\gamma^{d_0\cdot d_{Hmin}}})$ \cite{Al-Hussaini1985}.
Consequently, the diversity order after soft-decision decoding of a binary code 
is $d_0 \times d_{Hmin} \ge d_0 \times (2t+1)$, almost the double than the case
of algebraic hard-decision decoding. Defining the code over $\F_q$, $q>2$,
and using a larger modulation alphabet, e.g. QAM or PSK, does not change the diversity
outcome after soft-decision decoding: the diversity order is $d_0 \times d_{Hmin} \ge d_0 \times (2t+1)$. It is improved by a factor of at least $2t+1$ with respect to the uncoded diversity $d_0$ but can never be infinite because the expectation of the Gaussian tail function does not yield an exponential decay in signal-to-noise ratio. To our knowledge, there exist no cases in wireless communication channels where a finite diversity before decoding is turned into an infinite diversity after decoding, for all types of decoders.

%%---------------------------------------------------------
%%---------------------------------------------------------
\subsection{Quick Review of Information Reconciliation in TE-QKD \label{sec_reconciliation}}
The aim of the reconciliation process is to guarantee that Alice and Bob are exactly sharing the same binary digits. Errors introduced by the TE-QKD channel are to be corrected via a code $\cC$ of parameters $[n,k,d_{Hmin}]_q$ defined over a finite field $\F_q$, assumed to be of characteristic $2$ in this paper. The $k \times n$ generator matrix $G$ and the $(n-k)\times n$ parity-check matrix $H$ of $\cC$ are publicly known by all parties. Let $L=\lceil\tfrac{n\log_2(q)}{m}\rceil$ be the number of photons per codeword, for a TE-QKD frame of $N=2^m$ bins. For the sake of simplicity, we describe the reconciliation in this section when $m$ divides $n\log_2(q)$ and so $L=\tfrac{n\log_2(q)}{m} \in \Z$, $L\ge 1$. The case $\tfrac{n\log_2(q)}{m} < 1$, with $m$ multiple of $n\log_2(q)$ to ensure label alignment, can be easily derived by the reader.

Let $\phi: \F_2^m \rightarrow \Z_N$ represent the bijective map that 
converts Gray-labeled $m$ bits from the code into an integer associated to a bin of the TE-QKD frame. An illustration is given in Table~\ref{tab:graycode} for $N=8$ bins per frame, $m=3$ bits per photon, where $\phi(110)=2$ and $\phi(111)=5$. As usual,
$\phi()$ or $\phi^{-1}()$ applied to a vector in $(\F_2^m)^L$ 
or in $\Z_N^L$ indicates that the map is applied to each of the $L$ coordinates.\\

\begin{table}[!h]
\begin{center}
\caption{Gray code with $m=3$ bits for a TE-QKD frame of $N=8$ bins.\label{tab:graycode}}
\includegraphics[width=0.8\columnwidth]{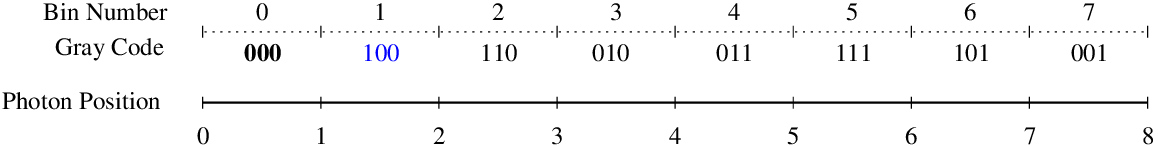}
\end{center}
%%\centering
%%\begin{tabular}{ccccccccc}
%%         Bin & 0 & 1 & 2 & 3 & 4 & 5 & 6 & 7  \\
%%         \hline 
%%         Gray Code & {\bf 000} & \textcolor{blue}{100} & 110 & 010 & 011 & 111 & 101 & 001  \\
%%    \end{tabular}
\end{table}

\begin{table}[!h]
\begin{center}
\caption{A centered version of the Gray code for $m=3$ bits per photon.\label{tab:graycode2}}
\includegraphics[width=0.8\columnwidth]{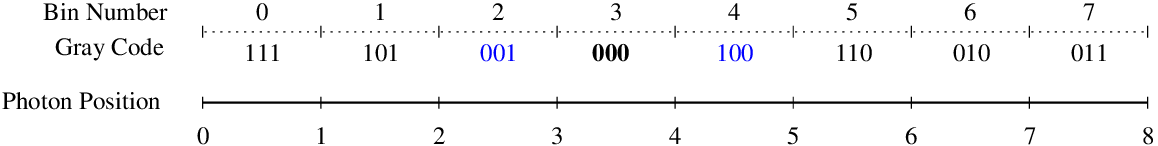}
\end{center}
%%\centering
%%\begin{tabular}{ccccccccc}
%%Bin & 0 & 1 & 2 & 3 & 4 & 5 & 6 & 7  \\
%%\hline 
%%Gray Code & 111 & 101 & \textcolor{blue}{001} & {\bf 000} & \textcolor{blue}{100} & 110 & 010 & 011  \\
%%\end{tabular}
\end{table}

Table~\ref{tab:graycode2} shows another version of a Gray code for $m=3$ where
the all-zero label is centered. Obviously, the Gray labeling of the TE-QKD bins is not unique. Fortunately, at high signal-to-noise ratio, the frame borders have
no effect on the diversity analysis. Our results do not depend on the Gray labeling version chosen by a user. Also, a property valid for all linear codes tells us that all $q^{n-k}$ cosets $e_A+\cC$, used in the reconciliation defined below by (\ref{equ_complete_dec}) and (\ref{equ_soft_recon}), admit the same Hamming distance distribution.\\ 

Alice's detector measures $L$ photon positions $x=(x_1, x_2, \ldots, x_L)$, $0 \le x_i <N$.
Alice extracts $L\cdot m= n \cdot \log_2(q)$ binary digits from $x$ by measuring the bin numbers, $\hat{x}=(\hat{x}_1, \hat{x}_2, \ldots, \hat{x}_L)$ where $\hat{x}_i=\lfloor x_i \rfloor \in \Z_N$ is a threshold detection by the floor function. Let $c_A \in \F_q^n$ be the word associated to the vector $\hat{x}$, $c_A=\phi^{-1}(\hat{x}) \in e_A+\cC$. Alice computes the syndrome $s_A=c_A\cdot H^t \in \F_q^{n-k}$ and sends $s_A$ to Bob via a public channel.\\

Bob's detector also measures $L$ time-entangled photon positions $y=(y_1, y_2, \ldots, y_L)$, $0 \le y_i <N$. Threshold detection yields $\hat{y}=(\hat{y}_1, \hat{y}_2, \ldots, \hat{y}_L)$, where $\hat{y}_i=\lfloor y_i \rfloor$. Let $c_B \in \F_q^n$ be Bob's word associated to the vector $\hat{y}$, $c_B=\phi^{-1}(\hat{y})$. 
In the ideal situation, $c_B=c_A$, or equivalently $\hat{y}=\hat{x}$, and no reconciliation is needed.
However, as described in Section~\ref{sec_model}, both detectors suffer from a jitter represented by a Gaussian noise. Given the syndrome $s_A$, the aim of the information reconciliation is to correct any errors in $c_B$ and make it identical to $c_A$ (when reconciliation succeeds). We distinguish two cases as follows.\\

{\bf Hard-decision (algebraic) decoding.} Maximum {\em a posteriori} (MAP) or equivalently maximum likelihood (ML) decoding in the Hamming space $\F_q^n$ is considered. The TE-QKD information-theoretical channel input is $\hat{X} \in \Z_N^L$ (equivalent to $c_A \in \F_q^n$), and the channel output is $\hat{Y} \in \Z_N^L$ (equivalent to $c_B \in \F_q^n$). Given the syndrome $s_A$ and given $\hat{Y}=\hat{y}=\phi(c_B)$, Bob's task is to decode $c_B$ in the coset $e_A+\cC$, where
$s_A=e_AH^t$ and $e_A\in \F_q^n$ is a representative of that coset. Here, Bob applies an algebraic (hard-decision input) decoder that minimizes the Hamming distance to decode $c_B$ into $\hat{c}_A$ via coset decoding
\begin{equation}
\label{equ_complete_dec}
\hat{c}_A = \arg \min_{\upsilon \in e_A+\cC} d_H(\upsilon, c_B).
\end{equation}
The algebraic decoder described in (\ref{equ_complete_dec}) is a complete decoder, capable of decoding wherever is the position of $c_B$ in the Hamming space $\F_q^n$. For small codes, this complete decoding could be achieved via the Slepian table \cite{Slepian1956}\cite[\S 3.4]{Blahut2003}. The latter is a decomposition of $\F_q^n$ into cosets of the code $\cC$, i.e.,
$\F_q^n=\cC+\F_q^n/\cC$, where $\F_q^n/\cC$ is the quotient group of the whole space by the code. As a first example, consider the $[3,1,3]_2$ code $\cC=\{000, 111\}$ listed on the first row of the Slepian table in Figure~\ref{fig_slepian_3_1_3}. We have $c_A=000$ and $c_B=001$.
In this case, (\ref{equ_complete_dec}) correctly decodes $c_B$ into $c_A$ since $d_H(001,000)=1$ while $d_H(001,111)=2$. As a second example,
we consider the $[6,3,3]_2$ code, a shortened version of the famous length-$7$ binary Hamming code. As illustrated in Figure~\ref{fig_slepian_6_3_3} where the code fills the first row of the Slepian table, we have $c_A=110~111$ and $c_B=110~110$.
The decoder of (\ref{equ_complete_dec}) also yields the correct answer by flipping the last bit in $c_B$. Both examples assume a TE-QKD frame of $N=8$ bins, or equivalently $m=3$ bits per photon. For large codes, complete decoding via the Slepian table becomes intractable, so fast algebraic decoders are used to solve $e$ from the syndrome equation $s=e\cdot H^t$. Hence, in general, we have $c_A=c+e_A$ with syndrome $s_A=e_A \cdot H^t$ and $c_B=c+e_B$ with syndrome $s_B=e_B \cdot H^t$,
where $c \in \cC$. Bob computes $s_A-s_B=(e_A-e_B)H^t$ and finds
$e_A-e_B$ using any algebraic decoder (complete or $t$-bounded, $2t+1\le d_{Hmin}$, or beyond $t$). Then, Bob does the final reconciliation step by adding $c_B+(e_A-e_B)=c+e_B+(e_A-e_B)=c+e_A=c_A$. Since all finite fields considered in this paper are of characteristic $2$, the "-" sign
can be replaced by the "+" sign.

\begin{figure}[!h]
\begin{center}
\includegraphics[width=0.5\columnwidth]{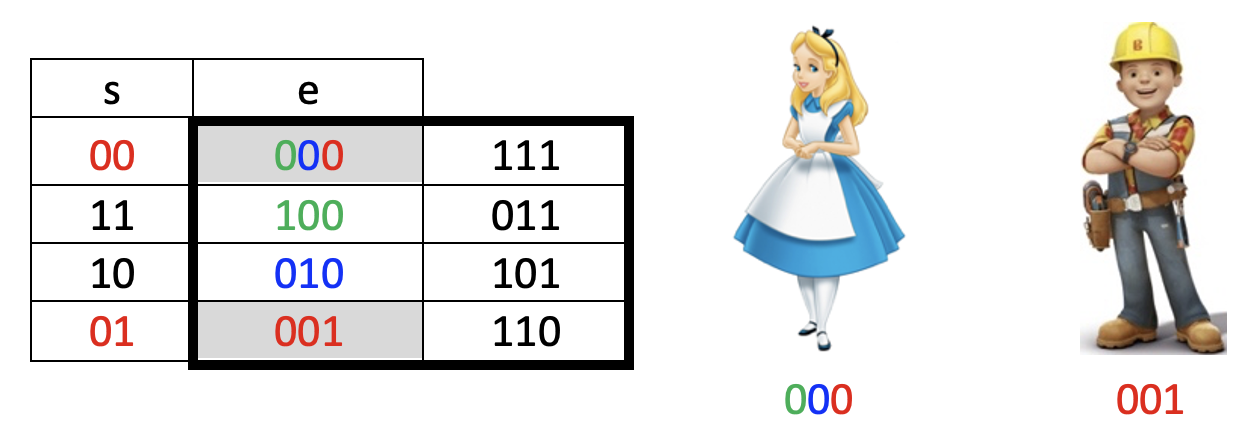} 
\caption{Slepian table of the binary repetition $[3,1,3]_2$ code (within the bold frame) and the syndrome on the left.
\label{fig_slepian_3_1_3}}
\end{center}
\end{figure}

\begin{figure}[!h]
\begin{center}
\includegraphics[width=0.85\columnwidth]{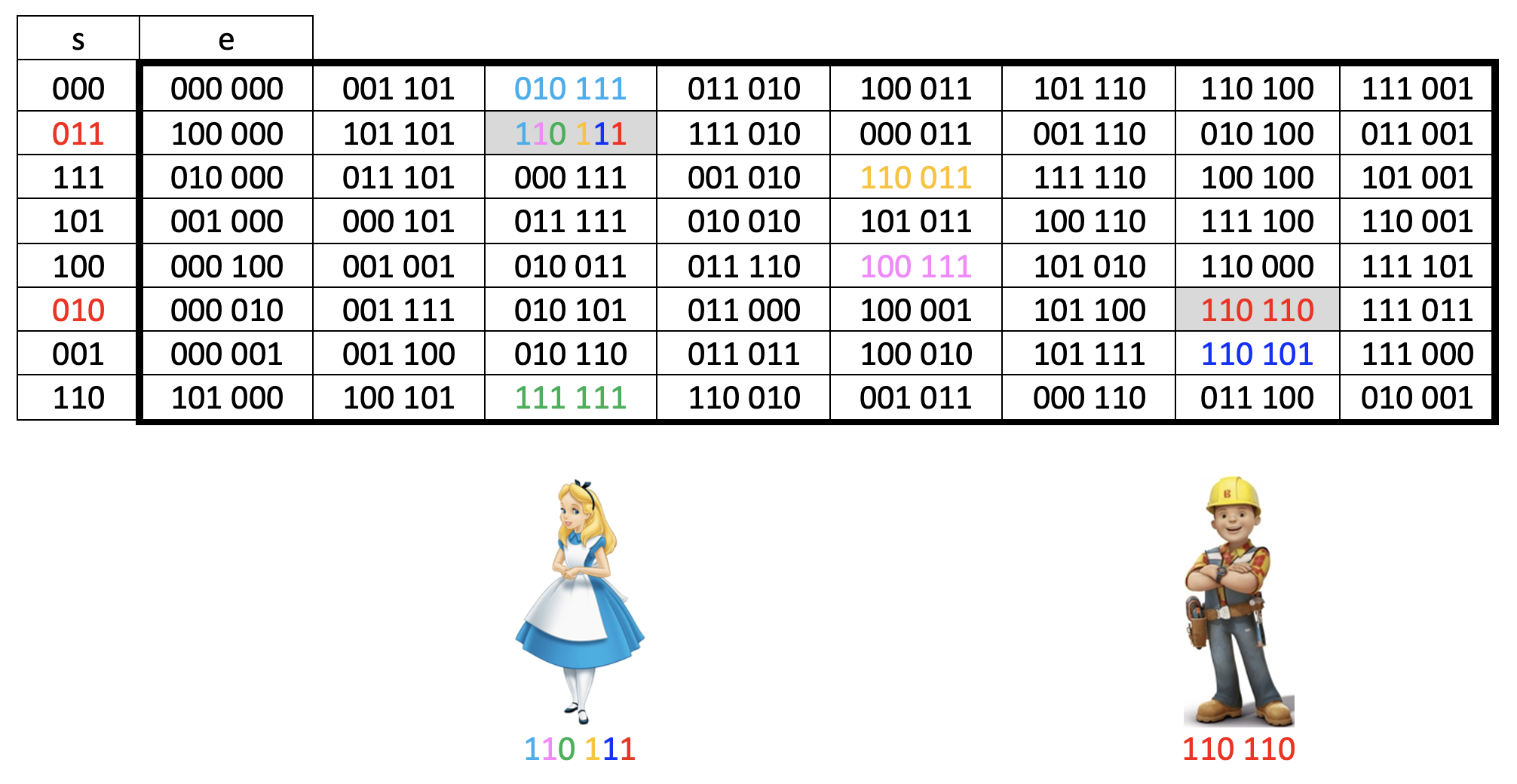} 
\caption{Slepian table of the shortened Hamming code $[6,3,3]_2$ (within the bold frame) and the syndrome on the left.
\label{fig_slepian_6_3_3}}
\end{center}
\end{figure}

In the first example of Figure~\ref{fig_slepian_3_1_3}, $s_A+s_B=00+01=01$ which corresponds to $e=001$, so we flip the third bit to get $c_A$ from $c_B$. In the second example of Figure~\ref{fig_slepian_6_3_3}, $s_A+s_B=011+010=001$ which corresponds to $e=000 001$, so we flip the sixth bit to get $c_A$ from $c_B$.\\

{\bf Soft-decision decoding.} The TE-QKD information-theoretical channel input is $\hat{X} \in \Z_N^L$ (equivalent to $c_A \in \F_q^n$), but now the channel output is soft given by $Y = (Y_1, \cdots, Y_L) \in [0, N[^L$ instead of the hard bin indices $\hat{Y}$. Given that the decoder input is now soft, the maximum {\em a posteriori} decoder that minimizes
the probability of error per word $P_{ew}$ is not based anymore on the Hamming distance as in (\ref{equ_complete_dec}).\\

For $\upsilon=(\upsilon_1, \upsilon_2, \ldots, \upsilon_L) \in e_A+\cC \in \F_q^n \sim \F_2^{n\log_2(q)}=\F_2^{mL}$ in the coset of Alice, $\upsilon_{\ell} \in \F_{2}^m$ and $mL=n\log_2(q)$, assuming that the channel introduces no memory between the $L$ photons, define the {\em a posteriori} probability as
\begin{equation}
\label{equ_APP_word}
APP(\upsilon)=\pr(\hat{c}_A=\upsilon|Y=y)
=\prod_{\ell=1}^L \pr(\phi(\upsilon_{\ell})|y_{\ell})
=\prod_{\ell=1}^L APP(\phi(\upsilon_{\ell})), ~~~\upsilon_{\ell}\in \F_2^m, ~~~\upsilon \in \F_2^{mL}.
\end{equation}
For $\hat{y}_{\ell}=j$ and $\phi(\upsilon_{\ell})=i \in \Z_N$, 
the APP could be solved
via Bayes rule from the likelihood expression 
in~(\ref{equ_Y_cond_hX_tY}) listed in Appendix~\ref{app_fundamental}. 
In this paper, we will use the simplified expressions (48) and (49) from~\cite{Boutros2023}: 
\begin{align}
\text{If}~i=j,~~APP(i) &\propto \left[1-\tfrac{1}{2}e^{-\tfrac{(y_{\ell}-i)^2}{2\sigma^2}}
-\tfrac{1}{2}e^{-\tfrac{(y_{\ell}-i-1)^2}{2\sigma^2}}\right], \label{equ_app1_approx}\\
\text{If}~i\ne j,~~APP(i) &\propto sign(j-i) \cdot \left[\tfrac{1}{2}e^{-\tfrac{(y_{\ell}-i-1)^2}{2\sigma^2}}
-\tfrac{1}{2}e^{-\tfrac{(y_{\ell}-i)^2}{2\sigma^2}}\right]. \label{equ_app2_approx}
\end{align}
For a better clarity, if needed, the notations $APP(i)$ could be replaced by $APP(\upsilon_{\ell}|y_{\ell})$ or $APP(i,j|y_{\ell})$.
Following the three expressions (\ref{equ_APP_word})--(\ref{equ_app2_approx}), for soft information reconciliation, the MAP decoder does
\begin{equation}
\label{equ_soft_recon}
\hat{c}_A = \arg \max_{\upsilon \in e_A+\cC} APP(\upsilon).
\end{equation}

If the code $\cC$ has a tractable trellis representation, variations of the MAP decoder could include a word-wise Viterbi algorithm applied to minimize the metric $-\sum_{\ell=1}^L \log(APP(\phi(\upsilon_{\ell})))$.
At high signal-to-noise ratio, it is well known that the BCJR algorithm (the forward-backward algorithm) behaves exactly like the Viterbi algorithm, with respect to the best route in the trellis. Thus, it is possible to perform soft decoding
via bit-wise BCJR without any diversity loss. In all cases, as indicated
in Section~\ref{sec_div_wireless}, the probability of error per bit $P_{eb}$ and the probability of error per word $P_{ew}$ exhibits an identical diversity, whether diversity is finite or infinite.
Finally, when belief propagation is used
to decode an LDPC code, which is sub-optimal in the block-wise MAP sense, 
the syndrome $s_A$ is used to flip the parity of the
check equations in the Tanner representation of the code.

%%\nrd{Oops. Change all previous dH into dHmin. oh la la}.\\

%%---------------------------------------------------------
%%---------------------------------------------------------
\subsection{Coded Diversity in TE-QKD Channels \label{sec_coded_div_QKD}}
While the TE-QKD channel is free from multipath fading, its analogous behavior necessitates diversity analysis. The ``fading'' in TE-QKD is not caused by signal strength fluctuation but by the random position of the jitter-free photon, $U$, relative to the discrete bin boundaries. A ``deep fade'' in TE-QKD occurs when the true photon position $U$ is extremely close to a bin boundary, where the system is vulnerable. Even a very small amount of detector jitter can be enough to push the measured position across the boundary, causing a symbol error. Because the position $U$ is a uniform random variable, there is always a non-zero probability of it being near a boundary. As shown in \cite[Propositions 1\&2]{Boutros2023}, for \emph{uncoded} TE-QKD (hard decisions), the symbol error probability is 
\begin{align}
    P_e(\gamma) = \frac{2}{\sqrt{\pi} } (1- \frac{1}{N}) \gamma^{-\frac{1}{2}} + \cO\left( \exp(-\frac{\gamma}{4})\right),
    \label{eq:uncoded-Pe}
\end{align}
achieving a diversity order of  $d_0=\frac{1}{2}$. Following the wireless analogy, we can apply a standard error-correcting code to increase the diversity order, which has been confirmed by~\cite{Boutros2023}
where Reed-Solomon codes ($n=63$ over $GF(64)$), binary Bose-Chaudhuri-Hocquenghem (BCH) codes ($n=378$ bits),
and binary low-density parity-check (LDPC) codes ($n=384$ and $9999$ bits) were employed in a similar manner as on wireless communication channels. An illustration is given here with a binary shortened $[30,20,t=2]$ BCH code. Figure~\ref{fig_30_20}
shows the performance of reconciliation on TE-QKD for both algebraic and soft-decision decoding. As expected, diversity becomes $\tfrac{1}{2} \times (t+1) =1.5$ after algebraic decoding and $\tfrac{1}{2} \times (2t+1) = 2.5$ after soft-decision decoding of the $[30,20,t=2]$ code. 

As a final illustration in this section, we consider a much shorter code:
the length-3 binary repetition $[3,1,3]$ code considered in Section~\ref{sec_reconciliation}. Its error probability performance is shown versus signal-to-noise ratio in Figure~\ref{fig_3_1_3}. 
Shockingly, the plots are out of control! Diversity orders of $1$ and $1.5$ were expected for algebraic and soft-decision decoding respectively.
The observed effective diversity is $d=10$ between $10^{-6}$ and $10^{-7}$.
The probability of error follows the shape of a Gaussian tail function,
suggesting that diversity is infinite. In the sequel, we prove how infinite diversity is created after decoding from a finite diversity channel,
and we establish necessary and sufficient conditions for infinite diversity taking into account the TE-QKD frame structure and the code structure. 

\begin{figure}[!h]
\begin{center}
\vspace{-8mm}
\includegraphics[angle=270, width=0.7\linewidth]{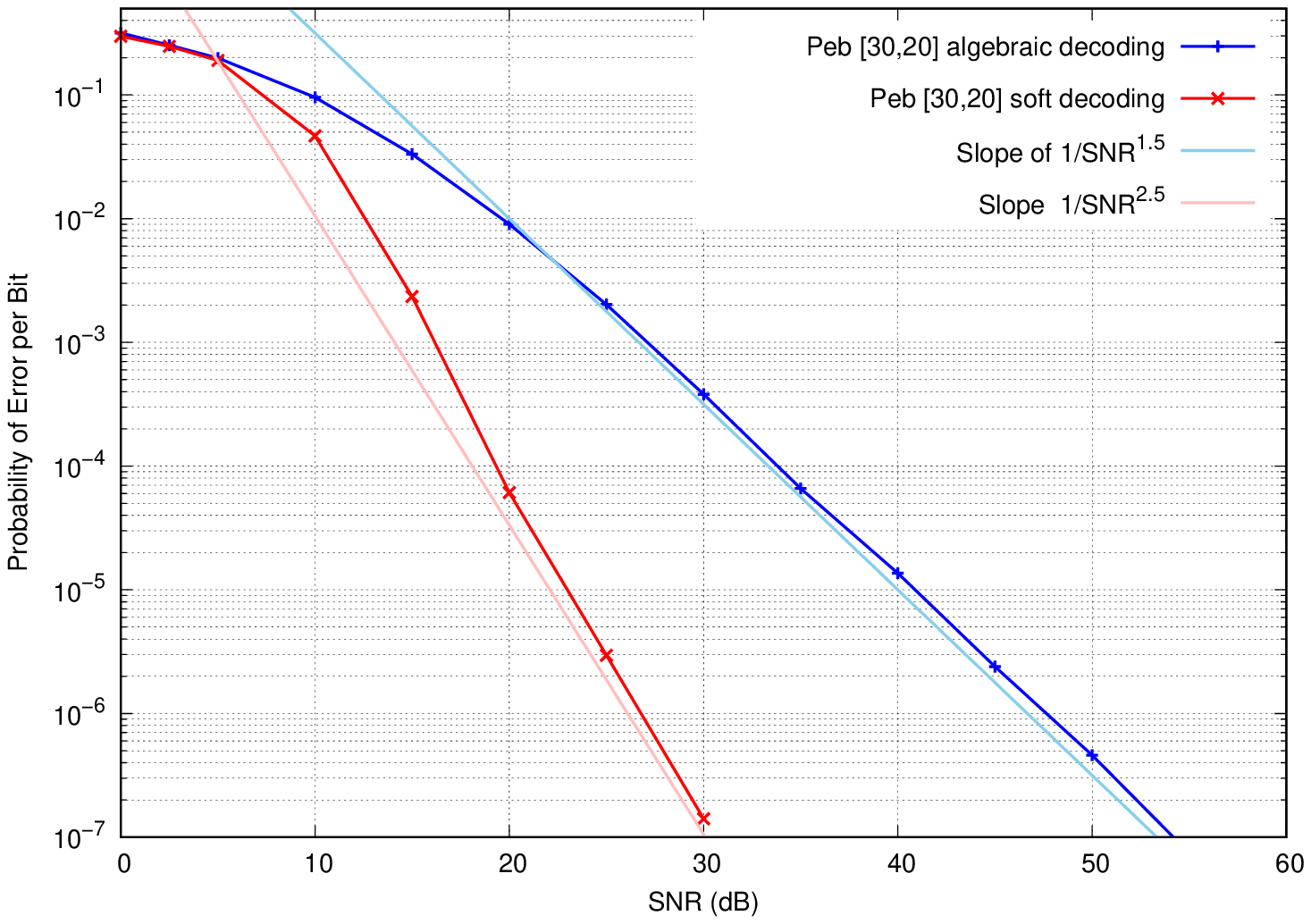} 
\vspace{9mm}
\caption{Hard-decision (algebraic) versus soft-decision of a binary shortened $[30,20,t=2]$ BCH code, $N=8$ bins per frame.
\label{fig_30_20}}
\end{center}
\end{figure}

\begin{figure}[!h]
\begin{center}
\vspace{-8mm}
\includegraphics[angle=270, width=0.7\linewidth]{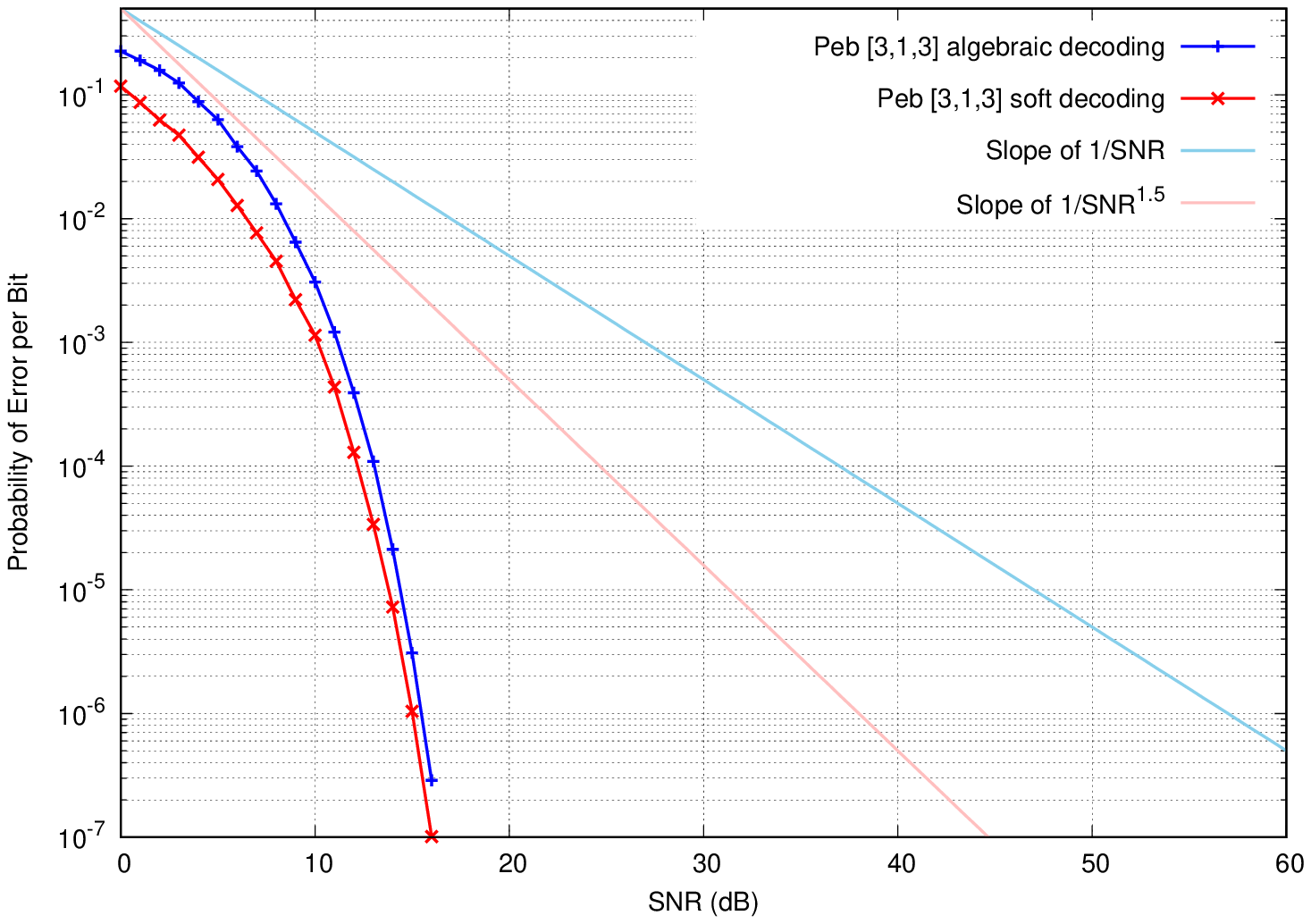} 
\vspace{9mm}
\caption{Hard-decision (algebraic) versus soft-decision of a 3-fold repetition binary $[3,1,3]$ code, $N=8$ bins per frame.
\label{fig_3_1_3}}
\end{center}
\end{figure}

%%--------------------------------------------------------------
%%--------------------------------------------------------------
\clearpage
\section{Probability of Single and Multiple Jumps \label{sec_probability}}

It is pivotal to understand events within a TE-QKD frames 
in order to establish the conditions leading to infinite diversity after decoding. Given the exact photon position $U$, Proposition~1 in \cite{Boutros2023} neglects double bin jumps of $X$ and $Y$ on Alice's
and Bob's sides respectively, see Figure~3 in \cite{Boutros2023}. 
Similarly, Proposition~2 in \cite{Boutros2023} at high SNR states 
that the dominant term in $p_{ij}=\pr(\hY=j|\hX=i)$ is $\tfrac{1}{\sqrt{\pi \gamma}}$ for $|i-j|=1$, while $p_{ij}=\cO(e^{—\tfrac{\gamma}{4}})$ could be forced to zero in numerical calculations when $|i-j|\ge 2$. As we will see later, after establishing the infinite diversity conditions, neglecting double jumps and longer bin jumps $|i-j|\ge 2$
has no influence on estimating the performance of long error-correcting codes, since long codes do not usually satisfy infinite diversity conditions.

The fundamental equations that govern the TE-QKD channel are listed 
in Appendix~\ref{app_fundamental}. Exact equations in \cite{Boutros2023}
are mandatory to compute exact mutual information and the exact channel capacity. In this paper, we deal with diversity at high signal-to-noise ratio. It is sufficient to reconsider fundamental equations and let $\sigma^2$ vanish to extract simpler models that are strong enough to determine whether diversity is finite or infinite, models that are also accurate enough to indicate the rate of decay of the probability of main events. At high signal-to-noise ratio, firstly we consider all $N$ bins to be equiprobable thanks to (\ref{equ_hpi}) \& (\ref{equ_hpi_simple}) in Appendix~\ref{app_fundamental}-a. 
Secondly, if $\hX=i$, we consider that $X$ is uniform in the interval $[i, i+1[$ and $Y=U+Z_2$, $Z_2 \sim \mathcal{N}(0,\sigma^2)$. Indeed, 
the integral in the numerator of (\ref{equ_Y_cond_hX_tY}) in Appendix~\ref{app_fundamental}-b becomes a convolution of the $\mathcal{N}(0,\sigma^2)$ Gaussian density with a uniform density 
over $[i, i+1[$ when $\sigma^2 \ll 1$. By symmetry, we also get the high-SNR assumption $X=U+Z_1$, $Z_1 \sim \mathcal{N}(0,\sigma^2)$, $Z_1$ and $Z_2$ independent. Thirdly, instead of using (\ref{equ_pij}) from Appendix~\ref{app_fundamental}-c, we will re-establish the high-SNR expression of $p_{ij}$ via two simple models described below, emphasizing the critical events inside the TE-QKD frame. Finally, conditioned on both Alice's and Bob's frames being valid, we will assume that $U$ is uniform in $[0,N[$ as a result of (\ref{equ_pdfU_valid}) \& (\ref{equ_pdfU_highsnr}) in Appendix~\ref{app_fundamental}-d. 

Consequently, from the fundamental equations, we establish two equivalent models valid at high signal-to-noise ratio $\gamma=\tfrac{1}{\sigma^2} \gg 1$, or equivalently in the small noise regime $\sigma^2 \ll 1$:
\begin{itemize}
\item $U-X-Y$ model: $X=U+Z_1$, $Y=U+Z_2$, $U \sim \mathrm{Unif}([0,N[)$, 
$Z_1$ and $Z_2$ are i.i.d. $\mathcal{N}(0, \sigma^2)$. 
\item $X-Y$ model: $Y=X+Z$, where $Z$ is $\mathcal{N}(0, 2\sigma^2)$ (from $Z=Z_2-Z_1$), and $X \sim \mathrm{Unif}([0,N[)$. 
\item The frame border effects are neglected in both models. 
\end{itemize}

In the sequel, if $h(\gamma)=f(\gamma)\pm g(\gamma)$, where 
$g(\gamma)=\cO(e^{-\alpha \gamma})$, for $\alpha>0$ and $\gamma \gg 1$, we will write
$h(\gamma)=f(\gamma)+\cO(e^{-\alpha \gamma})$ since the absolute
value of the vanishing term is the most relevant, not its sign.
%%-----------------------------------------
%%-----------------------------------------
\begin{figure}[!t]
\begin{center}
\includegraphics[width=8cm]{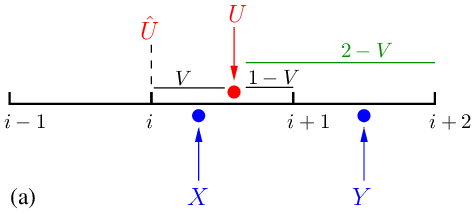}~~~~
\includegraphics[width=8cm]{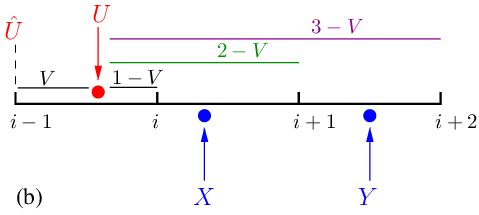}
\caption{Frame illustration for (a) $\hU=i$ and (b) $\hU=i-1$, while $\hX=i$ and $\hY=i+1$.\label{fig_lemma1}}
\end{center}
\end{figure}
%%-----------------------------------------
%%-------------- Lemma 1 ------------------
%%-----------------------------------------
\begin{lemma}
\label{lem_U-X-Y-1-jump}
\normalfont
    Consider the $U-X-Y$ model where $i \le X < i+1$ and $i+1 \le Y <i+2$. Thus, $\hat{X} =i$ and $\hat{Y} = i+1$. Then, we have 
    \begin{align}
    \label{equ_1jump_final}
        \pr( \hat{Y} = i+1 ~|~ \hat{X} =i) = \frac{1}{\sqrt{\pi \gamma}} + \cO( e^{-\frac{\gamma}{4}}) =\Theta(\gamma^{-1/2}).
    \end{align}
\end{lemma}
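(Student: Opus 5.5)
We would compute the joint probability $\pr(\hX=i,\hY=i+1)$ directly in the $U-X-Y$ model and divide by $\pr(\hX=i)=1/N$, which holds because frame borders are neglected and $U$ is uniform on $[0,N[$. Conditioning on $U=u$, the variables $X$ and $Y$ are independent Gaussians with mean $u$ and variance $\sigma^2$. Writing $\Phi$ for the standard normal cdf, this gives
\begin{equation*}
\pr(\hX=i,\hY=i+1)=\frac{1}{N}\int_{-\infty}^{\infty}\Big[\Phi\big(\tfrac{i+1-u}{\sigma}\big)-\Phi\big(\tfrac{i-u}{\sigma}\big)\Big]\Big[\Phi\big(\tfrac{i+2-u}{\sigma}\big)-\Phi\big(\tfrac{i+1-u}{\sigma}\big)\Big]du .
\end{equation*}
Since we neglect borders, the integral runs over $u\in\R$. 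The key observation is that the integrand is non-negligible only when $u$ is close to the common boundary $i+1$. In that regime the two events are ``$X$ stays left of $i+1$'' and ``$Y$ crosses to the right of $i+1$''. The situations of Figure~\ref{fig_lemma1} correspond to $u$ lying just left or just right of that boundary.

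The first step is to replace each bracket by its single-boundary approximation, $\Phi(\tfrac{i+1-u}{\sigma})$ and $1-\Phi(\tfrac{i+1-u}{\sigma})$ respectively. We must then bound the error. The discarded terms are $\Phi(\tfrac{i-u}{\sigma})$ and $1-\Phi(\tfrac{i+2-u}{\sigma})$. Each is multiplied by a factor that is itself small unless $u$ is far from $i+1$. Consider the product of a discarded term with the retained tail factor: whatever the value of $u$, either $u$ is at distance at least $1/2$ from $i+1$, or $u$ is at distance at least $1/2$ from $i$ (respectively $i+2$). So at least one factor is at most $Q(1/(2\sigma))$ and the other is integrable with total mass $\cO(1)$. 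We would refine this by observing that the product of a $Q$-tail in $u$ relative to $i$ with a $Q$-tail relative to $i+1$ integrates to something of order $Q(1/(\sqrt2\sigma))=\cO(e^{-\gamma/4})$. This holds because the two tails are centered one unit apart, and a product of two Gaussian tails one unit apart behaves like a Gaussian of variance $\sigma^2/2$ evaluated at the midpoint. This convolution-type bound is where the exponent $\gamma/4$ comes from. It is the step we expect to require the most care: the crude ``one factor is small'' argument only gives $e^{-\gamma/8}$, and the sharper rate needs the product of the two tails to be bounded jointly.

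The second step evaluates the main term. Substituting $t=(u-i-1)/\sigma$ gives
\begin{equation*}
\frac{\sigma}{N}\int_{-\infty}^{\infty}\Phi(-t)\,\big(1-\Phi(-t)\big)\,dt=\frac{\sigma}{N}\int_{-\infty}^{\infty}\Phi(t)\big(1-\Phi(t)\big)dt .
\end{equation*}
To evaluate this integral, let $W_1,W_2$ be i.i.d. $\mathcal N(0,1)$. Then $\int\Phi(1-\Phi)\,dt=\tfrac12\E|W_1-W_2|$, by the identity $\int[\pr(\min\le t)-\pr(\max\le t)]\,dt=\E[\max-\min]$. Since $W_1-W_2\sim\mathcal N(0,2)$, we have $\E|W_1-W_2|=\sqrt2\cdot\sqrt{2/\pi}$, so the integral equals $1/\sqrt\pi$. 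Dividing by $1/N$ and using $\sigma=\gamma^{-1/2}$ yields $\tfrac{1}{\sqrt{\pi\gamma}}$.

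Combining this with the exponentially small remainder from the first step gives (\ref{equ_1jump_final}). The statement $\Theta(\gamma^{-1/2})$ follows immediately, since the main term dominates the $\cO(e^{-\gamma/4})$ correction.
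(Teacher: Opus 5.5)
Your argument is correct. It shares the paper's skeleton: you condition on $U$, use the conditional independence of $\hX$ and $\hY$ given $U$, and cancel $p(u)=\pr(\hX=i)=1/N$. The decomposition and the evaluation tools, however, differ.

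The paper cuts the $u$-integral by the bin $j$ of $U$ and keeps only $j\in\{i-1,\dots,i+2\}$. It writes both conditional probabilities as differences of $Q$-functions on $[0,1]$ and expands the product into nine integrals. It then evaluates these with $I_1,I_2$ imported from earlier work (main term $2I_1-2I_2$) and the Chernoff-type bounds of Appendix~\ref{app:bigO} (dominant remainder $I_3$).

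You instead integrate over all of $\R$ and cut by borders rather than bins. This gives a single-border term at $i+1$, evaluated in closed form through the mean-absolute-difference identity, plus cross terms pairing border $i+1$ with border $i$ or $i+2$. Your route is shorter, needs no case analysis, and gives the main term $\sigma/\sqrt{\pi}$ exactly.

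It also makes the origin of the exponent transparent. By Fubini, $\int_\R Q(\tfrac{u-i}{\sigma})\,Q(\tfrac{i+1-u}{\sigma})\,du=\E[(S-1)^+]\le\E[S\,\mathbf{1}\{S>1\}]=\tfrac{\sigma}{\sqrt{\pi}}e^{-\gamma/4}$, with $S=Z_2-Z_1\sim\mathcal N(0,2\sigma^2)$. This is the rigorous form of the joint bound you flagged as delicate. Carried through all terms, your computation gives $\E[S^+]-2\E[(S-1)^+]+\E[(S-2)^+]$. That is exactly the integral of Lemma~\ref{lem_X-Y-1-jump}, so your approach shows the two models agree exactly (borders neglected), not only asymptotically.

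In return, the paper's bin-by-bin bookkeeping attributes each contribution to a physical event: where $U$ sits and which detector crosses which border. The paper relies on this in its discussion, and its bounding technique is reused for Lemma~\ref{lem_U-X-Y-2-jump}.

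Two small points to tidy:
\begin{enumerate}
\item Write $A=\Phi(\tfrac{i+1-u}{\sigma})$ and $B=1-A$ for the retained factors, and $a=\Phi(\tfrac{i-u}{\sigma})$, $b=1-\Phi(\tfrac{i+2-u}{\sigma})$ for the discarded ones. The exact error is $(A-a)(B-b)-AB=-a(B-b)-Ab$. Since $0\le B-b\le B$, its absolute value is at most $aB+Ab$, so the $+ab$ term you did not mention is absorbed.
\item The crude ``one factor is small'' argument would not close as phrased. The orientation of each tail matters, and for example $\Phi(\tfrac{i-u}{\sigma})\to 1$ as $u\to-\infty$, so that factor is not integrable over $\R$. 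You do not rely on this argument, so it does not affect the proof.
\end{enumerate}
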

\begin{IEEEproof}
We start with the conditional probability of the measured bin position by Alice and Bob given the exact photon's position $U$. Assuming $\hU=i$ as illustrated in Figure~\ref{fig_lemma1}-(a),
we can write
\begin{align*}
        \pr( \hat{X} = i ~|~ U = u) &= 1 - \pr( X \notin [i, i+1[ ~|~u )
        \\& = 1 - \pr(X<i ~|~ U = u) - \pr( X \geq i+1 ~|~ U = u)
        \\& = 1 - Q\left(\frac{v}{\sigma}\right) - Q\left(\frac{1-v}{\sigma} \right)
\end{align*}
    where $v$ is the realization of $V = ( U$ mod $[0,1])$ given $\hat{U} = i$, i.e., the distance between $U$ (the exact photon position) and the left border of the bin $\hat{U}$. 
Similarly, we have
\begin{align}
    \pr(\hat{Y} = i+1 ~|~ U = u) = Q\left(\frac{1-v}{\sigma}\right) - Q\left(\frac{2-v}{\sigma} \right),
\end{align}
for $\hU=i$. The case $\hU=i+1$ is identical 
to the case $\hU=i$ by symmetry, you just need to switch the expressions of $\pr(\hat{X} = i ~|~ U = u)$ and $\pr(\hat{Y} = i+1 ~|~ U = u)$.

\noindent
Now assume that $\hU=i-1$ as illustrated in Figure~\ref{fig_lemma1}-(b), then we get
\begin{equation}
\label{pr_X_i_U_i-1}
\pr(\hat{X} = i ~|~ U = u) = Q\left(\frac{1-v}{\sigma}\right) - Q\left(\frac{2-v}{\sigma} \right)
\end{equation}
and
\begin{equation}
\label{pr_Y_i_U_i-1}
\pr(\hat{Y} = i+1 ~|~ U = u) = Q\left(\frac{2-v}{\sigma}\right) - Q\left(\frac{3-v}{\sigma} \right).
\end{equation}
The case $\hU=i+2$ is identical 
to the case $\hU=i-1$ by symmetry,
we just switch (\ref{pr_X_i_U_i-1}) and (\ref{pr_Y_i_U_i-1}).

\noindent
The probability of $\hY$ conditioned on $\hX$ becomes
\begin{align}
    \pr( \hat{Y} = i+1 | \hat{X} =i )
    & = \int_{u= 0}^{N} \pr(\hat{Y} = i+1, U = u  | \hat{X} =i ) du
     \notag
    \\&
    = \int_{u=0}^{N} \frac{\pr(\hat{Y} = i+1, \hat{X} =i, U = u  ) }{\pr(\hat{X} = i )}du
    \notag
    \\& 
    = 
    \int_{u=0}^{N}  \frac{\pr(\hat{Y} = i+1, \hat{X} =i| U = u  ) }{\pr(\hat{X} = i )} p(u) du
    \notag
    \\& 
    = 
    \int_{u=0}^{N}  \pr(\hat{Y} = i+1, | U = u  )  \pr(\hat{X} = i| U=u) du
\label{eq:1/N} 
\\&
= \sum_{j=0}^{N-1} \int_{u=j}^{j+1} \pr(\hat{Y} = i+1, | U = u  )  \pr(\hat{X} = i| U=u) du
\label{eq_sum_overj} 
\end{align}
where~\eqref{eq:1/N} holds since $p(u) = \pr(\hat{X} =i ) =\frac{1}{N}$ according to our high-SNR assumptions. Furthermore, the two events $\hX=i$ and $\hY=i+1$ are independent when conditioning on $U$.   
We neglect bin jumps of three bins and more in (\ref{eq_sum_overj}), events with probability bounded from above by $Q(2/\sigma)=\cO(e^{-2\gamma})$, to obtain
\begin{align}
    \pr( \hat{Y} = i+1 | \hat{X} =i ) 
    &= 
    \sum_{j=i-1}^{i+2} \int_{u=j}^{j+1} \pr(\hat{Y} = i+1 |U =u ) \pr(\hat{X} = i | U =u) du ~+~ \cO(e^{-2\gamma}) \notag 
    \\& 
    = 2 \int_{0}^1 \left[ Q\left(\frac{1-v}{\sigma}\right) - Q\left(\frac{2-v}{\sigma} \right) \right] 
    \cdot 
    \left[ Q\left(\frac{2-v}{\sigma}\right) - Q\left(\frac{3-v}{\sigma} \right)   \right] dv 
    \label{equ_1jump_i-1}
    \\&
    \quad + 
    2 \int_{0}^1 \left[ 1 - Q\left(\frac{v}{\sigma}\right) - Q\left(\frac{1-v}{\sigma} \right) \right] 
    \cdot 
    \left[ Q\left(\frac{1-v}{\sigma}\right) - Q\left(\frac{2-v}{\sigma} \right)   \right] dv 
    \label{equ_1jump_i}
    \\& \quad + \cO(e^{-2\gamma}) \nonumber
\end{align}
where~\eqref{equ_1jump_i-1} is for $j=i-1$ and $j = i+ 2$ and~\eqref{equ_1jump_i} is for $j = i$ and $j = i+1$. Then,
\begin{align*}
    \pr( \hat{Y} = i+1 | \hat{X} =i ) 
    &= 
    2 \int_{0}^1 Q\left(\frac{1-v}{\sigma}\right) dv
    -
    2 \int_{0}^1 Q\left(\frac{2-v}{\sigma}\right) dv
    \\&
    \quad
    - 2 \int_{0}^1 Q\left(\frac{v}{\sigma}\right)
    Q\left(\frac{1-v}{\sigma}\right) dv
    +
    2 \int_{0}^1 Q\left(\frac{v}{\sigma}\right)
    Q\left(\frac{2-v}{\sigma}\right) dv
    \\&
    \quad
    - 2 \int_{0}^1 Q\left(\frac{1-v}{\sigma}\right)
    Q\left(\frac{1-v}{\sigma}\right) dv
    +
    4 \int_{0}^1 Q\left(\frac{1-v}{\sigma}\right)
    Q\left(\frac{2-v}{\sigma}\right) dv
    \\&
    \quad
    - 2 \int_{0}^1 Q\left(\frac{1-v}{\sigma}\right)
    Q\left(\frac{3-v}{\sigma}\right) dv
    -
    2 \int_{0}^1 Q\left(\frac{2-v}{\sigma}\right)
    Q\left(\frac{2-v}{\sigma}\right) dv
    \\&
    \quad
    + 2 \int_{0}^1 Q\left(\frac{2-v}{\sigma}\right)
    Q\left(\frac{3-v}{\sigma}\right) dv + \cO(e^{-2\gamma})
\end{align*}  
\begin{align*}  
    &=
    2 \frac{1}{\sqrt{2\pi \gamma}} 
    + \cO( e^{- \frac{\gamma}{2}}) + \cO( e^{- \frac{\gamma}{2}})
    + \cO( e^{- \frac{\gamma}{4}})+ \cO( e^{-\gamma})
    -2 \frac{(\sqrt{2} -1)}{2 \sqrt{\pi \gamma}} + \cO( e^{- \gamma})\\
    &\quad + \cO( e^{- \frac{\gamma}{2}})
    + \cO(e^{- 2\gamma })
    + \cO(e^{- \gamma })
    + \cO( e^{- \frac{5}{2}\gamma})
    + \cO(e^{- 2\gamma }) \\
    &= \frac{1}{\sqrt{\pi \gamma}} + \cO( e^{- \frac{\gamma}{4}}),
    ~~~\text{as stated by the Lemma}.
\end{align*}  
The big $\cO$ notations of each integration term are derived in Appendix~\ref{app:bigO}.
\end{IEEEproof}

~\\
\noindent
The next Lemma derives an identical result via the simplified model.
%%-----------------------------------------
%%-------------- Lemma 2 ------------------
%%-----------------------------------------
\begin{lemma}
\label{lem_X-Y-1-jump}
\normalfont
    Consider the $X-Y$ model  where  $i \le X < i+1$ and $i+1 \le Y <i+2$, as in Lemma~\ref{lem_U-X-Y-1-jump}. 
    Thus, $\hat{X} =i$ and $\hat{Y} = i+1$. 
    This model yields an identical result, i.e., 
    \begin{align}
        \pr( \hY = i+1 ~|~ \hX =i) = \frac{1}{\sqrt{\pi \gamma}} + \cO( e^{-\frac{\gamma}{4}}) = \Theta(\gamma^{-1/2}).
    \end{align}
\end{lemma}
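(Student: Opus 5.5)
In the $X-Y$ model, $X$ is uniform on $[0,N[$ and $Y=X+Z$ with $Z\sim\mathcal N(0,2\sigma^2)$ independent of $X$; frame borders are neglected. Conditioning on $\hX=i$ therefore makes $X$ uniform on $[i,i+1[$. Write $X=i+v$ with $v\sim\mathrm{Unif}([0,1[)$. Then $\hY=i+1$ is the event $1-v\le Z<2-v$, so
\begin{align*}
\pr(\hY=i+1~|~\hX=i)=\int_0^1\left[Q\left(\frac{1-v}{\sqrt{2}\sigma}\right)-Q\left(\frac{2-v}{\sqrt{2}\sigma}\right)\right]dv .
\end{align*}
No conditioning on $U$ and no sum over neighbouring bins is needed. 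This single-integral form is the main simplification compared with Lemma~\ref{lem_U-X-Y-1-jump}.

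Next, I would substitute $w=1-v$ in the first integral and $w=2-v$ in the second. Both become integrals of $Q(w/(\sqrt2\sigma))$, over $[0,1]$ and $[1,2]$ respectively. The first integral is handled with the identity
\begin{align*}
\int_0^\infty Q(w/s)\,dw = s\,\E[\max(G,0)] = \frac{s}{\sqrt{2\pi}}, \qquad G\sim\mathcal N(0,1).
\end{align*}
With $s=\sqrt{2}\sigma$, and since $\sigma=\gamma^{-1/2}$, this gives $\int_0^\infty Q(w/(\sqrt2\sigma))\,dw=\sigma/\sqrt{\pi}=1/\sqrt{\pi\gamma}$. The first integral equals this value minus the tail $\int_1^\infty Q(w/(\sqrt2\sigma))\,dw$.

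The remaining step, and the only delicate one, is to show that the two leftover pieces are both $\cO(e^{-\gamma/4})$. These pieces are the tail $\int_1^\infty$ and the whole second integral $\int_1^2$. Both are bounded by $\int_1^\infty Q(w/(\sqrt2\sigma))\,dw$. I would use the Chernoff bound $Q(x)\le \tfrac12 e^{-x^2/2}$ together with $e^{-w^2/(4\sigma^2)}\le e^{-w/(4\sigma^2)}$ for $w\ge1$. Integrating the latter gives $2\sigma^2 e^{-1/(4\sigma^2)} = 2\gamma^{-1}e^{-\gamma/4}$, which is $\cO(e^{-\gamma/4})$.

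The exponent $\gamma/4$ comes from the doubled variance $2\sigma^2$ of $Z$. This is consistent with the dominant exponential terms found in Lemma~\ref{lem_U-X-Y-1-jump}. Combining the pieces yields $\frac{1}{\sqrt{\pi\gamma}}+\cO(e^{-\gamma/4})$. This expression is $\Theta(\gamma^{-1/2})$, because the exponential term is negligible against $\gamma^{-1/2}$.
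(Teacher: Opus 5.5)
Your proof is correct and takes essentially the paper's route. You reduce to the single integral $\int_0^1 \bigl[Q(\frac{1-v}{\sigma\sqrt{2}}) - Q(\frac{2-v}{\sigma\sqrt{2}})\bigr]\,dv$, derive the leading term self-containedly via $\int_0^\infty Q(w/s)\,dw = s/\sqrt{2\pi}$, and bound the tails explicitly (getting the slightly sharper $\cO(\gamma^{-1}e^{-\gamma/4})$), whereas the paper simply invokes the Appendix~B integrals $I_1$ and $\int_0^1 Q(\frac{2-v}{\sigma})\,dv$ with $\sigma$ replaced by $\sigma\sqrt{2}$.
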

\begin{IEEEproof}
According to the $X-Y$ model we have
    \begin{align}
        \pr( \hY = i+1 ~|~ X= x)
        = Q\left( \frac{i+1 -x}{ \sigma \sqrt{2}}\right) 
        - 
        Q\left( \frac{i+2 -x}{ \sigma \sqrt{2}}\right),
    \end{align}
so we get
\begin{align*}
        \pr(\hY=i+1 ~|~ \hX=i)
        &= \int_{x=i}^{i+1}
        \left[ 
        Q\left( \frac{i+1 -x}{ \sigma \sqrt{2}}\right) 
        - 
        Q\left( \frac{i+2 -x}{ \sigma \sqrt{2}}\right)
        \right] dx
        %\notag
        \\&
       =
       \int_{v=0}^{1}
        \left[ 
        Q\left( \frac{1 -v}{ \sigma \sqrt{2}}\right) 
        - 
        Q\left( \frac{2 -v}{ \sigma \sqrt{2}}\right)
        \right] dv
        \\& 
        =
        \frac{1}{\sqrt{\pi \gamma}} + \cO(e^{-\frac{\gamma}{4}}).
\end{align*}
Note that the factor $\sqrt{2}$ of $\sigma$ in the above equations is due to the double variance of the additive Gaussian noise in the $X-Y$ model. 
The last result and the big $\cO$ notations are derived in Appendix~\ref{app:bigO}.\\
\end{IEEEproof}

%--------------------------------------------
Lemma~\ref{lem_X-Y-1-jump} requires less algebra than Lemma~\ref{lem_U-X-Y-1-jump} to yield an identical result for the transition probability $\pr(\hY=i+1 ~|~ \hX=i)$ in the algebraic TE-QKD channel model. Obviously, we should prefer Lemma~\ref{lem_X-Y-1-jump} to undergo performance analysis. However, the true events behind the transition error are only visible in Lemma~\ref{lem_U-X-Y-1-jump}.
Indeed, (\ref{equ_1jump_final}) results from (\ref{equ_1jump_i}), since (\ref{equ_1jump_i-1}) has a higher decay rate. 
Half of the term $\tfrac{1}{\pi\gamma}$ in (\ref{equ_1jump_final}) comes from Alice's detector not changing the measured photon position while Bob's detector making it jump by one bin,
and from both detectors jumping by one bin on the same side. 
The second half of $\tfrac{1}{\pi\gamma}$ is obtained by switching Alice and Bob in the photon detection events.
The dominant infinite-diversity term $\cO(e^{-\gamma/4})$ results from $\int_{0}^1 Q\left(\frac{v}{\sigma}\right) Q\left(\frac{1-v}{\sigma}\right) dv$, which is equivalent to an event where Alice's detector makes one error to the left by one bin and Bob's detector makes one error to the right by one bin.
%%-----------------------------------------
%%-------------- Lemma 3 ------------------
%%-----------------------------------------
\clearpage
\begin{lemma}
\label{lem_U-X-Y-2-jump}
\normalfont
   Consider the $U-X-Y$ model  where  $i \le X < i+1$ and $i+2 \le Y <i+3$. Thus,  $\hat{X} =i$ and $\hat{Y} = i+2$. Then, we have
   \begin{align}
   \label{equ_p_i+2_i}
       \pr( \hat{Y} = i+2 | \hat{X} =i) = \cO( e^{-\frac{\gamma}{4}}).
   \end{align}
\end{lemma}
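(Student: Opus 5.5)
We follow the proof of Lemma~\ref{lem_U-X-Y-1-jump}. Conditioning on $U=u$ makes $\hX$ and $\hY$ independent, and $p(u)=1/N=\pr(\hX=i)$ at high SNR. Hence
\begin{equation*}
\pr(\hY=i+2\,|\,\hX=i)=\sum_{j}\int_{j}^{j+1}\pr(\hY=i+2\,|\,U=u)\,\pr(\hX=i\,|\,U=u)\,du .
\end{equation*}
We write $u=j+v$ with $v\in[0,1[$. Each conditional probability is a difference of $Q$ functions of the form $Q(\frac{a-v}{\sigma})-Q(\frac{a+1-v}{\sigma})$, or $1-Q(\frac{v}{\sigma})-Q(\frac{1-v}{\sigma})$ when $\hU$ equals the target bin. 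The useful case split is by $j=\hU$ relative to the interval $\{i,i+1,i+2\}$.

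\textbf{Central terms.} Take $j\in\{i,i+1,i+2\}$.
\begin{itemize}
\item For $j=i$, we bound $\pr(\hX=i|U)\le 1$ and $\pr(\hY=i+2|U)\le Q(\frac{2-v}{\sigma})$. The integral is then at most $\int_0^1 Q(\frac{2-v}{\sigma})dv\le Q(1/\sigma)=\cO(e^{-\gamma/2})$.
\item For $j=i+2$, the symmetric argument gives $\int_0^1 Q(\frac{v+1}{\sigma})dv$ at most the same order.
\item For $j=i+1$, the photon sits in the middle bin. Alice must jump one bin left and Bob one bin right. The integrand is bounded by $Q(\frac{v}{\sigma})Q(\frac{1-v}{\sigma})$.
\end{itemize}
This last product is exactly the integral already identified after Lemma~\ref{lem_X-Y-1-jump} as the source of the $\cO(e^{-\gamma/4})$ term, and its big-$\cO$ is derived in Appendix~\ref{app:bigO}. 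Using $Q(a)Q(b)\le \tfrac14 e^{-(a^2+b^2)/2}$ and $v^2+(1-v)^2\ge 1/2$, the bound is $\frac14 e^{-\gamma/4}$. This is the dominant term.

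\textbf{Outer terms.} For $j\le i-1$ or $j\ge i+3$, at least one detector must jump by at least two bins, and the other by at least one. Consider $j=i-1$. Bob's factor is at most $Q(\frac{2-v}{\sigma})\le Q(1/\sigma)$, while Alice's factor is at most $Q(\frac{1-v}{\sigma})$. The product is $\le Q(\frac{1-v}{\sigma})Q(\frac{2-v}{\sigma})$, whose integral is $\cO(e^{-\gamma})$ by Appendix~\ref{app:bigO}. The case $j=i+3$ is symmetric. Bins farther away contribute terms bounded by $Q(2/\sigma)=\cO(e^{-2\gamma})$ times a factor at most one. We take the sum over the at most $N$ remaining bins, a finite constant, so it stays $\cO(e^{-2\gamma})$.

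Summing the finitely many pieces gives $\cO(e^{-\gamma/4})$. As a cross-check, the $X$--$Y$ model gives $\int_0^1 Q(\frac{2-v}{\sigma\sqrt2})dv\le Q(\frac{1}{\sigma\sqrt2})=\cO(e^{-\gamma/4})$, the same exponent.

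The only delicate step is the $j=i+1$ term. There a crude bound by a single $Q$ factor would give only $\cO(e^{-\gamma/8})$-type behavior, or worse $\Theta(\gamma^{-1/2})$ if one factor were bounded by one. The joint minimization of $v^2+(1-v)^2$ at $v=1/2$ is what yields the exponent $1/4$.
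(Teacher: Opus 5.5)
Your proof is correct and follows essentially the paper's route. Like the paper, you condition on $U$, split over $\hU\in\{i,i+1,i+2\}$ plus negligible outer bins, and identify the middle-bin term $\int_0^1 Q(\frac{v}{\sigma})Q(\frac{1-v}{\sigma})dv=\cO(e^{-\gamma/4})$ as dominant; the only difference is that you bound each $Q$-difference from above rather than expanding the products into ten integrals. One harmless slip: for $j=i-1$ the integral of $Q(\frac{1-v}{\sigma})Q(\frac{2-v}{\sigma})$ is $\cO(e^{-\gamma/2})$ (as in Appendix~\ref{app:bigO}), not $\cO(e^{-\gamma})$, and Bob's factor there is actually at most $Q(\frac{3-v}{\sigma})$, which gives the paper's $\cO(e^{-2\gamma})$; either way the conclusion is unaffected.
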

\begin{IEEEproof}
    According to the $U-X-Y$ model and similar to the proof of Lemma~\ref{lem_U-X-Y-1-jump}, we have
    \begin{align}
        \pr( \hat{Y} = i+2 | \hat{X} =i) 
        & =
        \int_{u=0}^N \pr( \hat{Y} = i+2, U = u | \hat{X} =i) du 
        \notag
        \\&
        = \sum_{j=i}^{i+2} \int_{u=j}^{j+1} \pr( \hat{Y} = i+2, U = u | \hat{X} =i) du
        ~+~\cO(e^{-2\gamma})
        \notag
        \\& 
        = \sum_{j=i}^{i+2} \int_{u=j}^{j+1} \pr( \hat{Y} = i+2 | U = u ) \pr(\hat{X} =i | U =u) du
        ~+~\cO(e^{-2\gamma})
        \label{eq:follow-example1}
        \\& 
        = 2 \int_{v=0}^1 \left[ 
        Q\left( \frac{2-v}{ \sigma} \right)
        - 
        Q\left( \frac{3-v}{ \sigma} \right)
        \right] 
        \cdot 
        \left[ 
        1 - Q\left( \frac{v}{ \sigma} \right)
        - 
        Q\left( \frac{1-v}{ \sigma} \right)
        \right] dv 
        \label{eq:example3-j=i}
        \\& 
        %%%%%%%%%%%%
        \quad + 
        \int_{v=0}^1 \left[ 
        Q\left( \frac{1-v}{ \sigma} \right)
        - 
        Q\left( \frac{2-v}{ \sigma} \right)
        \right] 
        \cdot 
        \left[ 
         Q\left( \frac{v}{ \sigma} \right)
        - 
        Q\left( \frac{1+ v}{ \sigma}\right)
        \right] dv \label{eq:example3-j=i+1}
        \\ & \quad +~\cO(e^{-2\gamma}), \nonumber
    \end{align}
    where~\eqref{eq:follow-example1} follows from \eqref{eq:1/N} in the proof of Lemma~\ref{lem_U-X-Y-1-jump}, \eqref{eq:example3-j=i} is the sum for $j=i$ and $j = i+2$ by symmetry, and \eqref{eq:example3-j=i+1} is for $j= i+1$.
    Then, 
    \begin{align*}
        \pr( \hat{Y} = i+2 | \hat{X} =i) 
        & = 2 \int_{v=0}^1 Q\left( \frac{2-v}{ \sigma} \right) dv 
        -
        2 \int_{v=0}^1 Q\left( \frac{3-v}{ \sigma} \right) dv
        -
        2 \int_{v=0}^1 Q\left( \frac{2-v}{ \sigma} \right) Q\left( \frac{v}{ \sigma} \right) dv
        \\&
         \quad + 
         2 \int_{v=0}^1 Q\left( \frac{3-v}{ \sigma} \right) Q\left( \frac{v}{ \sigma} \right) dv
         -
         2 \int_{v=0}^1 Q\left( \frac{2-v}{ \sigma} \right) Q\left( \frac{1 - v}{ \sigma} \right) dv
         \\& 
         \quad +
         2 \int_{v=0}^1 Q\left( \frac{3-v}{ \sigma} \right) Q\left( \frac{1-v}{ \sigma} \right) dv
         +
          \int_{v=0}^1 Q\left( \frac{1-v}{ \sigma} \right) Q\left( \frac{v}{ \sigma} \right) dv
          \\&
          \quad 
          -  \int_{v=0}^1 Q\left( \frac{2-v}{ \sigma} \right) Q\left( \frac{v}{ \sigma} \right) dv
          -  \int_{v=0}^1 Q\left( \frac{1-v}{ \sigma} \right) Q\left( \frac{1+v}{ \sigma} \right) dv
          \\&
          \quad
          + \int_{v=0}^1 Q\left( \frac{2-v}{ \sigma} \right) Q\left( \frac{1 + v}{ \sigma} \right) dv
          \\&
          = \cO( e^{-\frac{\gamma}{2}}) 
          + \cO( e^{-2\gamma}) 
          + \cO( e^{-\gamma}) 
          + \cO( e^{-\frac{5}{2}\gamma}) % 3-v v
          + \cO( e^{-\frac{\gamma}{2}}) %2-v 1-v
          + \cO( e^{-2\gamma}) %3-v 1-v
          \\&
          \quad
          + \cO( e^{-\frac{\gamma}{4}}) %1-v v
          + \cO( e^{-\gamma }) %2-v v
          +\cO( e^{-\gamma}) %1-v 1+v
          + \cO( e^{-\frac{9}{2}\gamma}) %2-v 1+v
          \\&
          = \cO( e^{-\frac{\gamma}{4}}).
    \end{align*}
    The big $\cO$ notations of each integration term are derived in Appendix~\ref{app:bigO}.
\end{IEEEproof}

~\\~\\
\noindent
From the proof of Lemma~\ref{lem_U-X-Y-2-jump}, it is understood that (\ref{equ_p_i+2_i}) is the result of Alice's detector making an error by one bin in one direction, and Bob's detector making an error by one bin in the opposite direction. Lemma~\ref{lem_X-Y-2-jump} yields (\ref{equ_p_i+2_i}) but without revealing which photon events created this transition probability.

%%-----------------------------------------
%%-------------- Lemma 4 ------------------
%%-----------------------------------------
\begin{lemma}
\label{lem_X-Y-2-jump}
\normalfont
    Consider the $X-Y$ model in where  $i \le X < i+1$ and $i+2 \le Y <i+3$. Thus, $\hat{X} =i$ and $\hat{Y} = i+2$.  Then, we have
    \begin{align}
        \pr( \hat{Y} = i+2 | \hat{X} =i)= \cO( e^{-\frac{\gamma}{4}}).
        \label{eq:X-Y-2-jump}
    \end{align}
\end{lemma}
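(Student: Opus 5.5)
Following the proof of Lemma~\ref{lem_X-Y-1-jump}, I will condition on the continuous position $X=x$ and use that $Y=X+Z$ with $Z\sim\mathcal N(0,2\sigma^2)$. This gives
\begin{align*}
\pr(\hY=i+2 ~|~ X=x)=Q\left(\frac{i+2-x}{\sigma\sqrt2}\right)-Q\left(\frac{i+3-x}{\sigma\sqrt2}\right).
\end{align*}
Since $X$ is uniform on $[i,i+1[$ given $\hX=i$ (unit density, because $\pr(\hX=i)=1/N$), I integrate over $x$ and substitute $v=x-i$ to obtain
\begin{align*}
\pr(\hY=i+2 ~|~ \hX=i)=\int_0^1 Q\left(\frac{2-v}{\sigma\sqrt2}\right)dv-\int_0^1 Q\left(\frac{3-v}{\sigma\sqrt2}\right)dv .
\end{align*}

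Next I bound each integral. For $v\in[0,1]$ we have $2-v\ge 1$, so $Q$ is maximized at $v=1$. The Chernoff bound $Q(a)\le \tfrac12 e^{-a^2/2}$ then gives
\begin{align*}
\int_0^1 Q\left(\frac{2-v}{\sigma\sqrt2}\right)dv \le Q\left(\frac{1}{\sigma\sqrt2}\right)\le \tfrac12 e^{-\frac{1}{4\sigma^2}}=\tfrac12 e^{-\frac{\gamma}{4}}.
\end{align*}
The second integral is nonnegative and bounded in the same way by $\tfrac12 e^{-\gamma}$. Because the difference of the two integrals is itself a probability, it lies in $[0,\tfrac12 e^{-\gamma/4}]$, which is the bound~\eqref{eq:X-Y-2-jump}.

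To make the estimate sharper, or at least consistent with Appendix~\ref{app:bigO}, I can compute the first integral exactly. I would write $\int_a^b Q(t)\,dt$ in closed form via $\int Q(t)dt=tQ(t)-\tfrac{1}{\sqrt{2\pi}}e^{-t^2/2}$, with $t$ running from $1/(\sigma\sqrt2)$ to $2/(\sigma\sqrt2)$ and a Jacobian of $\sigma\sqrt2$. This shows the term is $\Theta(\gamma^{-1}e^{-\gamma/4})$, which confirms the exponent $\gamma/4$ is tight.

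I expect no serious obstacle here. The only delicate point is justifying that $X$ is uniform on the bin under the high-SNR $X-Y$ model. This is part of the model's stated assumptions, with frame borders neglected, so it needs no further argument.
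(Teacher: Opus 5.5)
Your proof is correct and follows the paper's route: the paper writes the same integral $\int_0^1\bigl[Q(\tfrac{2-v}{\sigma\sqrt2})-Q(\tfrac{3-v}{\sigma\sqrt2})\bigr]\,dv$ and asserts it is $\cO(e^{-\gamma/4})+\cO(e^{-\gamma})$, and your bound $Q(a)\le\tfrac12e^{-a^2/2}$ at the worst endpoint $v=1$ makes that explicit. One slip in your optional sharpening: with $a=\tfrac{1}{\sigma\sqrt2}$ the closed form gives $\tfrac{1}{\sqrt{2\pi}}e^{-a^2/2}-aQ(a)\sim\tfrac{1}{a^2\sqrt{2\pi}}e^{-a^2/2}$, so after the Jacobian $\sigma\sqrt2$ the first integral is $\tfrac{2\sigma^3}{\sqrt{\pi}}e^{-\gamma/4}(1+o(1))=\Theta(\gamma^{-3/2}e^{-\gamma/4})$, which is the $\Delta=1$ case of Lemma~\ref{lem_DMZ}, not $\Theta(\gamma^{-1}e^{-\gamma/4})$ --- the exponent $\gamma/4$ is still tight and the lemma itself is unaffected.
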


\begin{IEEEproof}
    In this case, 
    \begin{align*}
        \pr( \hat{Y} = i+2 | \hat{X} =i)
        & =
        \int_{x=0}^1 \left[ 
        Q\left( \frac{2-x}{\sqrt{2} \sigma}         \right)
        -
        Q\left( \frac{3-x}{\sqrt{2} \sigma} \right)
        \right] dx 
        \\&
        = \cO(e^{-\frac{\gamma}{4}}) + \cO(e^{-\gamma}). 
    \end{align*}
    Equation~\eqref{eq:X-Y-2-jump} follows immediately.
\end{IEEEproof}

~\\
\noindent
Combining the results of Lemmas~\ref{lem_U-X-Y-1-jump}-\ref{lem_X-Y-2-jump}, we reach the following proposition on transition probabilities in the algebraic (hard-output) TE-QKD channel. We use the frame symmetry where Alice and Bob can be switched. Also, we assume that signal-to-noise ratio is large enough to neglect border effects and use the assumptions established in Appendix~\ref{app_fundamental}. 

\begin{proposition} \label{prop:single-double-jumps}
Consider the uncoded TE-QKD channel with $N$ bins per frame, 
SNR $\gamma = \frac{1}{\sigma^2} \gg 1$. 
The transition probabilities decompose into two regimes:
\begin{itemize}
\item Single-bin jump.
\begin{align}
\quad \forall i, j \in \Z_N,~ |j-i| =1,~ \text{then}~ p_{ij} = \pr(\hY=j|\hX=i)=\frac{\sigma}{ \sqrt{\pi}} + \cO\left( e^{-\gamma/4} \right) = \Theta  \left( \gamma^{-1/2} \right).
\end{align}
Therefore, the uncoded symbol error probability scales as $\Theta\left( \gamma^{-1/2} \right)$, achieving diversity order $\frac{1}{2}$.
\item Multi-bin jump.
\begin{align}
\forall i, j \in \Z_N,~ |j-i| \geq 2,~ \text{then}~ p_{ij} =  \pr(\hY=j|\hX=i) = \cO\left( e^{-\gamma/4} \right).
\end{align}
The errors decay exponentially in $\gamma$ and have infinite diversity order.
\end{itemize}
\end{proposition}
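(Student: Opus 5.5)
The plan is to assemble the proposition directly from Lemmas~\ref{lem_U-X-Y-1-jump}--\ref{lem_X-Y-2-jump}, using the high-SNR assumptions of Appendix~\ref{app_fundamental}: equiprobable bins, no border effects, and $U$ uniform. We then add a symmetry argument and a tail bound for jumps of three or more bins.

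\textbf{Single-bin jumps.} For $j=i+1$, Lemma~\ref{lem_X-Y-1-jump} (or Lemma~\ref{lem_U-X-Y-1-jump}) gives $p_{i,i+1}=\tfrac{1}{\sqrt{\pi\gamma}}+\cO(e^{-\gamma/4})$. Since $\sigma=\gamma^{-1/2}$, this equals $\tfrac{\sigma}{\sqrt{\pi}}$. For $j=i-1$ we use the reflection $x\mapsto -x$: the Gaussian $Z=Z_2-Z_1$ is symmetric, and $X$ is uniform on $[i,i+1[$. The substitution $v\mapsto 1-v$ then turns the integral $\int_0^1[Q(\tfrac{v}{\sigma\sqrt2})-Q(\tfrac{1+v}{\sigma\sqrt2})]dv$ into the one in Lemma~\ref{lem_X-Y-1-jump}, so $p_{i,i-1}=p_{i,i+1}$.

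\textbf{Uncoded error probability.} Summing over the $N$ equiprobable inputs, and ignoring the end bins (whose contribution only changes the constant, cf.\ the factor $(1-\tfrac1N)$ in (\ref{eq:uncoded-Pe})), we get $P_e=\sum_i \tfrac1N\sum_{j\neq i}p_{ij}=\tfrac{2\sigma}{\sqrt\pi}+\cO(e^{-\gamma/4})$. This is $\Theta(\gamma^{-1/2})$ and has the form (\ref{equ_pe_findiv}) with $d=\tfrac12$. Definition~\ref{def_div} then gives diversity $\tfrac12$.

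\textbf{Multi-bin jumps.} For $|j-i|=2$, Lemma~\ref{lem_U-X-Y-2-jump} or Lemma~\ref{lem_X-Y-2-jump} gives $\cO(e^{-\gamma/4})$ directly, and the case $j=i-2$ follows by the same reflection. For $|j-i|=k\ge3$ in the $X-Y$ model we need a tail bound. With $x\in[i,i+1[$, reaching bin $j$ requires $|Z|\ge k-1\ge 2$. So $p_{ij}\le 2Q\!\left(\tfrac{k-1}{\sigma\sqrt2}\right)\le e^{-(k-1)^2\gamma/4}\le e^{-\gamma}$, using $Q(x)\le \tfrac12 e^{-x^2/2}$. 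Since there are at most $N$ such $j$, their total contribution is still $\cO(e^{-\gamma/4})$. A quantity of the form $K e^{-\alpha\gamma}$ satisfies $-\log P/\log\gamma\to\infty$, so by (\ref{equ_pe_infdiv}) these events have infinite diversity.

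\textbf{Main obstacle.} The main difficulty is not the calculation but keeping the argument uniform in $i$ at the frame edges. The lemmas assume that bins $i-1,\dots,i+2$ exist inside the frame. For $i=0$ or $i=N-1$, the missing neighbor removes one single-jump term, while the valid-frame conditioning only perturbs the densities by exponentially small amounts (Appendix~\ref{app_fundamental}-d). I would handle this by the stated convention of neglecting border effects. I would also remark that the edge bins keep $p_{ij}=\Theta(\gamma^{-1/2})$ for each existing neighbor, so neither the diversity $\tfrac12$ nor the exponential decay of the multi-jump terms is affected.
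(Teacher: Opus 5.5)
Your proposal is correct and follows the paper's own route: the proposition is obtained by combining Lemmas~\ref{lem_U-X-Y-1-jump}--\ref{lem_X-Y-2-jump} under the high-SNR model with border effects neglected. Your reflection argument for $j=i-1$ and your explicit bound $2Q\bigl(\tfrac{k-1}{\sigma\sqrt{2}}\bigr)\le e^{-\gamma}$ for $|j-i|=k\ge 3$ spell out what the paper leaves to its Alice--Bob symmetry remark and to the truncation of three-or-more-bin jumps in the proof of Lemma~\ref{lem_U-X-Y-1-jump}. One side remark is slightly off: within $\cO(\sigma)$ of the frame edges, the valid-frame conditioning is not an exponentially small perturbation and shifts the edge-bin $p_{ij}$ by a relative $\cO(\sigma)$; since you adopt the paper's no-border convention and these terms remain $\Theta(\gamma^{-1/2})$, the conclusion is unchanged.
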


%%\begin{figure}[htb]
%%\centering 
%%\includegraphics[angle=270, width=0.8\linewidth]{density_of_U_N=8_vs_sigma.pdf} 
%%\caption{Probability density of $\pi_i$ when $N=8$}
%%     \label{fig:pdf}
%%\end{figure}

Now, we are ready to expose in the next section the conditions under which a finite-length code on the half-diversity TE-QKD channel is going to perform in a non-standard (rather shocking) manner by exhibiting an infinite diversity. 

%%--------------------------------------------------------------
%%--------------------------------------------------------------
%%\clearpage
\section{From Finite to Infinite Diversity \label{sec_infinity-div}}
For a coded TE-QKD channel, the relevant question is whether all dominant single-bin jumps can be absorbed by the error-correcting code. If this is possible, then the remaining decoding errors must involve at least one multi-bin jump, whose probability is exponentially small. The decoding error probability then has infinite diversity. If this is not possible, then there exists a decoding error made of single-bin jumps. Since each such jump has probability $\Theta(\gamma^{-1/2})$, the overall error probability has a polynomial lower bound, and the diversity remains finite.\\
%%\nrd{Sentence to be improved: The impatient reader is advised to check the infinite diversity conditions in Theorems 1 \&2,
%%then read Section VI that shows practical examples before going through the details of Section V.}
The reader may find it helpful to first review the infinite-diversity conditions presented in Theorems~\ref{thm:infinite-diversity-hard} and~\ref{thm:infinite-diversity-soft}, and then the practical examples in Section~\ref{sec_code_examples}, before studying the detailed derivations in Section~\ref{sec_infinity-div}.
\subsection{Infinite Diversity under Algebraic Decoding \label{sec_alg_dec}}
For reconciliation between Alice and Bob, we assume in this section that Bob has a bounded-distance algebraic (hard-decision input) decoder with an error-correction radius of $t$, where $t=\lfloor\tfrac{d_{Hmin}-1}{2}\rfloor$, in contrast to a complete algebraic decoder or a decoder that corrects beyond $t$ \cite{Guruswami1999}. A $t$-bounded algebraic decoder for a linear code is a deterministic, polynomial-time algorithm that maps any received word to a unique codeword if one exists within a specified Hamming distance $t$, or outputs a decoding failure otherwise \cite{Blahut2003}.

With a Gray-labeled TE-QKD frame of $N = 2^m$ bins, each photon carries $m = \log_2 N$ coded bits. Hence, for a binary code of length $n$ bits,
one codeword is transmitted using $L = \lceil\frac{n}{m}\rceil$ photons. A single-bin jump changes only one bit in the label of the affected photon. Therefore, if every photon in the codeword undergoes at most one single-bin jump, then at most $L$ coded bits are corrupted. 
The algebraic decoder can fix all such dominant error patterns only when its error-correction radius $t$ is at least $L$. This gives the threshold condition in the following theorem. For non-trivial codes, $0 < k < n$, $t$ is null for $n\le 2$, and $t\le (n-k)/2 < n/2$ from the Singleton bound, i.e., there exists no linear code such that $n/2 \le t$. Hence, we consider $n\ge 3$ and $m\ge 1$, knowing that the inequality of the next theorem can never be satisfied for $m=1$ and $m=2$ bits per photon. Furthermore, 
for the sake of practicality, only well-aligned systems are considered where $n$ is multiple of $m$, or $m$ is multiple of $n$.
%%-------------------------------------------
%%-------------- Theorem 1 ------------------
%%-------------------------------------------
\clearpage
\begin{theorem}
\label{thm:infinite-diversity-hard}
  Let $\cC$ be a linear binary code with parameters~$\left[n, k, d_{Hmin}, t \ge 1\right]_2$.  Assume that $\cC$ is used for reconciliation over a TE-QKD frame of length $N=2^m$ bins, $n\ge 3$, $m\ge 1$, and $\gcd(n,m)=\min(n,m)$. The TE-QKD bins are labeled via a binary Gray code of $m$ bits per label. Then, a $t$-bounded algebraic decoder for $\cC$ achieves infinite diversity if and only if 
    \begin{align}
        \label{equ_n_m_t}
        L=\frac{n}{m} \leq t.
    \end{align}
\end{theorem}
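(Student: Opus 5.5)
The proof splits the word error probability $P_{ew}$ of the $t$-bounded decoder according to the jump pattern of the $L$ photons of a codeword. For each photon $\ell$, write $\hX_\ell$ and $\hY_\ell$ for Alice's and Bob's bins. By Proposition~\ref{prop:single-double-jumps}, three cases occur: no jump ($\hY_\ell=\hX_\ell$), a single-bin jump ($|\hY_\ell-\hX_\ell|=1$) with probability $\Theta(\gamma^{-1/2})$, or a multi-bin jump with probability $\cO(e^{-\gamma/4})$. Memorylessness across photons gives
\begin{equation*}
P_{ew}\le \pr(E_{\rm s})+\pr(E_{\rm m}),
\end{equation*}
where $E_{\rm s}$ is the event that the decoder fails and no photon makes a multi-bin jump, and $E_{\rm m}$ is the event that at least one photon makes a multi-bin jump. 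The union bound gives $\pr(E_{\rm m})\le L\cdot \cO(e^{-\gamma/4})$, since $N-2$ multi-jump targets times $\cO(e^{-\gamma/4})$ is still exponentially small. So infinite diversity holds if and only if $\pr(E_{\rm s})$ is itself exponentially small (or zero).

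The key combinatorial fact is that Gray labeling makes adjacent bins differ in exactly one bit. A single-bin jump on photon $\ell$ therefore flips exactly one coded bit, and under $E_{\rm s}$ the error word $c_A+c_B$ has Hamming weight equal to the number of photons that jumped, at most $L$. I handle alignment in two cases.
\begin{itemize}
\item If $m\le n$, $m$ divides $n$ and $L=n/m$.
\item If $m>n$, $n$ divides $m$. A photon then carries $m/n\ge1$ complete codewords, so each codeword lies within one photon's label and a single jump flips at most one bit of it. Here $L$ is effectively $1\le t$.
\end{itemize}

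\textbf{Sufficiency.} If $L\le t$, every pattern in $E_{\rm s}$ has weight at most $t$. The bounded-distance decoder recovers $e_A+e_B$ uniquely, since the unique codeword in the coset within radius $t$ is the true one. Hence $E_{\rm s}$ is empty and $P_{ew}=\cO(L e^{-\gamma/4})$, which is infinite diversity in the sense of (\ref{equ_pe_infdiv}). Frame-border effects are ignored, as in the high-SNR models of Section~\ref{sec_probability}.

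\textbf{Necessity.} If $L\ge t+1$, I exhibit one dominant failure event with polynomial probability. Choose $t+1$ distinct photons and, for each chosen photon, fix the single-bin jump $\hY_\ell=\hX_\ell+1$; the remaining photons do not jump. For an interior bin this flips one bit in each of $t+1$ distinct photon labels, giving an error of weight exactly $t+1$. A bounded-distance decoder then either declares failure or outputs a wrong codeword, and both count as reconciliation errors. By independence and Proposition~\ref{prop:single-double-jumps}, the probability of this event is at least
\begin{equation*}
c\,\Big(\tfrac{1}{\sqrt{\pi\gamma}}\Big)^{t+1}(1-\cO(\gamma^{-1/2}))^{L-t-1}=\Theta\big(\gamma^{-(t+1)/2}\big),
\end{equation*}
where $c>0$ accounts for the fraction of interior bins and the $1/N^L$ input probabilities. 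This gives finite diversity of at most $(t+1)/2$.

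I expect the main obstacle to be making the Gray-code weight argument rigorous at the boundaries and for the $m>n$ alignment. I would check that wrap-around bins are either excluded or harmless, since they only affect constants. I would also verify that the chosen weight-$(t+1)$ pattern always produces a decoding error, whether the decoder declares failure or outputs a different coset member. A second point to check is that the $\cO(e^{-\gamma/4})$ terms of Proposition~\ref{prop:single-double-jumps} hold uniformly over bins, so that the union bound over a fixed finite set of photons and targets preserves the exponential decay.
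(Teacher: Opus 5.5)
Your proposal is correct and follows essentially the same route as the paper. You split error events into single-bin jumps (probability $\Theta(\gamma^{-1/2})$, each flipping one bit thanks to Gray labeling) and multi-bin jumps ($\cO(e^{-\gamma/4})$). Sufficiency then follows from $L\le t$, with each codeword seeing at most one flipped bit when $m>n$, and necessity from a pattern of $t+1$ single-bin jumps of probability $\Theta(\gamma^{-(t+1)/2})$. Your explicit bound $P_{ew}\le\pr(E_{\rm s})+\pr(E_{\rm m})$, and your remark that both decoder failure and miscorrection count as reconciliation errors, only make the paper's argument more explicit.
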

\begin{IEEEproof}
We know from Proposition~\ref{prop:single-double-jumps} that a single-bin jump has a probability $\Theta\left(\gamma^{-1/2}\right)$ 
and a multi-bin jump has a probability 
$\cO\left(e^{-\gamma/4}\right)$.\\
$\bullet$ For the special case $n/m < 1 \le t$, there exists a positive integer $\ell \ge 2$ such that $m=n\ell$ since $\gcd(m,n)=n$. In this case, one photon carries $\ell$ codewords. A single-bin jump flips one bit 
in one of those $\ell$ codewords. Such a single error is solved by the algebraic decoder having $t\ge 1$. Thus, all error events with
probability $\Theta\left(\gamma^{-1/2}\right)$ are successfully decoded. The decoder fails only when multiple jumps occur, with non-neighboring labels flipping more than a single bit.
Consequently, the probability of decoding error is $\cO\left(e^{-\gamma/4}\right)$ (infinite diversity) for $n/m < 1 \le t$.\\
$\bullet$ Now, we consider $L=n/m \ge 1$, where $L$ is integer since 
$\gcd(m,n)=m$, $L$ being the number of photons per codeword.  
If $L\le t$, all single-bin jumps create a maximum of $L$ erroneous
bits thanks to the Gray labeling. The decoder fails only
for multi-bin jumps with probability $\cO\left(e^{-\gamma/4}\right)$.
We conclude that diversity is infinite if $L=n/m\le t$. The sufficient condition is proved.\\
$\bullet$ For the necessary condition, consider now $L=n/m >t$, i.e.,
the error-correction radius $t$ is less than the number of photons per codeword. An error event of $t+1$ single-bin jumps has a probability $\Theta\left(\gamma^{-(t+1)/2}\right)$ of finite diversity equal to $\tfrac{1}{2} \times (t+1)$ and makes the $t$-bounded decoder fail.
\end{IEEEproof}
%%Remark: Some $t$-bounded algebraic decoders are capable of detecting
%%the presence of $t+1$ errors if the noisy word does not fall inside
%%the radius-$t$ ball of another codeword. In this case, the decoder
%%stops the decoding procedure. Nevertheless, the $t+1$ errors
%%introduced by the channel noise remain. If the noisy word with $t+1$ errors falls inside the ball of another codeword, then the decoder confirms its mistake by decoding the other codeword. Finally,
%%in the special case of an improved algebraic decoder (complete or not) that decodes some error patterns beyond a weight of $t$, 
~\\
\noindent
We saw in the proof of Theorem~\ref{thm:infinite-diversity-hard} that diversity is $\tfrac{1}{2} \times (t+1)$ when the code and its $t$-bounded algebraic decoder fail to attain infinite diversity. This is in perfect agreement with $d_0 \times (t+1)$ as in (\ref{equ_pe_bounds}) for standard ergodic Rayleigh fading channels.

%% The following sentence is wrong, 
%% checked by Joe & Siyao on 3 July 2025
%%However, for long codes, the MFD$(d_{Hmin}, t+1)$ property is satisfied with high probability making the necessary condition valid for all algebraic decoders. In fact, any MFD$(\omega, \ell\ge t+1)$ property will force the diversity to be finite at an order of $\tfrac{\ell}{2} \ge \tfrac{1}{2} \times (t+1)$.\\
Next, we generalize Theorem~\ref{thm:infinite-diversity-hard} to non-binary linear codes over a finite field $\F_q$, $q \ge 2$, $q$ power of $2$.
Similar to well-aligned linear codes, for the sake of practicality, we assume that we have $m$ multiple of $\log_2(q)$,
or $\log_2(q)$ multiple of $m$. We did not introduce the binary image of the non-binary code, avoiding to apply Theorem~\ref{thm:infinite-diversity-hard} to the binary image, because most available algebraic decoders work directly on the non-binary alphabet of the code.

%%---------------------------------------------
%%-------------- Corollary 1 ------------------
%%---------------------------------------------
\begin{corollary}
\label{cor_infinite-diversity-hard-nonbinary}
  Let $\cC$ be a linear code defined over $\F_q$ with parameters~$\left[n, k, d_{Hmin}, t \ge 1\right]_q$, where $\F_q$ is a finite field of characteristic $2$.  Assume that $\cC$ is used for reconciliation over a TE-QKD Gray-labeled frame of length $N=2^m$ bins, $n\ge 3$, $m\ge 1$, $\gcd(m,\log_2(q))=\min(m,\log_2(q))$, and $L=\tfrac{n\log_2(q)}{m}$ integer. The Gray code is binary with $m$ bits per label. Then, a $t$-bounded algebraic decoder for $\cC$ achieves infinite diversity, if and only if 
    \begin{align}
    \label{equ_n_logq_m_t}
        L=\frac{n\log_2(q)}{m} \leq t.
    \end{align}
\end{corollary}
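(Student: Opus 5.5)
The proof will follow the proof of Theorem~\ref{thm:infinite-diversity-hard}, now working with $q$-ary symbols instead of bits. Set $b=\log_2(q)$, so each code symbol is a block of $b$ consecutive bits in the binary image, and each photon label carries $m$ bits. By the assumption $\gcd(m,b)=\min(m,b)$, either $b$ divides $m$ (each photon carries $m/b$ whole code symbols) or $m$ divides $b$ (each code symbol spans $b/m$ whole photons). In both cases the photon boundaries and symbol boundaries line up. We keep using Proposition~\ref{prop:single-double-jumps}: a single-bin jump has probability $\Theta(\gamma^{-1/2})$, and any multi-bin jump has probability $\cO(e^{-\gamma/4})$. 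The decoder is $t$-bounded in the Hamming metric over $\F_q$, so what matters is the number of erroneous $q$-ary symbols, not the number of erroneous bits.

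\emph{Key counting step.} Thanks to the Gray labeling, a single-bin jump flips exactly one bit of the affected photon's label. By the alignment, that bit lies in exactly one $q$-ary symbol, so the jump corrupts exactly one code symbol. Hence an error pattern made only of single-bin jumps (at most one per photon) corrupts at most $L'$ symbols, where $L'$ is the number of photons involved. When $m\le b$, a codeword uses $L=nb/m\ge n$ photons, and several jumps may fall inside the same symbol, so the count of erroneous symbols is at most $\min(L,n)$. When $m>b$, each photon carries several symbols and a codeword uses $L=nb/m<n$ photons, so at most $L$ symbols are erroneous.

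\emph{Sufficiency.} If $L\le t$, every pattern of single-bin jumps leaves at most $L\le t$ symbol errors, which the $t$-bounded decoder corrects. A decoding failure therefore requires at least one multi-bin jump. Since the code length is finite, a union bound over the finitely many patterns gives $P_{ew}=\cO(e^{-\gamma/4})$, i.e., infinite diversity. The case $L<1$ (a photon carrying several codewords) is handled as in Theorem~\ref{thm:infinite-diversity-hard}: one jump corrupts one symbol of one codeword, which is corrected because $t\ge 1$.

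\emph{Necessity.} If $L>t$, we must exhibit $t+1$ single-bin jumps that produce $t+1$ distinct symbol errors. This is the delicate step, because when $m<b$ several photons share one symbol and jumps could collapse into fewer symbol errors. We choose the $t+1$ photons so that they lie in distinct code symbols. This is possible when $m\ge b$. When $m<b$, there are only $n$ symbols, and $t<n/2$ by the Singleton bound, so $t+1\le n$ distinct symbols are always available, each reached by one photon. Such a pattern has probability $\Theta(\gamma^{-(t+1)/2})$ by independence across photons, and it makes the bounded-distance decoder fail (it either declares failure or miscorrects).

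The remaining obstacle is the regime $m<b$ with $n\le t<L$. This regime is impossible, since $t<n/2$. So the condition reduces to $L\le t$, and the diversity is $\tfrac12(t+1)$ otherwise.
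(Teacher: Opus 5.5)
Your proof is correct and follows essentially the same route as the paper. You split according to whether $\log_2(q)$ divides $m$ or $m$ divides $\log_2(q)$, and use the Gray labeling so that each single-bin jump corrupts exactly one $q$-ary symbol. Sufficiency then comes from $L\le t$ together with the $\cO(e^{-\gamma/4})$ bound on multi-bin jumps, and necessity from $t+1$ single-bin jumps of probability $\Theta(\gamma^{-(t+1)/2})$.

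You are in fact slightly more careful than the paper's sketch in the case $m<\log_2(q)$, where you explicitly place the $t+1$ jumps in distinct symbols using $t+1\le n$; your side remark on $L<1$ is vacuous here, since $L$ is assumed to be a positive integer.
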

\begin{IEEEproof} The proof is similar to that of Theorem~\ref{thm:infinite-diversity-hard}. We distinguish the two following cases:\\
$\bullet$ $\log_2(q)=\ell m$, where $\ell$ is integer, $\ell \ge 1$.
The number of photons per codeword is $L=\tfrac{n\log_2(q)}{m}=n\ell$. In this case, we always have $L>t$ with finite 
diversity of order $\tfrac{1}{2} \times (t+1)$.\\
$\bullet$ $m=\ell\log_2(q)$, where $\ell$ is integer, $\ell \ge 2$.
The number of photons per codeword is $L=\tfrac{n\log_2(q)}{m}=\tfrac{n}{\ell}$. Infinite diversity is possible
only for $\ell \ge 3$ if the code satisfies $L\le t$.
Otherwise, diversity is $\tfrac{1}{2} \times (t+1)$. 
\end{IEEEproof}

The practicality conditions $\gcd(m,n)=\min(m,n)$ and $\gcd(m,\log_2(q))=\min(m,\log_2(q))$ make the binary codewords in Theorem~\ref{thm:infinite-diversity-hard} and the binary representation of a field element in Corollary~\ref{cor_infinite-diversity-hard-nonbinary} aligned with the photon labels. Without such an alignment, the partial Hamming weight enumeration of a code over a piece of the photon label would be necessary to analyze the diversity order. The partial Hamming weight enumeration depends on the specific code structure and cannot lead to a general condition such as (\ref{equ_n_m_t}) or (\ref{equ_n_logq_m_t}) for infinite diversity. 

The necessary condition in Theorem~\ref{thm:infinite-diversity-hard} 
and Corollary~\ref{cor_infinite-diversity-hard-nonbinary} is based on the fact that the bounded-distance decoder fails if it encounters $t+1$ errors or more. Is Theorem~\ref{thm:infinite-diversity-hard} 
still valid under improved decoders (complete or not)
capable of solving some error patterns beyond a Hamming weight of $t$?
Assume there exists an error pattern of weight $\nu \ge t+1$ 
formed by $\nu$ single-bin jumps that makes the noisy word
fall inside another codeword $t$-radius Hamming ball, or
makes the noisy word fall outside all $t$-radius balls but cannot be decoded,
then the decoder fails in finding Alice's word with probability
$(\Theta(\gamma^{-\tfrac{1}{2}}))^{\nu}$ of finite diversity equal
to $\nu/2$. Hence, the existence of such a weight-$\nu$ error pattern keeps the condition of Theorem~\ref{thm:infinite-diversity-hard}
applicable for improved non-$t$-bounded decoders. 

%% WRONG
%%One would think that better decoders, complete or correcting beyond $t$, could improve the necessary condition. However, the code always includes two codewords $c_A$ and $c'_A$ separated by a Hamming distance of $d_{Hmin}=2t+1$, by definition. Therefore, jumping in the direction $c_A \rightarrow c'_A$ by $t+1$ single-bin jumps will unfortunately
%%generate an error of diversity $\tfrac{1}{2} \times (t+1)$.
%%If $d_{Hmin}=2t+2$, we randomly choose $c_A$ or $c'_A$ when at equal Hamming distance $t+1$ from both of them, and hence diversity is still stuck at $\tfrac{1}{2} \times (t+1)$.\\ 

\noindent
Forcing infinite diversity comes naturally at a price in the coding
rate $R_c=\tfrac{k}{n}$. Indeed, (\ref{equ_n_m_t}) or (\ref{equ_n_logq_m_t}) 
are asking the error-correction radius $t$ to be greater than a fraction of the code length $n$. This increase in $t$ leads to a decrease in the coding rate $R_c$ as stated below.
%%---------------------------------------------
%%-------------- Corollary 2 ------------------
%%---------------------------------------------
\begin{corollary}
\label{cor_rate_loss_hard}
Consider a linear code $\cC[n,k,d_{Hmin},t]_q$ of rate $R_c=k/n$ defined over a finite field $\F_q$ of characteristic 2. Assume that $\tfrac{n\log_2(q)}{m} \le t$, i.e., $\cC$ achieves infinite diversity in a TE-QKD reconciliation as per Theorem~\ref{thm:infinite-diversity-hard} or Corollary~\ref{cor_infinite-diversity-hard-nonbinary}.
Then, we have 
\begin{equation}
\label{equ_Rc_bound}
R_c \le 1-\frac{2\log_2(q)}{m}.
\end{equation}
\label{cor_Rc_loss_hard}
\end{corollary}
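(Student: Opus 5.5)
The plan is to combine the infinite-diversity condition $\tfrac{n\log_2(q)}{m}\le t$ with the Singleton bound, exactly as announced before the statement. First, since $d_{Hmin}\ge 2t+1$ by definition of the correction radius $t=\lfloor\tfrac{d_{Hmin}-1}{2}\rfloor$, we have $t\le \tfrac{d_{Hmin}-1}{2}$. Second, the Singleton bound for a linear code over $\F_q$ gives $d_{Hmin}\le n-k+1$. Chaining the two yields $2t\le d_{Hmin}-1\le n-k$, i.e., $k\le n-2t$.

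Next, the hypothesis of Theorem~\ref{thm:infinite-diversity-hard} or Corollary~\ref{cor_infinite-diversity-hard-nonbinary} is inserted in the form $t\ge \tfrac{n\log_2(q)}{m}$. This gives
\begin{equation*}
k\le n-2t\le n-\frac{2n\log_2(q)}{m}.
\end{equation*}
Dividing by $n>0$ gives $R_c=k/n\le 1-\tfrac{2\log_2(q)}{m}$, which is (\ref{equ_Rc_bound}).

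There is no real obstacle; the only care needed is to apply the inequalities in the right direction. The step $2t\le d_{Hmin}-1$ holds for both odd and even $d_{Hmin}$ because of the floor. The bound is meaningful only when $m>2\log_2(q)$; otherwise the right-hand side is non-positive, consistent with the earlier remark that the hard-decision condition cannot hold for small $m$, since there is no non-trivial code there. I would close by noting the binary case $q=2$: $R_c\le 1-2/m$, which multiplied by $m$ gives at most $m-2$ information bits per photon. This is the penalty later compared with soft decoding, to show that the soft-decoding loss is half as large.
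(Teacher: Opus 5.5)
Your proposal is correct and follows essentially the same route as the paper's proof. The paper likewise chains $2t+1\le d_{Hmin}\le n-k+1$ (Singleton) with $\tfrac{n\log_2(q)}{m}\le t\le \tfrac{n(1-R_c)}{2}$ and divides by $n$ to obtain $R_c\le 1-\tfrac{2\log_2(q)}{m}$.
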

\begin{IEEEproof}
The Singleton bound states that $2t+1 \le d_{Hmin} \le n-k+1$ for any linear code.
We get the inequality $2t \le n-k=n \cdot (1-R_c)$.
Then, $\tfrac{n\log_2(q)}{m} \le t \le n \cdot (1-R_c)/2$.
After simplifying $n$ on both sides, the inequality becomes
$\tfrac{2\log_2(q)}{m} \le (1-R_c)$ which leads to the announced result.
\end{IEEEproof}

Improvements of the bound in (\ref{equ_Rc_bound}) could be obtained by applying the Hamming bound or the Sphere Packing bound in $\F_q^n$ given by $\sum_{i=0}^t {n \choose i} (q-1)^i \le q^{n-k}$ or the finite-length Gilbert-Varshamov bound $\sum_{i=0}^{d-2} {n-1 \choose i} (q-1)^i \le q^{n-k}$ \cite{MacWilliams1977}, but these enhancements have no closed form expressions.\\ 

The very short binary repetition $[3,1,3]_2$ code satisfies 
the infinite-diversity condition of Theorem~\ref{thm:infinite-diversity-hard}, under algebraic decoding for $m=3$ bits per photon, $n=3$, and $t=1$. Its error-rate performance was shown in Figure~\ref{fig_3_1_3}. Under normal conditions (no rough approximations, no dramatic sub-optimality), soft-decision decoding outperforms algebraic decoding for the same error-correcting code. Hence, the $[3,1,3]_2$ code also attains infinite diversity under soft-decision decoding. Figure~\ref{fig_3_1_3}
shows a probability of error per bit of $10^{-6}$ at signal-to-noise ratio $\gamma=15$ dB. The gap between both decoders depends on the code itself and could be estimated using performance analysis tools.
In the case of the very short $[3,1,3]_2$ code, the gap is small suggesting that algebraic decoding on a TE-QKD frame of $8$ bins is sufficient when the infinite diversity condition $\tfrac{n\log_2(q)}{m} \le t$ is valid. Other examples are given in the sequel in Section~\ref{sec_code_examples}. We will see in the next section that the infinite diversity condition is relaxed under soft-decision decoding, i.e.,
it is easier to get infinite diversity with soft reconciliation rather than hard/algebraic reconciliation.

%%\nrd{Note 1: we never mentioned the ergodic property when referring
%%to standard wireless channels. We will add it.\\
%%Note 2: Check again where do we need $N \gg 1$.
%%I believe $\gamma \gg 1$ is sufficient.\\
%%Note 3: After Corollary 1, add the rate loss occurred by the condition
%%of infinite diversity. See Singleton bound, Hamming bound, and GV bound. Put the note on RS, LDPC, and binary BCH codes.\\
%%Note 4: Rice even at K>0 is still stuck at diversity=1 like Rayleigh.\\
%%Note 5: $q=8=2^3$, m=6 bits per photon, n=8, Code is $RS[n=8,k=4,d_H=5,t=2]_8$\\
%%$n\cdot log_2(q)/m=4$ not less than or equal to $t=2$.\\
%%$n\cdot log_2(q)/m=4$ photons per RS codeword.\\
%%Algebraic decoding will give a finite diversity.\\
%%However, $n\cdot log_2(q)/m=4=2t<d_H$, then soft-decision decoding will give infinite diversity.\\
%%}

%%----------------------------------------------------------
%%----------------------------------------------------------
\subsection{The MFD Property of a Code\label{sec_mfd}}
Before analyzing soft reconciliation and its diversity in the next section, we introduce a special definition for codes with respect to the diversity achieved by a codeword. A permutation of the symbol positions in a code yields another code version but could influence its diversity in presence of fading. The effect of permutations, known as multiplexing, was studied on non-ergodic block fading channels \cite{Boutros2004}\cite{Boutros2005}. The TE-QKD channel is ergodic by definition, however the code symbol positions could affect the diversity because the $n\log_2(q)$ binary digits of a word are spread over $L$ photons by groups of $m=\log_2(N)$ bits. Definitions~\ref{def_MFD_code} and \ref{def_MFD_code_2} of the MFD property will be crucial to determine whether diversity is finite or infinite as stated by Theorem~\ref{thm:infinite-diversity-soft} under soft-decision decoding.

%% Next sentence is wrong.
%Definition~\ref{def_MFD_code} will also be useful in commenting complete decoding versus $t$-bounded decoding in Theorem~\ref{thm:infinite-diversity-hard} under algebraic decoding.

\begin{definition}[Neighboring Labels in a Gray Code]
\label{def_neighboring_label}
Consider a TE-QKD frame of length $N=2^m$ bins where each bin has an $m$-bit binary label built from a one-dimensional Gray code. 
A label is called a zero-neighboring label, or a neighbor of zero, 
if it corresponds to a single-bin jump away from zero.
Thanks to the Gray code, the neighbor of zero has a unit Hamming weight.
The current definition is easily extended to a neighboring label of a non-zero label.
\end{definition}

Consequently, whatever is the considered Gray code version, the number of neighboring labels cannot exceed $2$. In Table~\ref{tab:graycode},
the Gray code has a single neighboring label of zero (indicated in blue color),
while the centered version in Table~\ref{tab:graycode2} has two neighboring labels of zero. It is highly important to note, according to Proposition~\ref{prop:single-double-jumps} and Definition~\ref{def_neighboring_label}, a jump to a neighboring bin (with a neighboring label) has probability $\Theta(\gamma^{-1/2})$ of diversity $1/2$, while other multi-bin jumps have probability $\cO(e^{-\gamma/4})$ of infinite diversity even if one bit only is flipped. E.g., non-neighboring labels $010$ and $001$ associated to bins $i=3$ and $i=7$ in Table~\ref{tab:graycode} correspond to $\cO(e^{-\gamma/4})$ jumps from $000$ of bin $i=0$. In Table~\ref{tab:graycode2}, label $010$ associated to bin $i=6$ is also $\cO(e^{-\gamma/4})$ when a triple jump occurs from the zero-label bin at $i=3$.

\begin{definition}[Maximal Finite Diversity, MFD$(\omega,\ell)$]
\label{def_MFD_code}
Let $\cC$ be a linear $[n,k,d_{Hmin}]_q$ code defined over a finite field 
$\F_q$ of characteristic $2$. 
Let $L=\tfrac{n\log_2(q)}{m}$ be the number of photons per word,
assuming that a TE-QKD frame has $N=2^m$ bins and that $L\ge 1$ is integer, i.e., $\F_q^n \sim \F_2^{n\log_2(q)}=\F_2^{mL}$.
The code $\cC$ is called MFD$(\omega,\ell)$ if there exists a non-zero codeword $c \in \cC$ 
such that the length-$mL$ binary image of $c$ has binary Hamming weight 
$\omega_H(c \in \F_2^{mL})=\omega$ and the word $c$ admits $\ell$ neighboring labels of the all-zero label,
where $d_{Hmin} \le \omega \le mL$ and $0 \le \ell \le \min(\omega,L)$.
\end{definition}

Let $\cC_b$ denote the binary image of $\cC$ over $\F_2^{mL}$. From Definition~\ref{def_MFD_code}, we deduce that the MFD$(\omega, \ell)$ property applies to both $\cC$ and $\cC_b$ at the same time.

As examples of the MFD property, in the case of the binary $[6,3,3]$ code listed on the first row of the Slepian table in Figure~\ref{fig_slepian_6_3_3}, for $m=2$ bits per photon with the Gray code  $11~10~00~01$, the second codeword $c=00~11~01$ makes the code MFD$(3,1)$, the third codeword $c=01~01~11$ makes the code MFD$(4,2)$, and the fourth codeword $c=01~10~10$ makes the code MFD$(3,3)$. 
With the same code and the Gray labeling of Table~\ref{tab:graycode2} for $m=3$,
the second codeword $c=001~101$ makes the code MFD$(3,1)$, the third codeword $c=010~111$ makes the code MFD$(4,0)$, while the code is MFD$(3,0)$ thanks
to the fourth codeword $c=011~010$. Table~\ref{tab_MFD_6_3_3} shows the complete MFD distribution of the $[6,3,3]_2$ code.

\begin{definition}[Full MFD versus MFD Deficient]
\label{def_MFD_code_2}
Let $\cC$ be a linear $[n,k,d_{Hmin}]_q$ code defined over a finite field 
$\F_q$ of characteristic $2$.
Assume that there exists a non-zero codeword $c \in \cC$ whose binary image has  Hamming weight $\omega$ such that $\cC$ is MFD$(\omega,\omega)$, then we say that $\cC$ is full MFD for the Hamming weight~$\omega$.
If $\cC$ is not MFD$(\omega,\omega)$, for all $\omega \ge d_{Hmin}$, 
we say that $\cC$ is MFD deficient.
\end{definition}

From these examples, the reader
should conclude that the most critical property is MFD$(\omega,\omega)$ when dealing with finite diversity. 
Furthermore, as it is obvious from the above definition and the examples, the MFD$(\omega, \ell)$ property of a code depends on $m=\log_2(N)$; it should be denoted MFD$(\omega, \ell, m)$ but we dropped the $m$ for the sake of simplifying the notation.

\begin{table}[!h]
\begin{center}
\caption{Full MFD distribution of the shortened $[6,3,3]_2$ Hamming code.\label{tab_MFD_6_3_3}}
\includegraphics[width=0.55\columnwidth]{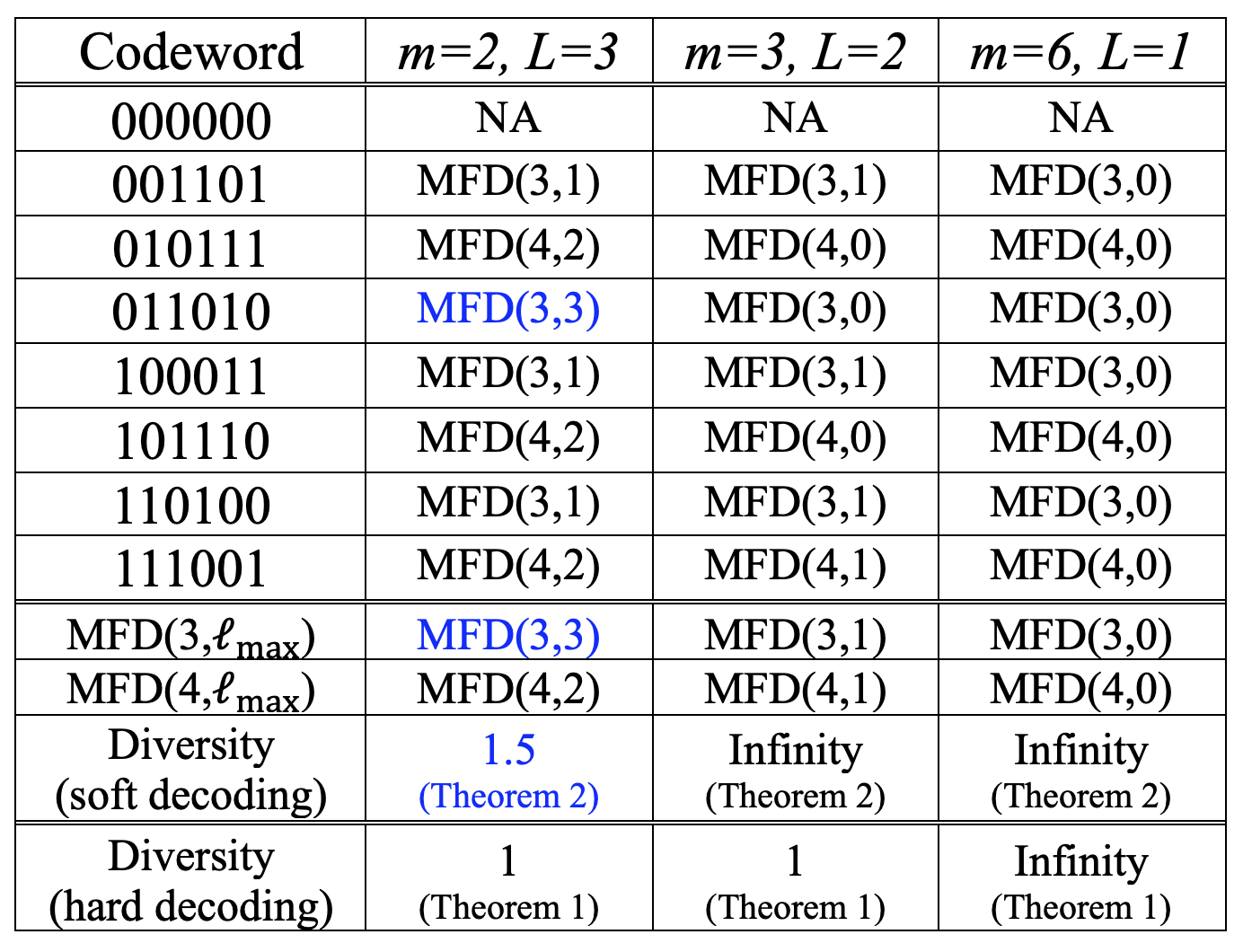}
\end{center}
\end{table}

%%----------------------------------------------------------
%%----------------------------------------------------------
\subsection{Infinite Diversity under Soft-Decision Decoding \label{sec_soft_dec}}
We start this section by a simple example, before generalizing the analysis to any binary or non-binary linear code. Such an approach helps the reader in understanding the abstraction made in Lemma~\ref{lem_necessary_soft} and Theorem~\ref{thm:infinite-diversity-soft}. The APP soft-decision decoder based on (\ref{equ_soft_recon}) is complete, while the algebraic decoder of Section~\ref{sec_alg_dec}
is bounded using (\ref{equ_complete_dec}) with the additional constraint $d_H(\upsilon,c)\le t$. This difference makes the analysis of soft reconciliation much more challenging. 

The $[n=6,k=3,d_{Hmin}=3,t=1]_2$ code of Figure~\ref{fig_6_3_3} and Table~\ref{tab_MFD_6_3_3} does not satisfy the condition of Theorem~\ref{thm:infinite-diversity-hard} for infinite diversity under algebraic decoding, when the TE-QKD frame has $N=8$ bins. 
In such a case, two photons are needed to carry a codeword
while the error-correction radius is $t=1$. Algebraic reconciliation attains a finite diversity of $\tfrac{1}{2}(t+1)=1$ as illustrated in Figure~\ref{fig_6_3_3}. A surprising result is also shown in this same figure: Under soft-decision decoding, the reconciliation probability of error exhibits an abnormally large diversity; the effective diversity measured between $10^{-6}$ and $10^{-7}$ is equal to $9$. Under normal conditions, as in wireless communication channels, diversity for soft decoding should have been $\tfrac{1}{2}d_{Hmin}=1.5$.

\begin{figure}[!h]
\centering 
\vspace{-8mm}
\includegraphics[angle=270, width=0.7\linewidth]{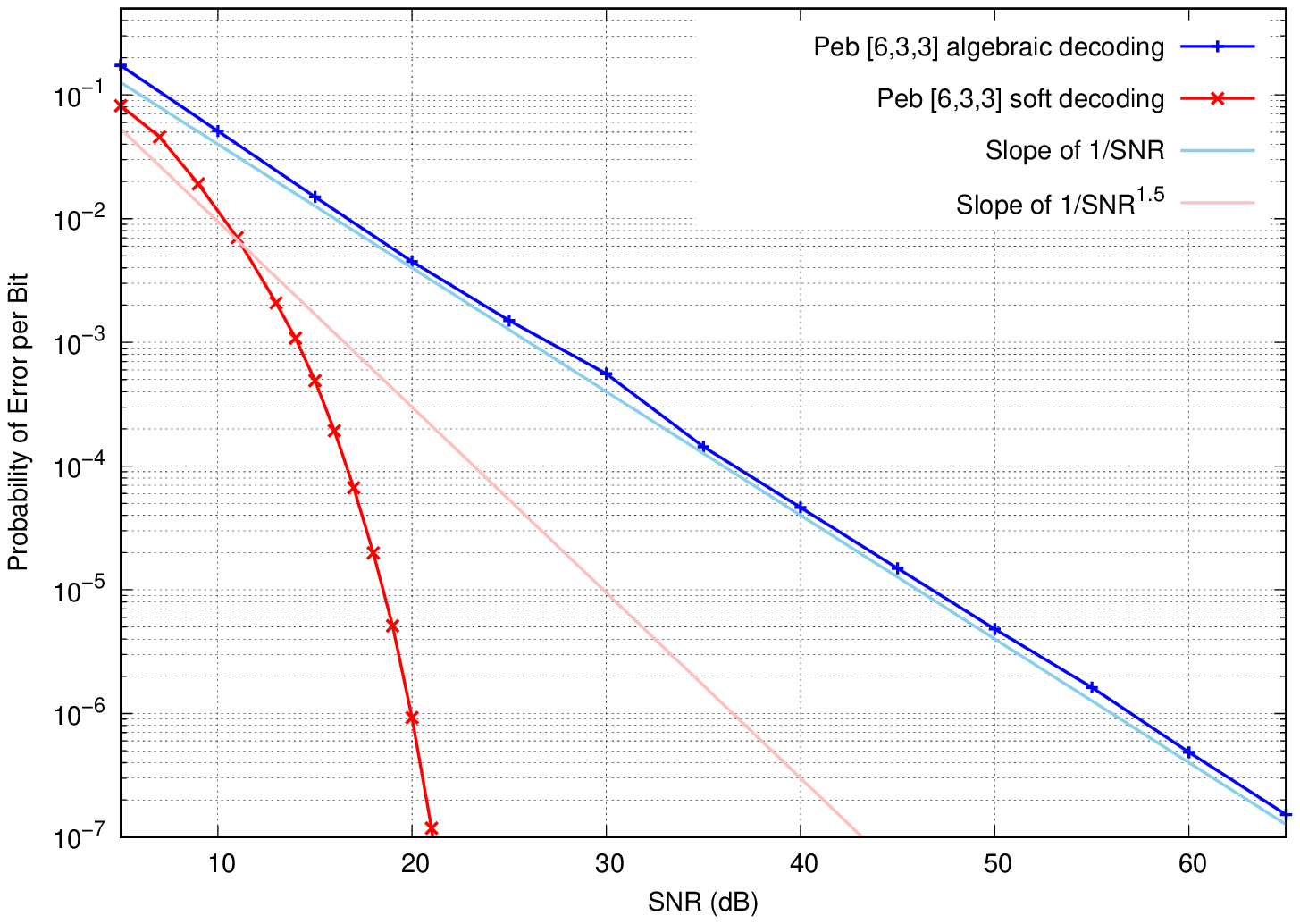}
\vspace{9mm}
\caption{Hard-decision (algebraic) versus soft-decision of a $[6,3,3]_2$ shortened binary Hamming code, $N=8$ bins per frame.
\label{fig_6_3_3}}
\end{figure}

%%\nrd{I moved the MFD definition before Section V-A on 13 June 2026.\\
%%I reviewed the whole section of Theorem 1 and Corollaries 1 and 2.\\
%%15 June 2026: Start this section with the analysis of the $[6,3,3]_2$ code before considering the Lemma of the necessary condition in the general case, and finally Theorem 2.\\}

The fantastic performance of the $[6,3,3,t=1]_2$ in soft reconciliation is due to its infinite diversity, as we will prove in the sequel. It also implies that two single-bin jumps corresponding to photon positions close to the bin border are corrected by the decoder, otherwise diversity becomes $\tfrac{1}{2}d_{Hmin}$. In other words, in presence of soft information, the $[6,3,3,t=1]_2$ code is correcting critical double errors, those that would create a finite diversity. It is well known that a soft-decision decoder can correct multiple errors (not all) up to $d_{Hmin}-1$ erroneous binary digits. A simple way to understand this capability is to assume that $d_{Hmin}-1$ erasures are declared on the $d_{Hmin}-1$ less reliable positions then filling these erasures via an algebraic decoder. Hence, many errors of weight up to $d_{Hmin}-1$ are corrected, but not all of them.\\

Now, we consider one critical example of two single-bin jumps. The number of bins is set to $N=8$ with the Gray labeling of Table~\ref{tab:graycode} (this choice has no effect on the analysis result). 
The $[6,3,3]_2$ code $\cC$ is employed for soft reconciliation, where $L=n/m=6/3=2$ photons per word.
Without loss of generality, we assume that Alice's syndrome is $s_A=000$ ($e_A=000~000$) and Alice's word is $c_A=000~000$. Bob's word before reconciliation is $c_B=100~100$ with syndrome $s_B=111$, this is the seventh word on the third row of the Slepian table in Figure~\ref{fig_slepian_6_3_3}. Given $c_A$ and $c_B$, we have $0 \le X_1 < 1$ and $0 \le X_2 < 1$ for the photons positions for Alice, and $1 \le Y_1 < 2$ and $1 \le Y_2 < 2$ for the photons positions for Bob. The nearest word to $c_B$ in the coset $e_A+\cC$ 
is $c'_A=110~100$ at Hamming distance $1$ located in the seventh column of the first row in the Slepian table. Given the soft channel output $(Y_1=y_1, Y_2=y_2)$,
the {\em a posteriori} probability of $c_A$ is denoted by $APP_0$ (first codeword on the first row) and the {\em a posteriori} probability of $c'_A$ is denoted by $APP_6$ (seventh codeword on the first row). Using (\ref{equ_app1_approx}) and (\ref{equ_app2_approx}), we obtain
\begin{align}
APP_0 &= APP(000|y_1) \cdot APP(000|y_2) \nonumber \\ 
&= APP(i=0,j=1|y_1) \cdot APP(i=0,j=1|y_2) \nonumber \\ 
& \propto \left[ \tfrac{1}{2} e^{-\tfrac{(y_1-1)^2}{2\sigma^2}} - \tfrac{1}{2} e^{-\tfrac{y_1^2}{2\sigma^2}}\right] 
\cdot \left[ \tfrac{1}{2} e^{-\tfrac{(y_2-1)^2}{2\sigma^2}} - \tfrac{1}{2} e^{-\tfrac{y_2^2}{2\sigma^2}}\right], \label{equ_app0}
\end{align}
and
\begin{align}
APP_6 &=APP(110|y_1) \cdot APP(100|y_2) \nonumber \\
&=APP(i=2,j=1|y_1) \cdot APP(i=1,j=1|y_2) \nonumber \\ 
& \propto \left[ \tfrac{1}{2} e^{-\tfrac{(y_1-2)^2}{2\sigma^2}} - \tfrac{1}{2} e^{-\tfrac{(y_1-3)^2}{2\sigma^2}}\right] 
\cdot \left[ 1-\tfrac{1}{2} e^{-\tfrac{(y_2-1)^2}{2\sigma^2}} - \tfrac{1}{2} e^{-\tfrac{(y_2-2)^2}{2\sigma^2}}\right]. \label{equ_app6}
\end{align}

\begin{figure}[!h]
\centering 
\includegraphics[width=0.6\linewidth]{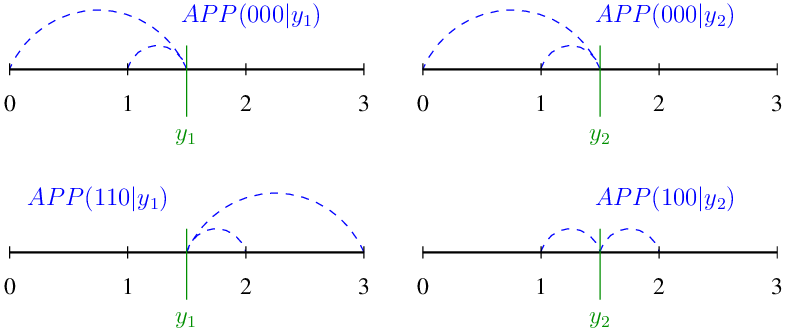} 
\caption{Euclidean distances involved in the APP evaluation in (\ref{equ_app0}) and (\ref{equ_app6}). The APP per photon includes the squared Euclidean distances to the two borders of a bin. Binary labels are defined in Table~\ref{tab:graycode}.
\label{fig_app_distances}}
\end{figure}

For $\sigma^2 \ll 1$, the soft reconciliation of $c_B$ will yield one of the two competing words, $c_A$ or $c'_A$, while all other words have $APP_i$, $i\ne 0$ and $i\ne 6$, negligible with respect to $APP_0$ and $APP_6$.
Note that words $110~110$ and $100~110$ do not belong to Alice's coset
and hence do not compete with $c_A$ or $c'_A$ in the soft reconciliation process. Figure~\ref{fig_app_distances} illustrates the Euclidean distances to the bin borders appearing in expressions (\ref{equ_app0}) and (\ref{equ_app6}). The smallest distance shall define the dominating term for $\sigma^2 \ll 1$, i.e., at high signal-to-noise ratio.

For the purpose of determining the diversity order in this competition between $c_A$ and $c'_A$, let us find the decision boundary in the $(y_1, y_2)$ bi-dimensional plane, for $(y_1, y_2) \in [1,2[^2$. We write $APP_0=APP_6$. 
In the equality $(\ref{equ_app0})=(\ref{equ_app6})$, we neglect larger distances in the three {\em a posteriori} probabilities per photon, $APP(000|y_1), APP(000|y_2), APP(110|y_1)$, as illustrated in Figure~\ref{fig_app_distances}, i.e., $e^{-\tfrac{d_2^2}{2\sigma^2}} \ll e^{-\tfrac{d_1^2}{2\sigma^2}}$, if $d_2 > d_1$, for $\sigma^2 \ll 1$.
We get the following equation for the decision boundary
\begin{align*}
\left[ \tfrac{1}{2} e^{-\tfrac{(y_1-1)^2}{2\sigma^2}} + \cO(e^{-\gamma/2}) \right] \cdot \left[ \tfrac{1}{2} e^{-\tfrac{(y_2-1)^2}{2\sigma^2}} + \cO(e^{-\gamma/2}) \right] \\
= \left[ \tfrac{1}{2} e^{-\tfrac{(y_1-2)^2}{2\sigma^2}} + \cO(e^{-\gamma/2}) \right] \cdot \left[ 1 - \tfrac{1}{2} e^{-\tfrac{(y_2-1)^2}{2\sigma^2}} - \tfrac{1}{2} e^{-\tfrac{(y_2-2)^2}{2\sigma^2}} \right],
\end{align*}
which yields
\begin{equation} 
y_1=-\sigma^2 \cdot \log\left[ 2 e^{\tfrac{(y_2-1)^2}{2\sigma^2}} - e^{\tfrac{2(y_2-3/2)}{2\sigma^2}}\right] + \frac{3}{2} + \cO(e^{-\gamma/2}), ~~~y_2 \in [1,2[.
\end{equation}
At vanishing noise variance, the decision boundary equation simplifies to
\begin{equation} 
y_1= - \frac{1}{2} (y_2-1)^2+\frac{3}{2}, ~~~y_2 \in [1,2[,
\end{equation}
or equivalently
\begin{equation} 
\label{equ_decision_boundary}
y_2=\sqrt{-2y_1+3}+1, ~~~y_1 \in [1, 3/2].
\end{equation}

The left plot in Figure~\ref{fig_decision_boundary} shows the decision boundary (\ref{equ_decision_boundary}) in the bi-dimensional plane within the range~$[1,2[^2$. The right plot emphasizes a square of side $\Delta$ such that the bottom left corner is $(1,1)$ and the top right corner is found by the intersection of the boundary with the first bisector line $y_1=y_2$. In this special case, for the $[6,3,3]_2$ code, we have
$\Delta=\sqrt{2}-1$ since the solution of (\ref{equ_decision_boundary}) for $y_1=y_2$ is $\sqrt{2}$. This special square area will be referred to as the demilitarized zone (DMZ) between the two competing words $c_A$ and $c'_A$, knowing that an error occurs if $c'_A$ wins. Hence, the integral outside the DMZ over the distribution of $(Y_1, Y_2) \in [1,2[^2$ yields an upper bound of the probability of error. The next lemma determines how the probability of being outside the DMZ, beyond $\Delta$, is vanishing at high signal-to-noise ratio. Only the first statement in Lemma~\ref{lem_DMZ} is used in the proof of Theorem~\ref{thm:infinite-diversity-soft}. The other three statements could be used to validate the choice of $\Delta$ and to undergo the performance analysis for a particular code. 

%%\nrd{We put here the Lemma on the DMZ, $e^{-\Delta^2\gamma/4}$.\\
%%The analysis above that is limited to the $[6,3,3]$ code will now be genetalized to any code in the next lemma (Necessary Condition) and the next theorem on the infinite diversity condition under soft reconciliation.}
%%~\\

\begin{figure}[!h]
\vspace{-5mm}
\centerline{
\hspace{-0.5cm}\includegraphics[angle=270,width=0.65\linewidth]{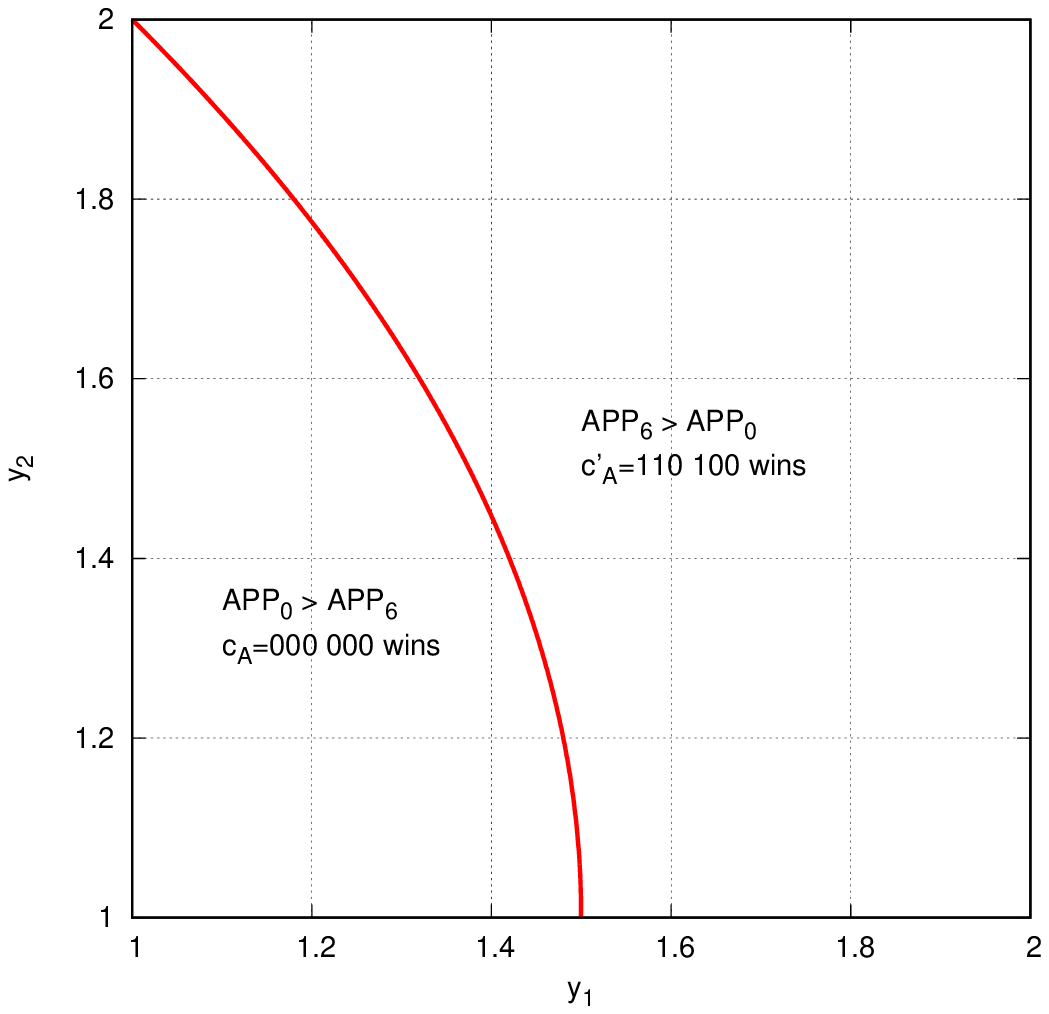}~~~~ 
\hspace{-3cm}\includegraphics[angle=270,width=0.65\linewidth]{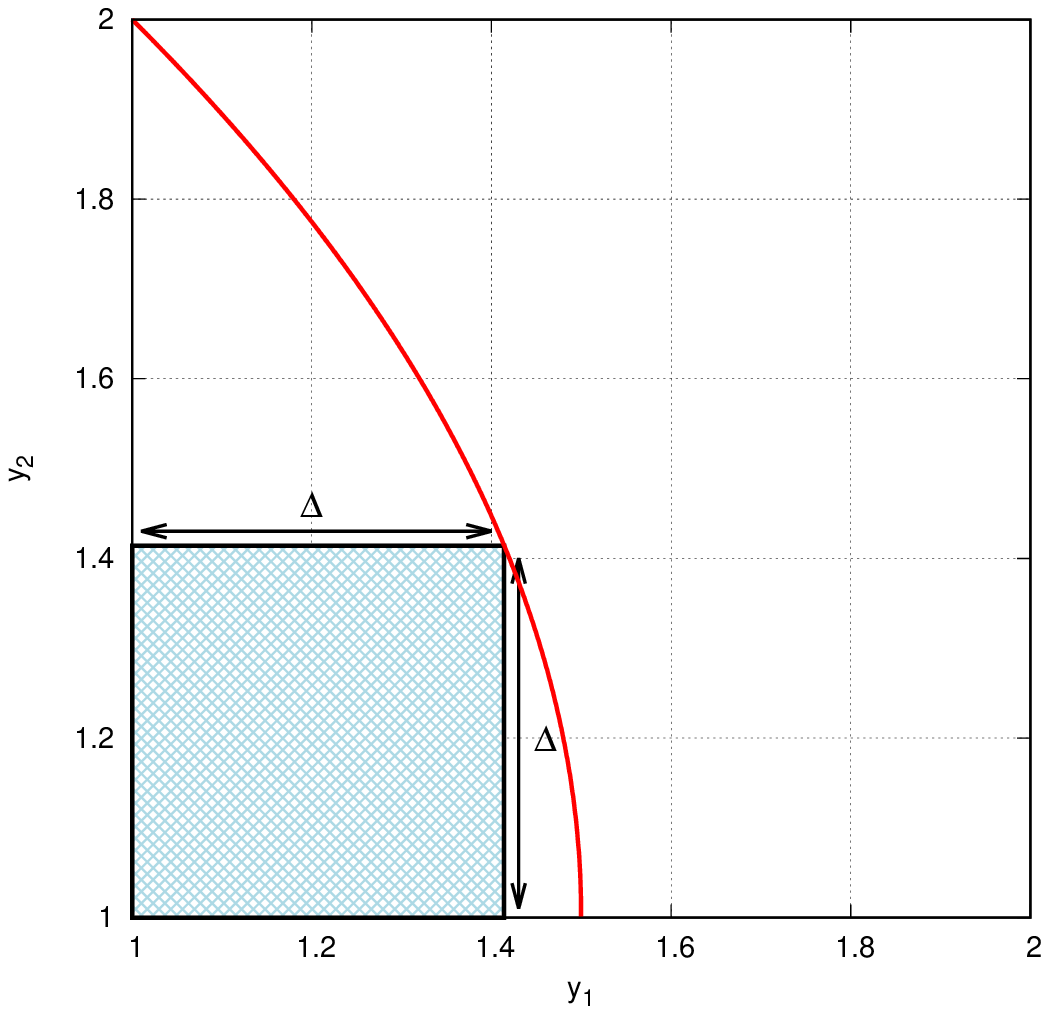}}
\begin{center}
\vspace{5mm}
\caption{(a) Left: the decision boundary separating the region of $c_A$ (below) from the region of $c'_A$ (above).
(b) Right: the demilitarized zone (hashed) defined by the square of side $\Delta$. \label{fig_decision_boundary}}
\end{center}
\end{figure}

%%\nrd{Lemma 5 updated on June 25.\\}
%%\nrd{Siyao reviewed Lemmas 5 and 6 on June 25.\\}

\begin{lemma}[Demilitarized Zone (DMZ)]
\label{lem_DMZ}
Assume that Alice's detector measures the photon position $X$
in the range $[i, i+1[$, i.e., the bin is $\hX=\hx=i$.
Define a demilitarized zone by the range $[i+1, i+1+\Delta]$,
for $0<\Delta<1$.
Then, for $\sigma\ll 1$ (or equivalently $\gamma \gg 1$), 
the probability that Bob's detector measures a photon position beyond the DMZ is given by (four cases):
\begin{itemize}
\item Fixed width $\Delta$,  $0<\Delta<1$ and vanishing $\sigma$.\\
\begin{equation}
\pr\left( Y \in [i + 1 + \Delta, i+2[ ~|~ X \in [i, i+1[\right) 
~=~ \cO\left(\frac{e^{-\frac{\Delta^2 \gamma}{4}}}{\gamma^{3/2}}\right),
\end{equation}
and could be simplified to
\begin{equation}
\pr\left( Y \in [i + 1 + \Delta, i+2[ ~|~ X \in [i, i+1[\right) 
~=~\cO\left(e^{-\frac{\Delta^2 \gamma}{4}}\right),
\end{equation}
In other words, jumping over a DMZ of any non-zero fixed width 
$\Delta$ is an event of infinite diversity.
\item For both $\Delta$ and $\sigma$ vanishing, and a ratio $\sigma/\Delta \rightarrow 0$.
\begin{equation}
\pr\left( Y \in [i + 1 + \Delta, N[ ~|~ X \in [i, i+1[\right) 
~=~\cO\left(\frac{e^{-\frac{\Delta^2 \gamma}{4}}}{\sqrt{\gamma}}\right),
\end{equation} 
stating that diversity is infinite when $\sigma$ vanishes faster than the DMZ width.
\item For both $\Delta$ and $\sigma$ vanishing but a fixed ratio $\Delta/\sigma=\kappa > 0$.
\begin{equation}
\pr\left( Y \in [i + 1 + \Delta, N[ ~|~ X \in [i, i+1[\right) 
~=~\cO\left(\frac{f(\kappa)}{\sqrt{\gamma}}\right)=\cO\left(\frac{1}{\sqrt{\gamma}}\right),
\end{equation} 
where $f(\kappa)=\tfrac{e^{-\kappa^2/4}}{\sqrt{\pi}}-\kappa~Q(\tfrac{\kappa}{\sqrt{2}})$. The function $f(\kappa)$
is asymptotic to $-\kappa/2+1/\sqrt{\pi}$ for small $\kappa$
and asymptotic to $2e^{-\kappa^2/4}/(\kappa^2\sqrt{\pi})$ for large $\kappa$. Thus, diversity is finite of order $1/2$ when $\sigma$ and $\Delta$ vanish at the same rate.
\item For both $\Delta$ and $\sigma$ vanishing, and a ratio $\Delta/\sigma \rightarrow 0$.
\begin{equation}
\pr\left( Y \in [i + 1 + \Delta, N[ ~|~ X \in [i, i+1[\right) 
~=~\frac{1}{\sqrt{\pi\gamma}}-\tfrac{\Delta}{2}+\cO\left(e^{-\frac{\gamma}{4}}\right)
~=~ \Theta(\frac{1}{\sqrt{\gamma}}).
\end{equation}
In other words, jumping over a DMZ of width vanishing faster than the noise variance is an event of finite diversity of order $1/2$.
\end{itemize}
\end{lemma}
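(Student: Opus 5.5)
Work in the $X-Y$ model of Section~\ref{sec_probability}: $Y=X+Z$, $Z\sim\mathcal{N}(0,2\sigma^2)$, $X$ uniform on $[i,i+1[$. Writing $X=i+v$, $v\in[0,1[$, all four statements reduce to one-dimensional integrals of Gaussian tails:
\begin{equation*}
\pr\left(Y\in[i+1+\Delta,b[ \,|\, \hX=i\right)=\int_0^1\left[Q\left(\tfrac{1+\Delta-v}{\sigma\sqrt2}\right)-Q\left(\tfrac{b-i-v}{\sigma\sqrt2}\right)\right]dv ,
\end{equation*}
with $b=i+2$ in the first statement and $b=N$ (frame borders neglected) in the other three. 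The substitution $s=(1+\Delta-v)/(\sigma\sqrt2)$ turns each term into $\sigma\sqrt2\int_{a}^{a'}Q(s)\,ds$. Here $a=\Delta/(\sigma\sqrt2)$ and $a'=(1+\Delta)/(\sigma\sqrt2)$.

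The integral is evaluated with the antiderivative
\begin{equation*}
\int Q(s)\,ds = sQ(s)-\tfrac{1}{\sqrt{2\pi}}e^{-s^2/2}.
\end{equation*}
This gives $\sigma\sqrt2\,[g(a)-g(a')]$, where $g(s)=\tfrac{1}{\sqrt{2\pi}}e^{-s^2/2}-sQ(s)\ge 0$ is decreasing. The subtracted $Q\left(\tfrac{b-i-v}{\sigma\sqrt2}\right)$ terms have arguments at least $(1-\Delta)/(\sigma\sqrt2)$ in the first case, and at least $1/(\sigma\sqrt2)$ in the others. So they only add corrections smaller than the main term, bounded by $\cO(e^{-\gamma/4})$ as in Appendix~\ref{app:bigO}. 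The term $g(a')$ is of the same type.

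The four cases then come from the asymptotics of $g$. For large $s$, the tail expansion $Q(s)=\tfrac{e^{-s^2/2}}{s\sqrt{2\pi}}(1-s^{-2}+\cO(s^{-4}))$ gives $g(s)\sim \tfrac{e^{-s^2/2}}{s^2\sqrt{2\pi}}$.
\begin{itemize}
\item \emph{Fixed $\Delta$.} With $a=\Delta\sqrt{\gamma/2}$ we get $\sigma\sqrt2\,g(a)=\cO(\gamma^{-1/2}\cdot\gamma^{-1}e^{-\Delta^2\gamma/4})$. This is the first statement, and dropping the polynomial factor gives its simplified form.
\item \emph{$\sigma/\Delta\to0$.} Again $a\to\infty$, so the same expansion gives a bound of the form $\cO\left(\gamma^{-1/2}e^{-\Delta^2\gamma/4}\right)$, since $a^{-2}\to0$ can be discarded.
\item \emph{$\Delta/\sigma=\kappa$ fixed.} Then $a=\kappa/\sqrt2$ is fixed, and $\sqrt2\,g(\kappa/\sqrt2)$ is exactly $f(\kappa)=\tfrac{e^{-\kappa^2/4}}{\sqrt\pi}-\kappa Q(\kappa/\sqrt2)$, up to the factor $\sigma=\gamma^{-1/2}$. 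The small-$\kappa$ behaviour follows from $Q(0)=1/2$. The large-$\kappa$ behaviour follows from the tail expansion above, which gives $2e^{-\kappa^2/4}/(\kappa^2\sqrt\pi)$.
\item \emph{$\Delta/\sigma\to0$.} Expanding $g(a)=\tfrac{1}{\sqrt{2\pi}}-a/2+\cO(a^2)$ gives $\sigma\sqrt2\,g(a)=\tfrac{1}{\sqrt{\pi\gamma}}-\tfrac{\Delta}{2}+\dots$. This is consistent with Lemma~\ref{lem_X-Y-1-jump}, which is the case $\Delta=0$.
\end{itemize}

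The main obstacle is bookkeeping rather than the substance. In the fourth case the remainder $\cO(a^2)\sigma=\cO(\Delta^2/\sigma)$ is not exponentially small, so the stated $\cO(e^{-\gamma/4})$ must be read as covering only the border terms. I would state it as $o(\Delta)$ and still conclude $\Theta(\gamma^{-1/2})$.

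In the first case, care is needed when handling the upper limit $i+2$ versus $N$. Here I would first show that the upper-limit and two-bin contributions are $\cO(e^{-\gamma/4})$, using Lemma~\ref{lem_X-Y-2-jump}. These are dominated by $e^{-\Delta^2\gamma/4}$ because $\Delta<1$, which justifies absorbing them.
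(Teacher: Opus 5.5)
Your proposal is correct and follows the paper's own route: the $X$--$Y$ model, exact integration of $\int_0^1 Q(\cdot)\,dv$ via the antiderivative of $Q$, and the same asymptotic expansions in the four regimes. Your $\sigma\sqrt{2}\,[g(a)-g(a')]$ is exactly the closed form the paper writes before expanding. Your caveat on the fourth case is justified, since the remainder there contains the term $\frac{\Delta^2}{4\sqrt{\pi}\,\sigma}(1+o(1))=o(\Delta)$, which is not $\cO(e^{-\gamma/4})$ in general; the paper's proof glosses over it, but it leaves the $\Theta(\gamma^{-1/2})$ conclusion intact. Also, in the first case the subtracted border term has argument $(2-v)/(\sigma\sqrt{2})\ge 1/(\sigma\sqrt{2})$, not merely $(1-\Delta)/(\sigma\sqrt{2})$, and since it is subtracted it can simply be dropped for the upper bound.
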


\begin{figure}[!h]
\centerline{\includegraphics[width=0.4\linewidth]{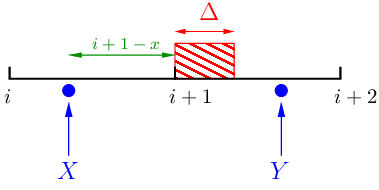}}
\begin{center}
\caption{Representation of the two bins $i$ and $i+1$ with the demilitarized zone of width $\Delta$.\label{fig_DMZ}}
\end{center}
\end{figure}

\begin{IEEEproof} 
The demilitarized zone is illustrated by the red rectangle in Figure~\ref{fig_DMZ}. The first case where $\Delta$ is fixed can be easily solved by using the $X-Y$ model as in Lemmas~\ref{lem_X-Y-1-jump} and \ref{lem_X-Y-2-jump}. We have
\begin{align}
& \pr\left( Y \in [i + 1 + \Delta, i+2[ ~|~  X \in [i, i+1[\right)  \nonumber \\
&= 
Q \left( \frac{i+1 - x + \Delta}{\sigma \sqrt{2}} \right)
-
Q \left( \frac{i+1 - x + 1}{\sigma \sqrt{2}} \right) \label{equ_Q_minus_Q}\\
& \le  
Q \left( \frac{i+1 - x + \Delta}{\sigma \sqrt{2}} \right) \nonumber \\
&\le Q\left( \frac{\Delta}{ \sigma \sqrt{2}}\right) 
= \cO\left( e^{- \frac{\Delta^2 \gamma}{4}} \right) \nonumber
\end{align}
by considering the high SNR regime, $\gamma = \frac{1}{\sigma^2} \gg 1$. Notice that we did not even integrate over $x$. However, such a quick upper bound fails in the case of a vanishing width $\Delta$.
Now, we proceed in a more Babylonian way that exactly solves the four cases. The distance $i+1-x$ in (\ref{equ_Q_minus_Q}) is replaced by $v$.
Also, at high signal-to-noise ratio, $Q(\tfrac{v+1}{\sigma\sqrt{2}})$ and $Q(\tfrac{v+N-(i+1)}{\sigma\sqrt{2}})$ are small with respect to $Q(\tfrac{v+\Delta}{\sigma\sqrt{2}})$, so we dropped the second $Q()$ function and replaced the interval $[i+1+ \Delta, i+2[$ by $[i + 1 + \Delta, N[$. In other words, when frame border effects are neglected, $N$ acts as $\infty$. The exact integration of the $Q()$ function that measures the probability beyond $i+1-x+\Delta=v+\Delta$ yields

\begin{align}
&\pr\left( Y \in [i + 1 + \Delta, N[ ~|~ X \in [i, i+1[\right) 
~=~\int_0^1 Q\left( \tfrac{v+\Delta}{\sigma\sqrt{2}}\right) dv \nonumber\\
&=~\tfrac{\sigma}{\sqrt{\pi}}\left[e^{-\tfrac{\Delta^2}{4\sigma^2}}-e^{-\tfrac{(\Delta+1)^2}{4\sigma^2}}\right]+
\left[(\Delta+1)Q\left(\tfrac{\Delta+1}{\sigma\sqrt{2}}\right)-\Delta Q\left(\tfrac{\Delta}{\sigma\sqrt{2}}\right)\right]. \label{equ_I_a_b}
\end{align}
For the sake of simplicity and space, the calculus details are not shown. At a fixed $\Delta$ and a vanishing $\sigma$, 
(\ref{equ_I_a_b}) becomes
\begin{equation*}
\frac{2\sigma^3}{\sqrt{\pi}\Delta^2} ~e^{-\tfrac{\Delta^2}{4\sigma^2}} ~-~ \frac{4\sigma^5}{\sqrt{\pi}\Delta^4} ~e^{-\tfrac{\Delta^2}{4\sigma^2}}
~+~\cO\left(\sigma^3 e^{-\tfrac{(\Delta+1)^2}{4\sigma^2}}\right),
\end{equation*}
and for $\sigma/\Delta \rightarrow 0$, when both $\sigma$ and $\Delta$ are vanishing, (\ref{equ_I_a_b}) becomes
\begin{equation*}
\frac{e^{-\frac{\Delta^2 \gamma}{4}}}{\sqrt{\pi\gamma}}~+~
\cO\left( e^{-\frac{\gamma}{4}}\right).
\end{equation*}
Similarly, for a fixed ratio $\Delta/\sigma=\kappa > 0$,
when both $\Delta$ and $\sigma$ are vanishing, (\ref{equ_I_a_b}) becomes
\begin{equation*}
~\sigma \left( \frac{e^{-\kappa^2/4}}{\sqrt{\pi}} 
- \kappa ~Q\left(\frac{\kappa}{\sqrt{2}}\right)\right) ~+~\cO\left( e^{-\frac{\gamma}{4}}\right),
\end{equation*} 
and finally, at a vanishing $\Delta$ smaller than $\sigma$, (\ref{equ_I_a_b}) becomes
\begin{equation*}
\frac{\sigma}{\sqrt{\pi}} ~-~ \frac{\Delta}{2} ~+~\cO\left( e^{-\frac{\gamma}{4}}\right),
\end{equation*}
which proves all four cases in this lemma, where $\gamma=1/\sigma^2$.
\end{IEEEproof}

For the $[6,3,3]$ binary code, we have established that $\Delta=\sqrt{2}-1$ from (\ref{equ_decision_boundary}). Denote by $\pr(c_A \rightarrow c'_A)=\pr(APP_6>APP_0)$ the probability of error that corresponds to decoding $c'_A$ instead of $c_A$. Then, using Lemma~\ref{lem_DMZ} for $X_1,X_2 \in [0,1[$
and $Y_1,Y_2 \in [1,2[$, we have the following union bound of the probability of error
\[
\pr(c_A \rightarrow c'_A) \le \pr(Y_1\ge 1+\Delta)+\pr(Y_2\ge 1+\Delta)
~=~\cO\left(e^{- \frac{\Delta^2 \gamma}{4}}\right),
\]
which proves that $\pr(c_A \rightarrow c'_A)$ has infinite diversity. 
Hence, as explained above, the soft reconciliation via the $[6,3,3]$ code is
correcting all double errors ($d_{Hmin}-1=2$) inside the DMZ, i.e., close to the correct bin observed by Alice. Only double errors of infinite diversity, beyond the DMZ, could push the soft decoder to fail.

As a summary for this specific example where the $[6,3,3,t=1]_2$ code requires $L=2$ photons per word over a TE-QKD frame of $8$ bins, we can state: a) One single-bin jump of diversity $1/2$ is always corrected by the soft decoder, b) Two single-bin jumps of width less than $\Delta$ are always corrected by the soft decoder, c) Only double single-bin jumps beyond $\Delta$ and multi-bin jumps could create decoding errors, but they all have infinite diversity as per Proposition~\ref{prop:single-double-jumps} and Lemma~\ref{lem_DMZ}. Consequently, soft reconciliation with the $[n=6,k=3,d_{Hmin}=3,t=1]_2$ binary code over a frame of $N=2^m=8$ bins achieves infinite diversity, as previously shown in the surprising result of Figure~\ref{fig_6_3_3}.
The reader should notice that $L=n/m=2 \le d_{Hmin}-1=2$ in this example. Also, the third column for $m=3$ in Table~\ref{tab_MFD_6_3_3} does not include any MFD$(\omega, \omega)$ term; the $[6,3,3]_2$ code is MFD deficient for $m=3$.\\ 

\begin{proposition}
\label{prop_w_w_L}
Let $\cC_b$ be a linear $[n,k,d_{Hmin}(\cC_b)]_2$ binary code. 
Consider a TE-QKD frame of $N=2^m$ bins and 
let $L=\tfrac{n}{m}$ be the number of photons per codeword, assuming
$n$ is multiple of $m$. If $\cC_b$ is MFD$(\omega, \omega)$,
for some Hamming weight~$\omega$, $d_{Hmin}(\cC_b) \le \omega \le n$, 
then $L \ge \omega \ge d_{Hmin}(\cC_b)$.\\
So, the full-MFD property implies that $L\ge d_{Hmin}(\cC_b)$.
Equivalently, $L<d_{Hmin}(\cC_b)$ implies MFD deficiency.
\end{proposition}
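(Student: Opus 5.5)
The plan is a counting argument on the photon labels of a full-MFD codeword. Fix $\omega$ and suppose $\cC_b$ is MFD$(\omega,\omega)$. By Definition~\ref{def_MFD_code}, there is a non-zero codeword $c\in\cC_b$ whose binary image has Hamming weight $\omega$ and which admits $\ell=\omega$ neighboring labels of the all-zero label. Split $c$ into its $L=n/m$ consecutive $m$-bit photon labels $c=(c_1,\ldots,c_L)$, $c_\ell\in\F_2^m$; this split is well defined because $m$ divides $n$.

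Each photon label $c_\ell$ is either the all-zero label, a neighboring label of zero, or some other label. By Definition~\ref{def_neighboring_label} and the Gray property, a neighboring label has binary Hamming weight exactly $1$. Moreover, each photon contributes at most one neighboring label, since a label is one element of $\F_2^m$. So the number $\ell$ of neighboring labels in $c$ is at most the number of photons, which gives $\ell\le L$. This bound is already built into Definition~\ref{def_MFD_code} through $\ell\le\min(\omega,L)$, but I will restate it explicitly rather than only citing it.

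With $\ell=\omega$, we get $\omega\le L$. Also $\omega\ge d_{Hmin}(\cC_b)$, since $c$ is a non-zero codeword of a linear code and its weight is at least the minimum distance. Chaining the two inequalities gives $L\ge\omega\ge d_{Hmin}(\cC_b)$, which is the claim. As a consistency check, the $\omega$ neighboring labels each carry weight $1$ and already account for total weight $\omega$. All remaining photon labels must therefore be all-zero. So a full-MFD codeword consists of $\omega$ weight-one neighboring labels and $L-\omega$ zero labels, which again requires $L\ge\omega$.

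The last two sentences of the statement then follow. The full-MFD property means MFD$(\omega,\omega)$ holds for some $\omega\ge d_{Hmin}(\cC_b)$, so $L\ge d_{Hmin}(\cC_b)$. Taking the contrapositive, $L<d_{Hmin}(\cC_b)$ rules out MFD$(\omega,\omega)$ for every $\omega$, which is MFD deficiency in the sense of Definition~\ref{def_MFD_code_2}. I do not expect a real obstacle here. The only point needing care is to state clearly that a "neighboring label" is counted per photon position, so that $\ell$ counts photons and not bits.
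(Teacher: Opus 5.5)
Your proposal is correct and follows essentially the same route as the paper: you count the photons carrying the $\omega$ zero-neighboring labels to get $L \ge \omega$, chain this with $\omega \ge d_{Hmin}(\cC_b)$, and obtain MFD deficiency for $L < d_{Hmin}(\cC_b)$ by contraposition, which matches the paper's pigeonhole remark. Your extra observation that the remaining $L-\omega$ labels must be all-zero is a valid consistency check, but the conclusion does not need it.
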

\begin{IEEEproof}
$\cC_b$ is MFD$(\omega, \omega)$. From Definition~\ref{def_MFD_code},
we know that there exists a non-zero codeword $c \in \cC_b$ such that $c$ contains $\ell=\omega$ labels that are neighbors of zero. Hence, $c$ has at least $\omega$ labels or equivalently it is using at least $\omega$ photons. So, $L$ is greater than or equal to $\omega$. That completes the proof. 
Another way to prove it is by using the Pigeon Hole principle with $L<d_{Hmin}(\cC_b)$.
Then, the code cannot be MFD$(\omega, \omega)$ for all $\omega\ge d_{Hmin}(\cC_b)$.
We conclude that it is MFD deficient. 
\end{IEEEproof}

For the general case of any linear code, the next lemma and the next theorem announce the conditions to let the soft-decision decoder attain infinite diversity in a TE-QKD soft reconciliation.

%%---------------------------------------------------
%%----------- Necessary Condition (Soft) ------------
%%---------------------------------------------------
\begin{lemma}[Necessary Condition of Theorem~\ref{thm:infinite-diversity-soft}]
\label{lem_necessary_soft}
Let $\cC$ be an $[n,k,d_{Hmin}]_q$ linear code, defined over the finite field $\F_q$ of characteristic 2, employed for soft reconciliation on a TE-QKD frame of $N=2^m$ bins. The frame
bins are labeled via a binary Gray code of $m$ bits per label.
Assume that $n\log_2(q)$ is multiple of $m$ for simplicity, i.e.,
$L=\tfrac{n\log_2(q)}{m}$ is an integer.
Let $\cC_b$ be the binary image of $\cC$, i.e., $\cC_b$ has length
$n\log_2(q)$ bits, dimension $k\log_2(q)$, and a minimum Hamming distance $d_{Hmin}(\cC_b) \ge d_{Hmin}(\cC)$.\\
Finally, assume that $\cC$ is MFD$(\omega, \omega)$ for some $\omega$ (full MFD), $d_{Hmin}(\cC_b) \le \omega \le n\log_2(q)$. Then, soft information reconciliation achieves a finite diversity.
\end{lemma}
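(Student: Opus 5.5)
The plan is to exhibit an explicit error event of polynomial probability that forces the MAP soft decoder of (\ref{equ_soft_recon}) to fail. This gives a lower bound $P_{ew}\ge c\,\gamma^{-\omega/2}$ with $c>0$, hence finite diversity (at most $\omega/2$). The strategy mirrors the necessary part of Theorem~\ref{thm:infinite-diversity-hard}, but the decoding region is now defined by APP comparisons rather than a Hamming ball.

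\textbf{Step 1: the dominant pair of competitors.} By linearity and the coset-invariance remark of Section~\ref{sec_reconciliation}, we may take $e_A=0$ and $c_A=0$, so Alice's bins all carry the all-zero label. Let $c\in\cC$ be the non-zero codeword given by the MFD$(\omega,\omega)$ property. Its binary image has weight $\omega$, concentrated on exactly $\omega$ photons. On each of these photons the label of $c$ has weight one and is a neighbor of zero, i.e., it sits in a bin adjacent to Alice's bin. On the remaining $L-\omega$ photons the label of $c$ is zero. Up to a shift of the bin indices, each of the $\omega$ photons has $\hX_\ell=i_\ell$ and the label of $c$ at position $i_\ell\pm1$. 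Without loss of generality, take the right neighbor $i_\ell+1$.

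\textbf{Step 2: the bad event.} Fix a small constant $\delta\in(0,\tfrac12)$. Define $E$ as the event that, for each of the $\omega$ photons, $X_\ell\in[i_\ell+1-\sigma,\,i_\ell+1[$ and $Y_\ell\in[i_\ell+2-\sigma(1+\delta)... ]$. More simply, require $Y_\ell$ to be deep inside bin $i_\ell+1$ but only by $\cO(\sigma)$: $Y_\ell\in[i_\ell+1+\sigma,\,i_\ell+1+2\sigma]$. On the other $L-\omega$ photons, require $Y_\ell$ to lie in the central part of Alice's bin. Using the $X$--$Y$ model and the computation behind Lemma~\ref{lem_X-Y-1-jump}, each single-photon factor has probability $\Theta(\sigma)=\Theta(\gamma^{-1/2})$: $X_\ell$ ranges over a window of width $\sigma$, and given it, $Y_\ell-X_\ell\sim\mathcal N(0,2\sigma^2)$ lands in an $\cO(\sigma)$ window at $\cO(\sigma)$ distance with constant probability. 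This is exactly the third case of Lemma~\ref{lem_DMZ} with $\kappa$ fixed. The other photons contribute probability $1-o(1)$. By memorylessness, $\pr(E)=\Theta(\gamma^{-\omega/2})$.

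\textbf{Step 3: on $E$, $c$ beats $c_A$.} Using (\ref{equ_app1_approx})--(\ref{equ_app2_approx}), for each affected photon we have $APP(i_\ell+1\mid y_\ell)\approx 1-\tfrac12 e^{-(y_\ell-i_\ell-1)^2/2\sigma^2}$. This is bounded below by a constant $1-\tfrac12 e^{-1/2}$. Meanwhile $APP(i_\ell\mid y_\ell)\approx\tfrac12 e^{-(y_\ell-i_\ell-1)^2/2\sigma^2}\le\tfrac12 e^{-1/2}$. Their ratio is therefore strictly greater than $1$, and the unaffected photons contribute identical factors. By (\ref{equ_APP_word}), $APP(c)>APP(c_A)$. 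The MAP decoder thus does not output $c_A$, whatever other codewords do, and $P_{ew}\ge\pr(E)$.

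The main obstacle I anticipate is Step 3's handling of the approximated APP formulas. I must check that the neglected terms, such as $e^{-(y-i-2)^2/2\sigma^2}$ and the normalization, are uniformly exponentially small on $E$. This is where the window $[\sigma,2\sigma]$ matters: the DMZ of Lemma~\ref{lem_DMZ} must shrink at rate $\sigma$, so that a non-vanishing fraction of single-jump configurations lies beyond it. A subsidiary point is the orientation of the neighbor (left versus right) for the centered Gray code, but this is handled by symmetry. Combining the steps gives $P_{ew}=\Omega(\gamma^{-\omega/2})$, so the diversity is at most $\omega/2<\infty$.
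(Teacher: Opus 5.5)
Your proposal is correct and follows the paper's proof. Both arguments take $c_A=0$, use the MFD$(\omega,\omega)$ codeword as the competitor, place Bob's $\omega$ affected photons in the adjacent bin so that the APP comparison favors the competitor, and lower-bound the error by a product of $\omega$ single-jump probabilities, each $\Theta(\gamma^{-1/2})$. The only difference is that you restrict $Y_\ell$ to the window $[\hat{x}_\ell+1+\sigma,\hat{x}_\ell+1+2\sigma]$, where the per-photon APP ratio is bounded away from $1$ uniformly in $\sigma$; this justifies the step more cleanly than the paper's pointwise limit $\sigma\to 0$ of the ratio over the whole adjacent bin, since that limit is not uniform near the bin border, which is exactly where the single-jump probability mass concentrates.
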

\begin{IEEEproof} Note that $n\log_2(q)=mL$. 
So, $\F_q^n$ is isomorphic to $\F_2^{n\log_2(q)} = \F_2^{mL}$;
it could be considered as an Abelian group isomorphism or a vector space isomorphism over $\F_2$. To simplify the notations, we write $c \in \F_q^n$ when referring to the code $\cC$ over $\F_q$ and we write $c \in \F_2^{n\log_2(q)} = \F_2^{mL}$ when referring to the binary image $\cC_b$ over $\F_2$.\\
A codeword $c \in \F_q^n \sim \F_2^{mL}$ contains $mL$ coded bits and  consists of $L$ photons.  Without loss of generality, we assume that Alice's syndrome $s_A$ is zero in $\F_q^{n-k}$, so Alice's coset in $\F_q^n$ is the code $\cC$ itself. Let $c_A \in \cC$ denote the word observed by Alice. Let $x = (x_1, x_2, \ldots, x_L)$ represent the photon positions associated to~$c_A$, with $\phi(c_A) = \hat{x}$. Similarly, let $y = (y_1, y_2, \ldots, y_L)$ be the photon positions observed by Bob with his word $c_B$ satisfying $\phi(c_B) = \hat{y}$. 
The map $\phi()$ was defined in Section~\ref{sec_reconciliation}.

The lemma statement supposes that there exists $\omega\ge d_{Hmin}(\cC_b)$ such that the linear code $\cC_b$ is $MFD(\omega, \omega)$.
Therefore, by linearity, there exists a word $c'_A \in \cC_b$ satisfying $d_H(c_A, c'_A)=\omega$ and $c_A+c'_A$ which has weight $\omega$ includes exactly $\omega$ neighboring labels of zero corresponding to a single-bin jump. Figure~\ref{fig_MFD_omega} sketches $c_A$ and $c'_A$ with the full $MFD(\omega, \omega)$ property. For simplicity, Figure~\ref{fig_MFD_omega} illustrates $c_A$ as the all-zero codeword, but it could be any codeword in $\cC$ or in $\cC_b$. Also, notice that the two neighboring labels of $00\ldots 0$ are $10\ldots 0$ and $00\ldots 1$ if the Gray code is centered as in Table~\ref{tab:graycode2}. 
Without loss of generality,
we assume in this proof that the $\omega$ different bins between $c_A$ and $c'_A$ are the first $\omega$ bins. Proposition~\ref{prop_w_w_L}
guarantees that $L \ge \omega$, thus validating the frame structure illustrated in~Figure~\ref{fig_MFD_omega}.

\begin{figure}[!h]
\centerline{\includegraphics[width=0.55\linewidth]{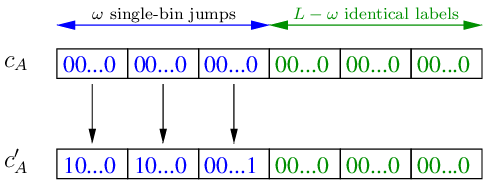}}
\begin{center}
\caption{Alice's word $c_A$ and a competing word $c'_A$ making the code MFD$(\omega, \omega)$ (full MFD), where $\cC$ is linear over $\F_q$.\label{fig_MFD_omega}}
\end{center}
\end{figure}

To maintain homogeneous notations, the bin positions associated to the competing word $c'_A$ are denoted by $\hz=(\hz_1, \hz_2, \ldots, \hz_L)$,
where $c'_A \in \cC \sim \cC_b$ and $\phi(c'_A)=\hz \in \Z_N^L$. Given $y \in [0,N[^L$, a decoding error occurs if $APP(c_A) < APP(c'_A)$. We will make a judicious choice of $y$ to prove that the probability of error satisfies
\begin{equation}
\label{eq:prob_APP_O}
\pr(c_A \rightarrow c'_A)=\pr(APP(c_A ) < APP(c'_A ) ) = \Omega\left(\frac{1}{\gamma^{\omega/2}}\right).
\end{equation}  

For the first $\omega$ bins, $1 \le \ell \le \omega$, where $\hx_{\ell} \le x_{\ell} < \hx_{\ell}+1$ and $\hz_{\ell}=\hx_{\ell}+1$, we consider $y_{\ell} \in ]\hx_{\ell}+1,\hx_{\ell}+2[$,
i.e., $y_{\ell}$ is placed in the bad bin. For the last $L-\omega$ bins, $\omega < \ell \le L$, where $\hx_{\ell} \le x_{\ell} < \hx_{\ell}+1$ and $\hz_{\ell}=\hx_{\ell}$, 
we consider $y_{\ell} \in [\hx_{\ell},\hx_{\ell}+1[$, i.e., $y_{\ell}$ is placed in the good bin. For illustration, Figure~\ref{fig_y_pos_MFD_omega} shows the considered photon positions of Alice, Bob, and the competing word.

\begin{figure}[!h]
\centerline{\includegraphics[width=0.8\linewidth]{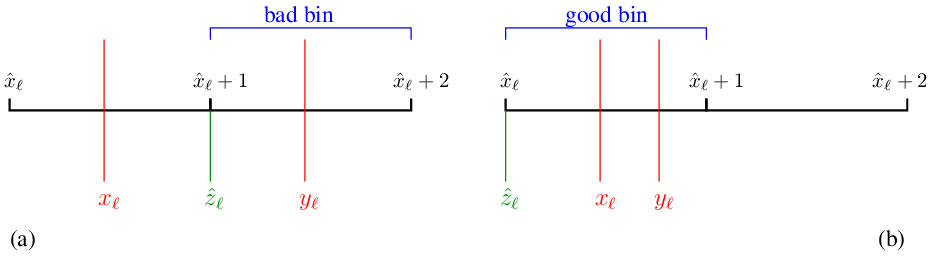}}
\begin{center}
\caption{Photon positions $y_{\ell}$ of Bob for bins $1 \le \ell \le \omega$ (left in (a)) and for bins $\omega< \ell \le L$ (right in (b)).\label{fig_y_pos_MFD_omega}}
\end{center}
\end{figure}

Then, by~\eqref{equ_app1_approx} and~\eqref{equ_app2_approx}, we obtain
%%\nrd{20 June 2026. 3:30am. Joseph reached this point.\\
%%\nrd{We continue starting from here at the next session.\\}
%%\nrd{SL: 3:30 am!!! Joseph, you worked too hard. I have reviewed your edits and really like all the figures.  We can meet on weekends if you want :)}
%%\nrd{==>> Siyao, Sunday after 1pm Florida time if you can. \\}
%%\nrd{Sure, no problem. I am also available 1:30pm -- 3:30 pm on Saturday. }
\begin{align*}
APP(c_A) \propto \prod_{\ell=1}^{\omega} \frac{1}{2} \left( e^{- \frac{(y_{\ell} - \hat{x}_{\ell} -1)^2 }{2 \sigma^2}} - e^{- \frac{(y_{\ell} - \hat{x}_{\ell} )^2 }{2 \sigma^2}} \right) 
\cdot 
\prod_{\ell=\omega+1}^{L}  \left( 1 - \frac{1}{2} e^{- \frac{(y_{\ell} - \hat{x}_{\ell})^2 }{2 \sigma^2}} - \frac{1}{2} e^{- \frac{(y_{\ell} - \hat{x}_{\ell} -1)^2 }{2 \sigma^2}} \right),
\\
APP(c'_A) \propto \prod_{\ell=1}^{\omega}  \left(1 - \frac{1}{2} e^{- \frac{(y_{\ell} - \hat{x}_{\ell} -1)^2 }{2 \sigma^2}} - \frac{1}{2} e^{- \frac{(y_{\ell} - \hat{x}_{\ell} -2 )^2 }{2 \sigma^2}} \right) 
\cdot 
\prod_{\ell=\omega+1 }^{L}  \left( 1 - \frac{1}{2} e^{- \frac{(y_{\ell} - \hat{x}_{\ell})^2 }{2 \sigma^2}} - \frac{1}{2} e^{- \frac{(y_{\ell} - \hat{x}_{\ell} -1)^2 }{2 \sigma^2}} \right).
\end{align*}
Therefore,
\begin{align*}
\frac{APP(c_A)}{APP(c'_A)} = \prod_{\ell=1}^{\omega} \frac{ \frac{1}{2} \left( e^{- \frac{(y_{\ell} - \hat{x}_{\ell} -1)^2 }{2 \sigma^2}} - e^{- \frac{(y_{\ell} - \hat{x}_{\ell} )^2 }{2 \sigma^2}} \right) }{\left(1 - \frac{1}{2} e^{- \frac{(y_{\ell} - \hat{x}_{\ell} -1)^2 }{2 \sigma^2}} - \frac{1}{2} e^{- \frac{(y_{\ell} - \hat{x}_{\ell} -2 )^2 }{2 \sigma^2}} \right)}, ~~~y_{\ell} \in ]\hx_{\ell}+1,\hx_{\ell}+2[,
\end{align*}
which leads to 
\begin{align}
\lim_{\sigma \to 0} ~~\frac{APP(c_A)  }{APP(c'_A) } = \frac{0}{1} = 0, \label{eq:frac_APP}
\end{align}
confirming that the given $y$ corresponds to a reconciliation error. The ratio in (\ref{eq:frac_APP}) also confirms that the open cube on the first 
$\omega$ dimensions
$\prod_{\ell=1}^{\omega} ]\hx_{\ell}+1,\hx_{\ell}+2[ ~\times~ \R^{L-\omega}$ is contained inside the decision region of $c'_A$ in $\R^L$,
where $\times$ represents the Cartesian product. Hence, the pairwise probability of error $\pr(c_A \rightarrow c'_A)$ is bounded from below by the integral over the open cube.

The probability of this error event is
\begin{equation*}
\pr(c_A \rightarrow c'_A) =\pr(APP(c_A) \le APP(c'_A)) 
~\ge~\prod_{\ell=1}^{\omega} \pr(Y_{\ell}\in ]\hat{x}_{\ell} + 1, \hat{x}_{\ell} +2[) 
= \left( \Theta(\gamma^{-\tfrac{1}{2}}) \right)^{\omega},
\end{equation*}
where the last equality is from Proposition~\ref{prop:single-double-jumps}.
Therefore, using Bachmann-Landau notation again, we can write 
$\pr(c_A \rightarrow c'_A) = \Omega(\tfrac{1}{\gamma^{\omega/2}})$, i.e.,
the probability of error has diversity less than or equal to $\omega/2$.
%%\nrd{I am here....Joseph.\\
%%I went back and updated Lemma 5. Ouff.
%%We will continue starting from here. June 21. 1am.\\}
%%\nrd{SL: Yes, the updated Lemma 5 is more rigorous.}
\end{IEEEproof}
%%\nrd{I am here....Joseph. 26 June 2026.\\~\\}
%%\nrd{Siyao is also here.}
%\nrd{=====SL edited on 3 June 2026 ======}

\clearpage
Lemma~\ref{lem_necessary_soft} proves that the MFD$(\omega, \omega)$ 
property (full MFD) implies finite diversity under soft-decision decoding. Equivalently, Lemma~\ref{lem_necessary_soft} just established that infinite diversity under soft-decision decoding requires 
the code to be MFD deficient as a necessary condition.
This is the necessary condition of statement 1) in Theorem~\ref{thm:infinite-diversity-soft} stated below.

For a relatively small $m$, most linear codes 
encountered in practice are MFD$(d_{Hmin},d_{Hmin})$. 
When increasing $m$, as suggested by the infinite diversity conditions in Theorems~\ref{thm:infinite-diversity-hard} and \ref{thm:infinite-diversity-soft}, the code becomes MFD deficient, i.e., it is only MFD$(\omega, \ell)$ for $\ell<\omega$ because non-neighboring labels corresponding to multi-bin jumps appear, thus boosting the diversity to infinity. More MFD examples are found in the sequel in Section~\ref{sec_code_examples}.

The proof of Lemma~\ref{lem_necessary_soft} is based on a lower bound of the probability of error. The lower bound is sufficient to complete the proof, no need for an upper bound or for the analysis of the exact decision regions, until we ask ourselves about the exact diversity order attained by the soft reconciliation process when the code
is MFD$(\omega,\omega)$!
In practice, Monte Carlo simulations show that soft decoding reaches 
a finite diversity of $d_0 \times \omega=\tfrac{1}{2}\omega$ where $\omega=d_{Hmin}(\cC_b)$ in most cases, 
as for coding over wireless fading channels summarized in Section~\ref{sec_div_wireless}.
Is it possible to extend the analysis in the proof of Lemma~\ref{lem_necessary_soft}
to prove that diversity is $\tfrac{1}{2}\omega$ when the code is MFD$(\omega,\omega)$? The answer is yes. We established that proof by analyzing the decision regions boundary under soft decoding in the finite diversity case. We do not include it in this paper to relieve the reader from extra tedious calculus.

%%\nrd{I am here....Joseph. 26 June 2026, afternoon.\\}

%%Without MFD criticality, $L \ge d_{Hmin}$ alone does not automatically guarantee a finite-diversity event. If the relevant minimum-distance codeword differences are one-bit flips that correspond to non-neighboring bin labels, then those events may still require multi-bin jumps and therefore have infinite diversity.

%The following theorem shows that this condition is also sufficient.
%The proof follows the geometry of Gray-labeled TE-QKD frames. 
%When two codewords differ in more bit positions than the number of photons used to transmit them, the disagreement cannot be created solely by single-bin jumps. At least one photon must be displaced by two or more bins, and Proposition~\ref{prop:single-double-jumps} shows that such a displacement has exponential decay probability. Thus, every soft-decision error contains an infinite-diversity component.

%\nrd{=====SL edited on 3 June 2026 ======}

%In TE-QKD, detector jitter and other channel imperfections cause Alice's and Bob's raw keys to be correlated but not identical. Information reconciliation is the process where Alice and Bob communicate over a public classical channel to identify and correct these discrepancies. A common and efficient method for reconciliation is based on \emph{syndrome decoding}, which uses the principles of error-correcting codes.

%%-------------------------------------------
%%-------------- Theorem 2 ------------------
%%-------------------------------------------
\begin{theorem}
\label{thm:infinite-diversity-soft}
%%\nrd{Updated June 19 by Joe and Shaikha.}  
%% Again, another update on 4 July 2026
Let  $\cC$ be a linear $\left[n, k, d_{Hmin}\ge 2, t \right]_q$ code employed for soft reconciliation over a TE-QKD frame of $N=2^m$ bins.
The finite field $\F_q$ over which the code is defined has characteristic 2. Denote by $\cC_b$ the binary image of $\cC$.
Assume that $\gcd(n\log_2(q),m)=\min(n\log_2(q),m)$.\\ 
Then, the following statements are true:\\
1) Soft reconciliation attains infinite diversity if and only if $\cC$ is MFD deficient.\\
2) Soft reconciliation attains finite diversity if and only if $\cC$ is full MFD for some non-zero weight $\omega$, $d_{Hmin}(\cC_b) \le \omega \le n\log_2(q)$.\\
3) If $L=\tfrac{n\log_2(q)}{m} < d_{Hmin}(\cC_b)$, $L \in \N^*$,  then soft reconciliation attains infinite diversity.\\

Statements 1) and 2) given above are equivalent. If $d_{Hmin}=2t+1$, from 3), $L\le 2t$ is a sufficient condition to achieve infinite diversity under soft decoding; to be compared with the condition $L\le t$ under algebraic decoding in Theorem~\ref{thm:infinite-diversity-hard}.
%%
%% July 4, 2026: it is wrong to say that MFD is not needed.
%%\nrd{Note. Add the MFD condition later, after finishing the proof.}
%% No MFD needed in this theorem.
%%
%%
%% Comments below added on July 4, 2026
%%\begin{verbatim}
%%Lemma 6:
%%a) MFD(w,w) => finite
%%b) infinite => non-MFD(w,w)
%%
%%Update for Theorem 2: 
%%1) $\bullet$ MFD(w,w) <=> finite diversity
%%2) $\bullet$ non-MFD(w,w) <=> infinite diversity
%%3) $\bullet$ L<dHmin => non-MFD(w,w) => infinite diversity
%%L < dHmin is a sufficient condition for infinite diversity (soft).
%%But L>= dHmin is not a necessary condition,i.e., infinite diversity could be achieved for some codes even if L>= dHmin because the code is non-MFD(w,w). 
%%\end{verbatim}
\end{theorem}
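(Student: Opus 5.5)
Since full MFD and MFD deficiency are complementary (Definition~\ref{def_MFD_code_2}), and finite versus infinite diversity are complementary, statements 1) and 2) are contrapositives of each other. It therefore suffices to prove statement 1). Statement 3) then follows from 1) and Proposition~\ref{prop_w_w_L}: $L<d_{Hmin}(\cC_b)$ forces MFD deficiency, and MFD deficiency gives infinite diversity. The direction ``infinite diversity $\Rightarrow$ MFD deficient'' is exactly the contrapositive of Lemma~\ref{lem_necessary_soft}. The whole work therefore lies in the sufficiency: if $\cC$ is MFD deficient, then $P_{ew}=\cO(e^{-\alpha\gamma})$ for some $\alpha>0$. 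The degenerate case $n\log_2(q)<m$, in which one photon carries several words, should be treated first and separately. There a single-bin jump flips one bit of one word, and a code with $d_{Hmin}\ge 2$ cannot have a competing codeword made only of neighbor labels in a single photon without weight $\omega\le 1$. So that case reduces to the same argument as below with $L=1$.

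\textbf{Step 1 (reduction and union bound).} By linearity and the coset-invariance remark, assume $s_A=0$ and take $c_A$ as the reference word. Bound $P_{ew}\le\sum_{c'\ne c_A}\pr(c_A\to c')$; the number of terms is finite and independent of $\gamma$. Separately, condition on the event $E$ that every photon $\ell$ has $\hY_\ell\in\{\hx_\ell-1,\hx_\ell,\hx_\ell+1\}$. By Proposition~\ref{prop:single-double-jumps}, $\pr(E^c)\le L\cdot\cO(e^{-\gamma/4})$. Next, fix a small constant $\Delta\in(0,1)$ and define the event $D$ that every Bob photon landing in a neighboring bin lies within distance $\Delta$ of the boundary shared with Alice's bin. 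This is the per-coordinate DMZ. By the first case of Lemma~\ref{lem_DMZ}, $\pr(E\cap D^c)\le 2L\cdot\cO(e^{-\Delta^2\gamma/4})$. It then remains to show that, for $\Delta$ small enough (depending only on the code and $m$) and $\gamma$ large, no $y\in E\cap D$ makes any $c'\ne c_A$ win the MAP comparison~(\ref{equ_soft_recon}).

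\textbf{Step 2 (APP comparison inside $E\cap D$).} For $y\in E\cap D$, let $S$ be the set of jumped photons. For each $\ell\in S$, let $\delta_\ell\le\Delta$ be the distance of $y_\ell$ past the border. Using~(\ref{equ_app1_approx})--(\ref{equ_app2_approx}), per photon we have:
\begin{itemize}
\item Alice's label has $APP\approx\tfrac12 e^{-\delta_\ell^2/(2\sigma^2)}$ when $\ell\in S$, and $APP\approx 1$ otherwise.
\item Any label of $c'$ at bin distance $r$ from $\hY_\ell$ has $APP\lesssim e^{-(r-1+\epsilon)^2/(2\sigma^2)}$, with the exponent controlled by the distance from $y_\ell$ to the nearer border of that bin.
\end{itemize}
Consequently $\log(APP(c_A)/APP(c'))\cdot 2\sigma^2\ge \sum_{\ell}\bigl(D_\ell(c')^2\bigr)-\sum_{\ell\in S}\delta_\ell^2+o(1)$, where $D_\ell(c')$ is the distance from $y_\ell$ to the bin $\phi(c'_\ell)$. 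Now use MFD deficiency. The difference $c'+c_A$ is a nonzero codeword of weight $\omega$, and it has $\ell<\omega$ zero-neighbor labels, where neighbors are taken relative to $c_A$'s bins, i.e.\ labels at Gray distance one from $\hx_\ell$. Since at most $\ell<\omega$ photons of $c'$ differ from $c_A$ by a single-bin move, either (a) some photon of $c'$ sits two or more bins away from $\hx_\ell$, or (b) some photon of $c'$ carries a multi-bit difference from $\hx_\ell$. Because of the Gray map, a multi-bit label difference is never an adjacent bin, so (b) reduces to (a). Under $E\cap D$, that photon contributes $D_\ell(c')\ge 1-\Delta$. Meanwhile every photon of $S$ not in the support of $c'+c_A$ costs $c'$ as much as it costs $c_A$; it cancels or favors $c_A$. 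Hence the ratio is at least $(1-\Delta)^2-L\Delta^2>0$ for $\Delta$ small, so $c_A$ wins by $e^{\Theta(\gamma)}$. One caveat: a photon that jumped to the side opposite to $c'$'s neighbor label must also be handled. There $D_\ell(c')\ge 1$, which only helps.

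\textbf{Main obstacle.} The delicate point is Step 2: making rigorous that \emph{every} $c'$ in the coset is beaten uniformly over $E\cap D$, including words whose differences mix neighbor labels on some photons with a far label on exactly one. I would handle this by a clean case split on whether $\mathrm{supp}(c'+c_A)$, at the photon level, is contained in $S$ with matching jump directions. If it is, MFD deficiency forces a non-neighbor label inside $S$, giving a penalty at least $(1-\Delta)^2$. If it is not, a photon outside $S$ where $c'$ differs gives a penalty at least $(1-\Delta)^2$ directly. The error terms $\cO(e^{-\gamma/2})$ dropped in the approximate APPs must be controlled uniformly, which is routine at fixed $\Delta$. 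Summing, $P_{ew}=\cO(e^{-\min(1/4,\Delta^2/4)\gamma})$ by~(\ref{equ_pe_infdiv}), giving infinite diversity.
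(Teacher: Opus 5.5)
Your architecture matches the paper's. Necessity is the contrapositive of Lemma~\ref{lem_necessary_soft}, statement 3) follows from Proposition~\ref{prop_w_w_L}, statements 1) and 2) are complementary, and $L<1$ reduces to $L=1$ (the paper uses the direct sum $\cC^{\oplus\lambda}$).

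The sufficiency argument is where you take a genuinely different route. The paper works one competitor at a time. With $L_1$ single-bin photons and $L_2\ge 1$ multi-bin photons, it places $\hz_\ell=\hx_\ell+1$ or $\hx_\ell+2$ and solves $(L_1+L_2)\Delta^2=L_2(1-\Delta)^2$ for the side of a cube inside $c_A$'s decision region. This gives $\Delta\ge 1/(1+\sqrt{L})$, and it then union-bounds over coordinates with Lemma~\ref{lem_DMZ}. You instead fix one good event $E\cap D$ and bound its complement by Proposition~\ref{prop:single-double-jumps} and Lemma~\ref{lem_DMZ}. You then show deterministically that on $E\cap D$, $c_A$ beats every competitor once $(1-\Delta)^2>L\Delta^2$, i.e.\ $\Delta<1/(1+\sqrt{L})$, which is the same threshold. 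Your route handles jumps in either direction, jumps on photons outside the support, and all competitors at once; the paper's worst-case placement leaves these implicit. The paper's route, in exchange, gives the per-pair width $\Delta(L_1,L_2)$.

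\textbf{Minor correction: drop the case split.} The case split in your last paragraph should be deleted. Its second branch is false: take a photon outside $S$ where $c'$ sits in a bin adjacent to $\hx_\ell$ and $y_\ell$ lies near their shared border. That photon costs $c'$ only $D_\ell(c')^2\approx 0$, not $(1-\Delta)^2$. The split is also unnecessary. The far photon exists for every competitor, and every $y_\ell$ allowed by $E\cap D$ gives it $D_\ell(c')\ge 1-\Delta$, which is all Step~2 needs.

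\textbf{Real gap: the MFD-deficiency hypothesis in Step~2.} Step~2 uses MFD deficiency with neighbors ``taken relative to $c_A$'s bins''. Definitions~\ref{def_MFD_code} and~\ref{def_MFD_code_2}, however, count only neighbors of the all-zero label. The Gray map is not invariant under XOR translation, because which bit a single-bin jump flips depends on the bin. In Table~\ref{tab:graycode}, flipping the last bit sends bin $3$ to bin $4$ but sends bin $0$ to bin $7$. Every configuration $\hx$ of Alice's bins has probability about $N^{-L}$, so you need the property for every $\hx$, and zero-label deficiency does not provide it.

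A concrete counterexample:
\begin{itemize}
\item Take $\cC=\{0000,1010\}$ with $m=2$ and bins labeled $00,01,11,10$.
\item The only neighbor of $00$ is $01$, so $1010$ gives MFD$(2,0)$ and $\cC$ is MFD deficient.
\item Yet if Alice's bins are $(1,1)$, then $c_A=01\,01$, and the coset competitor $11\,11$ occupies bins $(2,2)$, two single-bin jumps away.
\item The argument of Lemma~\ref{lem_necessary_soft} then gives $P_{ew}=\Omega(\gamma^{-1})$.
\end{itemize}

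The paper's proof makes the same reduction (``without loss of generality \ldots\ $c_A$ is the all-zero word''), so you lose nothing relative to it. Still, the step cannot be justified from the literal definition. It is repaired by assuming MFD deficiency relative to every reference bin vector: there is no $\hx$ and no nonzero codeword whose nonzero photon labels are all XORs of the label of $\hx_\ell$ with that of an adjacent bin. Under that hypothesis your Step~2 goes through verbatim.

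Two parts are unaffected by this issue:
\begin{itemize}
\item Necessity holds, because Alice sitting on the zero-label bins is a positive-probability event.
\item Statement 3) holds, because a label difference of weight at least $2$ is never a single-bin move, whatever $\hx$ is.
\end{itemize}
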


%%\nrd{SL updated on 25 June. Joseph on June 26.}
\begin{IEEEproof}
We have previously proved in Lemma~\ref{lem_necessary_soft} that infinite diversity implies MFD deficiency. 
To complete the proof of the two equivalent statements 1) and 2), 
we just need to prove that MFD deficiency implies infinite diversity, under soft-decision decoding.

We follow the same notation as used previously in this section.
Let $c_A$ denote the word observed by Alice with $\phi(c_A) = \hat{x}$. 
Without loss of generality and thanks to the code linearity, we can assume that Alice has a zero syndrome and that $c_A \in \cC$ is the all-zero word.
Let $c_A' \in \cC$ be a competing word, with $\phi(c_A') = \hat{z}$ such
that the Hamming distance in $\F_2^{n\log_2(q)}$ is $d_H(c_A, c_A')=\omega \ge d_{Hmin}(\cC_b) \ge d_{Hmin}$.

\noindent
$\bullet$ Consider the case where $m$ is multiple of $n\log_2(q)$, i.e., 
$L=\tfrac{n\log_2(q)}{m}<1$. Hence, one photon is carrying 
$\lambda=\tfrac{m}{n \log_2 (q)}$ words. Consider the new linear code 
$\cC_s=\cC^{\oplus \lambda}$ defined as the direct sum of $\lambda$ copies of~$\cC$. The code $\cC_s$ has parameters 
$[\lambda n,\lambda k, d_{Hmin}]_q$. 
For $\cC_s$, we need one photon to carry a word. Hence, the instance $L=\tfrac{n\log_2(q)}{m}<1$ is now converted to the instance $L=1$ examined in the second case that follows.
\begin{figure}[!h]
\centerline{\includegraphics[width=0.75\linewidth]{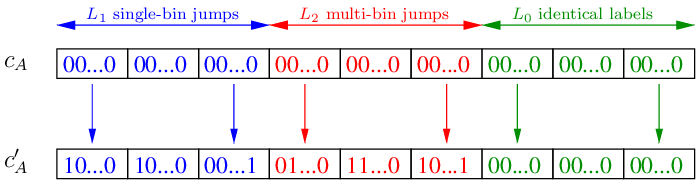}}
\begin{center}
\caption{Alice's word $c_A$ and a competing word $c'_A$ of weight $\omega$, where the MFD$(\omega, L_1)$ property is satisfied, $L_1 < \omega$, and the number of photons per word is $L=L_0+L_1+L_2 \ge 1$.\label{fig_MFD_omega_2}}
\end{center}
\end{figure}

\noindent
$\bullet$ Consider the case where $n\log_2(q)$ is multiple of $m$, i.e., $L=\tfrac{n\log_2(q)}{m}$ is integer, $L\ge 1$, and $L$ photons are required to carry a word.
The MFD deficiency assumption by the theorem implies that,
for any $c'_A \ne c_A=0$, $c'_A \in \cC \sim \cC_b$, $c'_A$ does confer the MFD$(\omega, \omega)$ property to the code $\cC$, 
where $\omega$ is the Hamming weight of $c'_A$ in $\F_2^{mL}$. Hence, $c'_A$ can only make the code MFD$(\omega, L_1)$, where $0 \le L_1 < \omega$. Now, we can write $L=L_0+L_1+L_2$ as depicted in Figure~\ref{fig_MFD_omega_2}, where $L_1$ is the number of bins corresponding to a single-bin jump (neighboring labels from the MFD property), $L_2$ is the number of bins with multiple jumps (two or more), and the remaining $L_0$ bins correspond to identical labels in $c_A$ and $c'_A$.
Since $L_1<\omega$, we have $L_2\ge 1$. We also have
$L_1 \ge 0$ and $L_0 \ge 0$. The Hamming weight in the first $L_1+L_2$ bins of $c'_A$ is $\omega$,
then we conclude that $L_1+L_2\le\omega$. 
Thus, we do not know whether $L$ is less than $\omega$ or not, it depends on $L_0$. Fortunately, the value of $L$ with respect to $\omega$ is not required to prove infinite diversity in Statement 1 of this theorem. The proof relies on $L_2\ge 1$ and $L_1+L_2 \le \omega$ as we will see below.\\ 

Given the vector of photon positions observed by Bob, $Y=y=(y_1, \ldots, y_L) \in [0,N[^L$, we aim at finding an upper bound for the probability of error $\pr(c_A \rightarrow c'_A)=\pr(APP(c_A)<APP(c'_A))$ to prove that diversity is infinite without necessarily determining the exact shape of the decision region boundary. The approach is identical to the $[6,3,3]_2$ code example described at the beginning of this section.
The upper bound of $\pr(c_A \rightarrow c'_A)$ is obtained from the cube of side $\Delta$ in $\R^{L_1+L_2}$. The cube should have $(\hx_1+1, \hx_2+1, \ldots, \hx_{L_1+L_2}+1)$ as its left bottom corner. The side length $\Delta$ is such that the cube belongs entirely to the decision region of $c_A$, as illustrated for $L=2$ in Figure~\ref{fig_decision_boundary}-b. To find the largest $\Delta$, we start by defining the positions of $\hx$, $\hz$, and $y$, given $L_1$ and $L_2$.
%%The worst case corresponds to $\omega=d_{Hmin}$ and all $L_2$ bins with two-bin jumps only. 
We will write the proof in a way such that it is valid for any $\omega\ge d_{Hmin}$ and any number of double-bin jumps within the $L_2$ bins.\\ 

%%\nrd{I reached this point. Joe. 5 July.\\}
%%\nrd{SL updated the equations below. \\}
%%\nrd{Thank You, SL. \\}
For the first $L_1$ bins with a single jump, we have $\hz_{\ell}=\hx_{\ell}+1$, for $\ell=1 \ldots L_1$. For the $L_2$ bins with multiple jumps, we take $\hz_{\ell}=\hx_{\ell}+2$ (the worst case since the number of triple or quadruple jumps is unknown), for $\ell=L_1+1 \ldots L_1+L_2$. To find $\Delta$, we place Bob's positions in the middle bin, $y_{\ell} \in [x_{\ell}+1, x_{\ell}+2]$, for $\ell=1 \ldots L_1+L_2$.
These defined positions are shown in Figure~\ref{fig_y_pos_MFD_omega_2}. 

\begin{figure}[!h]
\centerline{\includegraphics[width=0.95\linewidth]{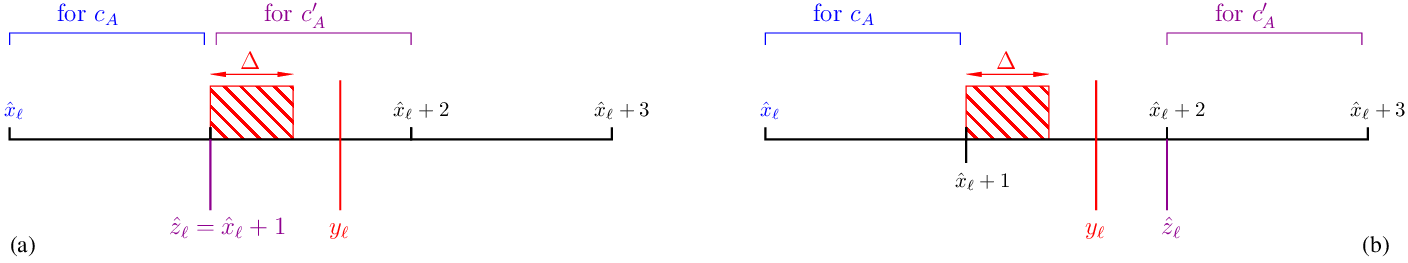}}
\begin{center}
\caption{Photon positions $y_{\ell}$ of Bob for bins $1 \le \ell \le L_1$ (left in (a)) and for bins $L_1< \ell \le L_1 + L_2$ (right in (b)). The competing word~$c'_A$ makes the code MFD$(\omega,L_1)$, where $0 \le L_1 < \omega$,
$0 < L_2$, and $0 < L_1+L_2 \le \omega$.\label{fig_y_pos_MFD_omega_2}}
\end{center}
\end{figure}

Bob's photons positions for the $L_0$ bins with identical labels have no effect on the APPs, and hence can be placed anywhere, e.g., in their most likely place: $y_{\ell} \in [\hx_{\ell}, \hx_{\ell}+1]$, for $\ell=L_1+L_2+1, \ldots, L$.
Given $c_A$, $c'_A$, and $y$, the {\em a posteriori} probabilities follow from
(\ref{equ_app1_approx}) and (\ref{equ_app2_approx}),
\begin{align*}
APP(c_A) & 
\propto 
\prod_{\ell=1}^{L_1+L_2} 
\left( \tfrac{1}{2} e^{- \frac{(y_{\ell} - \hat{x}_{\ell} -1)^2 }{2 \sigma^2}} 
-\tfrac{1}{2} e^{- \frac{(y_{\ell} - \hat{x}_{\ell} )^2 }{2 \sigma^2}}
\right) 
\cdot
\prod_{\ell=L_1+L_2+1}^{L} 
\left( 1-\tfrac{1}{2} e^{- \tfrac{(y_{\ell} - \hat{x}_{\ell} -1)^2 }{2 \sigma^2}} 
-\tfrac{1}{2} e^{- \tfrac{(y_{\ell} - \hat{x}_{\ell} )^2 }{2 \sigma^2}}
\right) 
\\&  
\propto 
\prod_{\ell=1}^{L_1+L_2} 
\left( \tfrac{1}{2} e^{- \tfrac{(y_{\ell} - \hat{x}_{\ell} -1)^2 }{2 \sigma^2}} 
~+~ \cO(e^{-\tfrac{\gamma}{2}})
\right) 
\cdot
\prod_{\ell=L_1+L_2+1}^{L} 
\left( 1-\tfrac{1}{2} e^{- \tfrac{(y_{\ell} - \hat{x}_{\ell} -1)^2 }{2 \sigma^2}} 
-\tfrac{1}{2} e^{- \tfrac{(y_{\ell} - \hat{x}_{\ell} )^2 }{2 \sigma^2}}
\right), 
\end{align*}
\begin{align*}
APP(c'_A) &
\propto \prod_{\ell=1}^{L_1} 
\left( 1 - \tfrac{1}{2} e^{- \tfrac{(y_{\ell} - \hat{x}_{\ell} -1)^2 }{2 \sigma^2}} 
-\tfrac{1}{2} e^{- \tfrac{(y_{\ell} - \hat{x}_{\ell} -2 )^2 }{2 \sigma^2}}
\right) 
\cdot
\prod_{\ell=L_1+1}^{ L_1+L_2} 
\left(  \tfrac{1}{2} e^{- \tfrac{(y_{\ell} - \hat{x}_{\ell} -2 )^2 }{2 \sigma^2}} 
- \tfrac{1}{2} e^{- \tfrac{(y_{\ell} - \hat{x}_{\ell} - 3 )^2 }{2 \sigma^2}} \right)
\\& 
\quad \cdot \prod_{\ell=L_1+ L_2 +1 }^{L} 
\left( 1-\tfrac{1}{2} e^{- \tfrac{(y_{\ell} - \hat{x}_{\ell} -1)^2 }{2 \sigma^2}} 
-\tfrac{1}{2} e^{- \tfrac{(y_{\ell} - \hat{x}_{\ell} )^2 }{2 \sigma^2}}
\right) 
\\&
\propto \prod_{\ell=1}^{L_1} 
\left( 1 - \tfrac{1}{2} e^{- \tfrac{(y_{\ell} - \hat{x}_{\ell} -1)^2 }{2 \sigma^2}} 
-\tfrac{1}{2} e^{- \tfrac{(y_{\ell} - \hat{x}_{\ell} -2 )^2 }{2 \sigma^2}}\right) 
\cdot
\prod_{\ell=L_1+1}^{ L_1+ L_2} 
\left( \tfrac{1}{2} e^{- \tfrac{(y_{\ell} - \hat{x}_{\ell} -2 )^2 }{2 \sigma^2}} 
~+~ \cO(e^{-\tfrac{\gamma}{2}})
\right)
\\&
\quad \cdot \prod_{\ell=L_1 + L_2 +1}^{L} 
\left( 1-\tfrac{1}{2} e^{- \tfrac{(y_{\ell} - \hat{x}_{\ell} -1)^2 }{2 \sigma^2}} 
-\tfrac{1}{2} e^{- \tfrac{(y_{\ell} - \hat{x}_{\ell} )^2 }{2 \sigma^2}}
\right).
\end{align*}

To find the largest side length $\Delta$ of the cube 
$\prod_{\ell=1}^{L_1+L_2} [\hx_{\ell}+1, \hx_{\ell}+1+\Delta]$ that is included in the decision region of $c_A$, we write $APP(c_A)=APP(c'_A)$ with 
$y_{\ell}=\hx_{\ell}+1+\Delta$, for $1 \le \ell \le L_1+L_2$. We obtain
\begin{align*}
\prod_{\ell=1}^{L_1+L_2} 
\left( \frac{1}{2} e^{- \frac{\Delta^2}{2 \sigma^2}} 
~+~ \cO(e^{-\tfrac{\gamma}{2}})
\right)
~&=~
\prod_{\ell=1}^{L_1} 
\left( 1 - \frac{1}{2} e^{- \frac{\Delta^2 }{2 \sigma^2}} 
-\frac{1}{2} e^{- \frac{(1-\Delta)^2 }{2 \sigma^2}}\right) 
\prod_{\ell=L_1+1}^{L_1+ L_2} 
\left( \frac{1}{2} e^{- \frac{(1-\Delta)^2 }{2 \sigma^2}} 
~+~ \cO(e^{-\tfrac{\gamma}{2}})
\right).
\end{align*}
With a vanishing $\sigma^2$ and after applying the $\log()$ to both sides of the above equality, we reach
\begin{equation}
\label{equ_Delta_1}
(L_1 + L_2)\Delta^2 ~=~ L_2 (1-\Delta)^2, ~~~\text{for}~0 < \Delta <1,
\end{equation}
which gives the following expression
\begin{equation*}
\Delta ~=~\frac{\sqrt{L_2}}{\sqrt{L_1 + L_2}+\sqrt{L_2}}.
\end{equation*}
Note that (\ref{equ_Delta_1}) becomes 
$\tfrac{\Delta}{1-\Delta}=\sqrt{ \tfrac{L_2}{L_1 + L_2}} \le 1$, hence $\Delta \le 1/2$.
Finally, note that $L_2 \ge 1$ leading to the inequality
\begin{equation}
\Delta ~\ge~\frac{1}{1+\sqrt{L}}.
\end{equation}
%%\nrd{Good. I reached here. Joe. July 6, at 2:45am.\\}

The above lower bound of $\Delta$ is non-vanishing and does not depend on $L_1$, i.e., does not depend on the choice of the codeword $c'_A$. The application of Lemma~\ref{lem_DMZ} shall complete the proof of the statements~1) and~2) of the theorem by a union bound on the probability of error outside the cube $\prod_{\ell=1}^{L_1+L_2} [\hx_{\ell}+1, \hx_{\ell}+1+\Delta]$:  
\begin{align*}
\pr(c_A \rightarrow c'_A)
&\le \sum_{\ell=1}^{L_1+L_2} \pr(Y_{\ell} \ge \hx_{\ell}+1+\Delta~|~X_{\ell} \in [\hx_{\ell},\hx_{\ell}+1[) \\ 
& = (L_1+L_2) \times \pr(Y_1 \ge \hx_1+1+\Delta~|~X_1 \in [\hx_1,\hx_1+1[) \\
\pr(c_A \rightarrow c'_A) &=\cO\left(e^{- \frac{\Delta^2 \gamma}{4}}\right),
\end{align*}
proving that diversity is infinite.

The proof of statement 3) is straightforward by the application of Proposition~\ref{prop_w_w_L} followed by the condition in statement 1) of the current theorem. 
\end{IEEEproof}
%%\nrd{SL edited on 25 June.}

As for Lemma~\ref{lem_necessary_soft}, Theorem~\ref{thm:infinite-diversity-soft} does not establish a tight expression for 
the pairwise error probability $\pr(c_A \rightarrow c'_A)$
neither it analyzes the boundary of the decision regions.
Fortunately, Theorem~\ref{thm:infinite-diversity-soft} proves that the probability of error is $\cO(e^{- \frac{\Delta^2 \gamma}{4}})$ which implies that diversity has an infinite order. 
Expressions or tight bounds for $\pr(c_A \rightarrow c'_A)$ could be the task of a future paper. The next section shows practical examples of codes and their expected diversity when used in TE-QKD reconciliation.

\begin{corollary}
\label{cor_rate_loss_soft}
Consider a linear $\cC[n, k, d_{Hmin}, t]_q$  code employed for soft reconciliation over a TE-QKD frame of $N=2^m$ bins with rate $R_c = k/n$.
The finite field $\F_q$ over which the code is defined has characteristic 2.
Assume that $\gcd(n\log_2(q),m)=\min(n\log_2(q),m)$ and that $L=\tfrac{n\log_2(q)}{m}<d_{Hmin}$, $L \in \N^*$. So, $\cC$ achieves infinite diversity as per statement 3) in Theorem~\ref{thm:infinite-diversity-soft}.
Then, we have 
\begin{equation}
\label{equ_Rc_bound_soft}
R_c \le 1-\frac{\log_2(q)}{m}.
\end{equation}
\end{corollary}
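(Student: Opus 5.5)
The argument mirrors the proof of Corollary~\ref{cor_Rc_loss_hard}: combine the assumed inequality $L<d_{Hmin}$ with the Singleton bound and divide by $n$.

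The hypothesis is stated with $d_{Hmin}=d_{Hmin}(\cC)$, the minimum distance of the code over $\F_q$, not of its binary image. This matters because the Singleton bound must be applied to $\cC$ itself, where it reads $d_{Hmin}\le n-k+1$. Applying it to $\cC_b$ would lose the factor $\log_2(q)$.

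The key steps, in order, are as follows. First, both $L$ and $d_{Hmin}$ are integers, so the strict inequality $L<d_{Hmin}$ upgrades to $L\le d_{Hmin}-1$. Second, the Singleton bound gives $d_{Hmin}-1\le n-k=n(1-R_c)$. Chaining the two yields
\begin{equation*}
\frac{n\log_2(q)}{m}=L\le d_{Hmin}-1\le n(1-R_c).
\end{equation*}
Third, dividing both sides by $n>0$ gives $\tfrac{\log_2(q)}{m}\le 1-R_c$, which is exactly (\ref{equ_Rc_bound_soft}).

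There is no real obstacle here. The only point needing care is the integrality step $L<d_{Hmin}\Rightarrow L\le d_{Hmin}-1$. This is what produces the factor $1$ in front of $\tfrac{\log_2(q)}{m}$, compared with the factor $2$ in (\ref{equ_Rc_bound}), where the condition was $L\le t$ with $2t\le d_{Hmin}-1$. So the soft-decision rate penalty is half that of algebraic decoding, matching the claim in the introduction.

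I would also remark that the hypothesis $L\in\N^*$ together with the $\gcd$ alignment condition is what makes $L=\tfrac{n\log_2(q)}{m}$ an exact integer. Without it one would have to work with $\lceil\cdot\rceil$, which only strengthens the inequality.
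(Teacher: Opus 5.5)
Your proof is correct and is the same argument the paper uses. The paper only says the bound is "straightforward from the Singleton bound as for Corollary~\ref{cor_Rc_loss_hard}": chain $L\le d_{Hmin}-1$ (integrality of $L$) with $d_{Hmin}-1\le n-k$ for $\cC$ over $\F_q$, then divide by $n$. You are right that the hypothesis must refer to $d_{Hmin}(\cC)$, since Singleton applied to $\cC_b$ would only give $R_c\le 1-\tfrac{1}{m}$; the paper leaves this implicit.
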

The proof of the upper bound in (\ref{equ_Rc_bound_soft}) is straightforward from the Singleton bound as for the proof of Corollary~\ref{cor_rate_loss_hard} and could be improved by the use of tighter bounds on $d_{Hmin}$.
The rate penalty of soft reconciliation of Corollary~\ref{cor_rate_loss_soft} is half the rate penalty of algebraic reconciliation of Corollary~\ref{cor_rate_loss_hard}. 
Thus, soft reconciliation can achieve infinite diversity at higher coding rates, or equivalently with fewer bins per frame. 
For the binary case $q=2$, soft decoding allows $R_c \le 1 - 1/m$ against the algebraic decoding $R_c \le 1 - 2/m$.
For $m=3$, soft decoding can support $R_c \le \tfrac{2}{3}$ whereas algebraic decoding requires $R_c \le \tfrac{1}{3}$ to achieve infinite diversity. This matches the case of the $[6,3,3]_2$ code: with $m=3$, the rate is $R_c = \tfrac{1}{2}$, which violates the algebraic decoding bound in~\eqref{equ_Rc_bound} but satisfies the soft-decision bound in~\eqref{equ_Rc_bound_soft}.

\section{Examples of Short Codes and Their Diversity \label{sec_code_examples}}
%%\nrd{Good. I reached here. Joe. July 8, at 5:30pm.\\}
%%\nrd{SL edited}
This section illustrates the diversity conditions derived in Theorems~\ref{thm:infinite-diversity-hard} and~\ref{thm:infinite-diversity-soft} through representative binary and non-binary linear codes. 
As discussed in Section~\ref{sec_infinity-div}, for a code $\cC[n,k, d_{Hmin}, t]_q$ used over a TE-QKD channel model described in Section~\ref{sec_model} with $N = 2^m$ bins, the number of photons per codeword is $L = \tfrac{n \log_2(q)}{m}$. 
Under algebraic decoding, infinite diversity is achieved if and only if $L \le t$, whereas under soft-decision decoding, infinite diversity is achieved if and only if $\cC$ is MFD deficient. Also, $L<d_{Hmin}$ is sufficient to make
the code MFD deficient. 
If these conditions are not satisfied, the dominant error events consist only of single-bin jumps, i.e., diversity is finite and bounded from below by $\tfrac{1}{2}(t+1)$ or $\tfrac{1}{2}d_{Hmin}$ for hard and soft reconciliation respectively.\\

All codes listed below are considered to be short codes.
A short code length and a large number of coded bits per photon usually meet
the conditions of infinite diversity. 

%%------------------------------------------------------------------
%%------------------------------------------------------------------
\subsection{The Binary $[24,12,8]$ Extended Golay Code}
We consider the famous self-dual binary Golay code of length $24$.
Firstly, the $[23,12,7]_2$ cyclic Golay code is built from the generator
polynomial $g(x)=x^{11}+x^{10}+x^6+x^5+x^4+x^2+1$ by creating all $2^{12}$
codewords via $c(x)=a(x)g(x)$, where $a(x) \in \F_2[X]$ has degree less than or equal to $11$. 
Then, an extra parity bit is computed as $c_{23}=\sum_{i=0}^{22} c_i$.
Table~\ref{tab_golay} shows the attained diversity for different TE-QKD frame parameters. The MFD distribution of the code, $\ell_{max}$ for a given $\omega$, is also shown in the fourth column. The weight enumerator polynomial of the length-24 self-dual binary code is $1+759x^8+2576x^{12}+759x^{16}+x^{24}$. 

\begin{table}[!h]
\begin{center}
\caption{Achieved TE-QKD diversity for a Golay $[24,12,8]_2$ code.\label{tab_golay}}
\includegraphics[width=0.85\columnwidth]{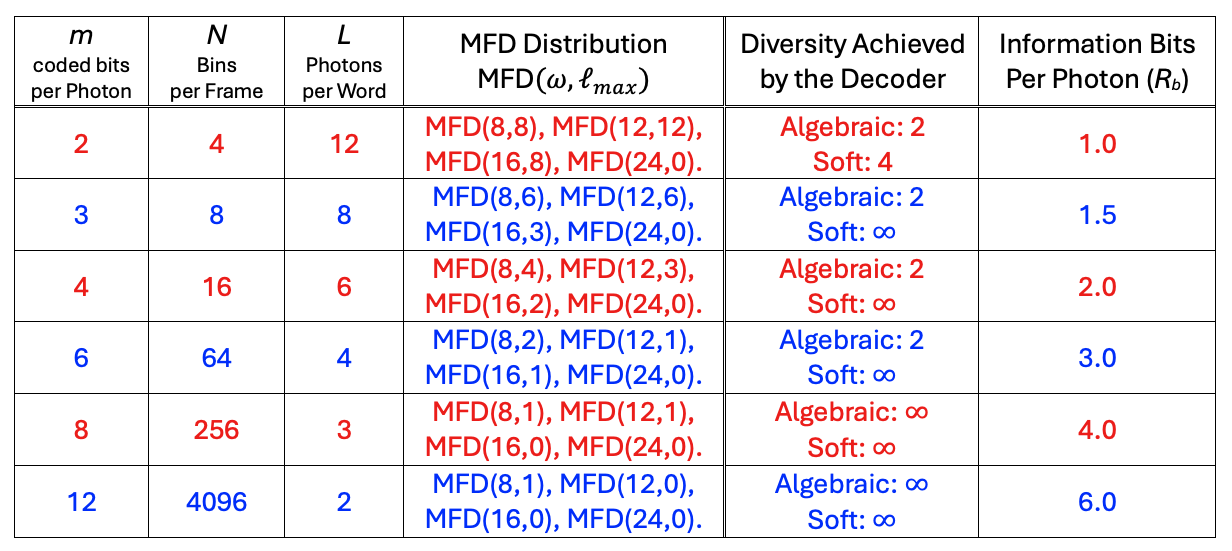}
\end{center}
\end{table}

For practical applications, the two rows for $m=4$ and $m=6$ lead to a reasonable 
lab equipment complexity. Notice that $MFD(\omega, \omega)$ only exists at $m=2$.
Hence, soft-decision decoding reaches infinite diversity early enough for $m\ge 3$. Algebraic decoding is not recommended for the Golay code since infinite diversity requires huge frames at $m=8$ and $N=256$. We recommend TE-QKD soft reconciliation via the $[24,12,8]_2$ Golay code with $4$ and $6$ coded bits per photon.

The bit error rate (BER) performance of hard and soft TE-QKD reconciliation is shown in Figures~\ref{fig_golay_hard} and \ref{fig_golay_soft} versus the signal-to-noise ratio (SNR) $\gamma=1/\sigma^2$ in decibels. As predicted by our theory, the diversity order is $2$ for $m=3,4,6$ for hard reconciliation in Figure~\ref{fig_golay_hard}. 
Diversity becomes infinite at $m=8$ coded bits per photon.
Effective diversity reaches $9.1$ at finite SNR between BER of
$10^{-6}$ and $10^{-7}$. 
For soft reconciliation in Figure~\ref{fig_golay_soft}, the diversity (effective and exact) are equal to $4$ at $m=2$. Soft-decision decoding attains infinite diversity at $m=3,4,6$.
Effective diversity at finite SNR measured between 
BER of $10^{-6}$ and BER of $10^{-7}$ is about $8.5$.  

\begin{figure}[!h]
\begin{center}
\vspace{-9mm}
\includegraphics[angle=270, width=0.7\linewidth]{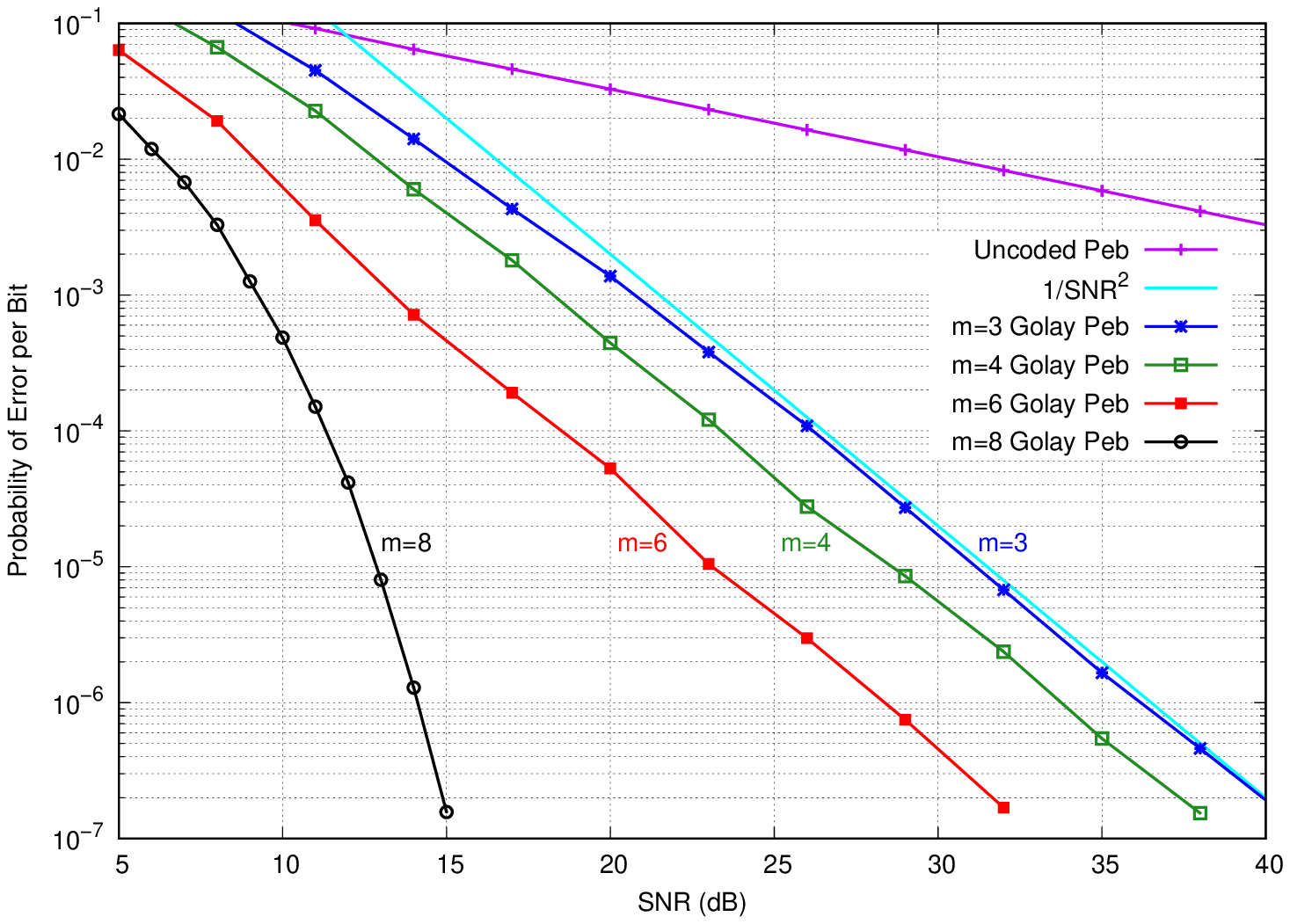}
\vspace{9mm}
\caption{Hard reconciliation of TE-QKD via complete algebraic decoding of the extended binary $[24,12,8]_2$ Golay code. $N=8, 16, 64$, and $256$ bins per frame, $N=2^m$, $m$ coded bits per photon. Diversity follows the result of Theorem~\ref{thm:infinite-diversity-hard} and Table~\ref{tab_golay}. 
\label{fig_golay_hard}}
\end{center}
\end{figure}

\begin{figure}[!h]
\begin{center}
\vspace{-9mm}
\includegraphics[angle=270, width=0.7\linewidth]{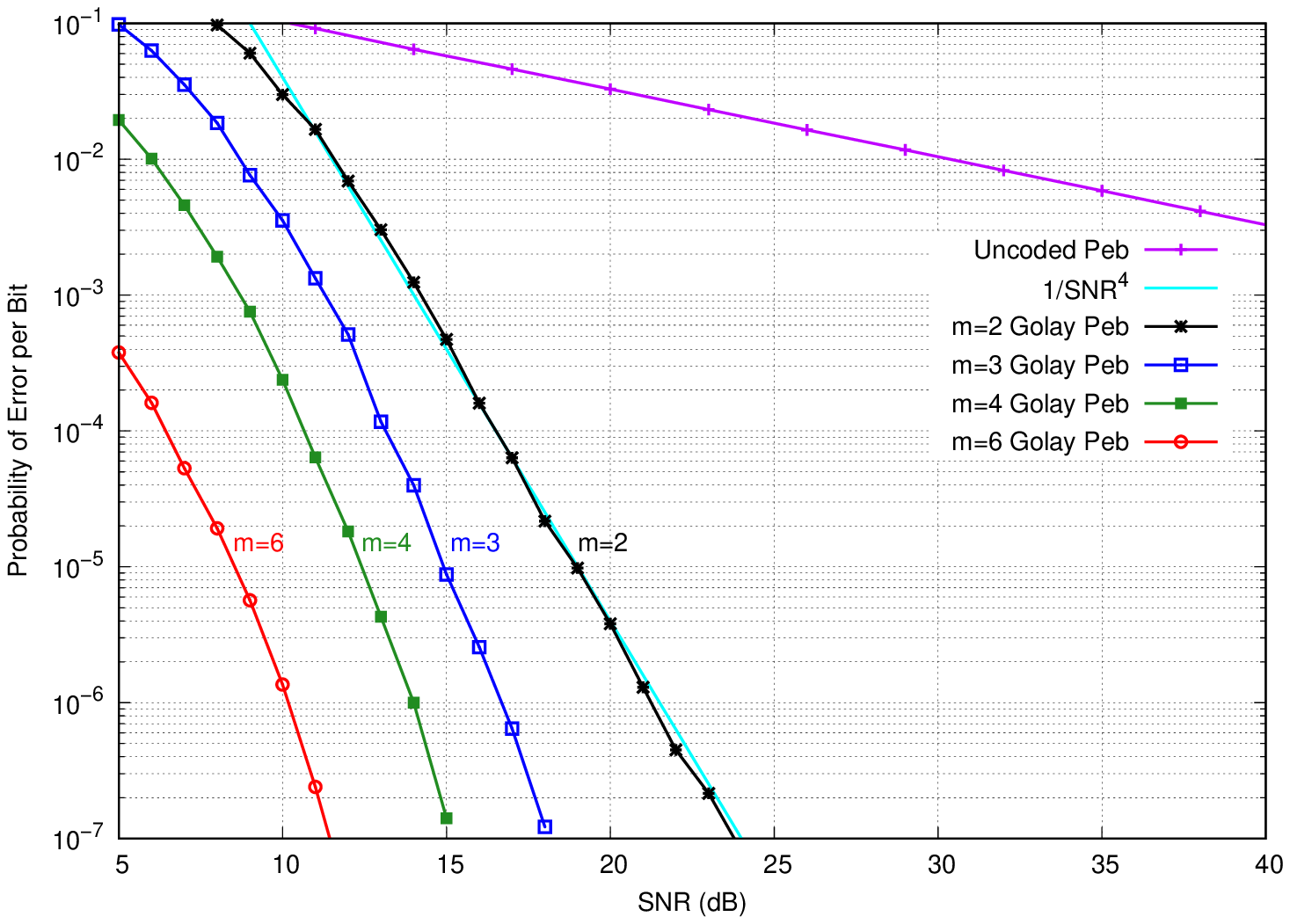} 
\vspace{9mm}
\caption{Soft reconciliation of TE-QKD via soft-decision decoding of the extended binary $[24,12,8]_2$ Golay code. $N=4, 8, 16$, and $64$ bins per frame, $N=2^m$, $m$ coded bits per photon. Diversity follows the result of Theorem~\ref{thm:infinite-diversity-soft} and Table~\ref{tab_golay}. 
\label{fig_golay_soft}}
\end{center}
\end{figure}

%%------------------------------------------------------------------
%%------------------------------------------------------------------
\subsection{Reed-Solomon $RS_1[8,4,5]_8$ and $RS_2[12,6,7]_{16}$ Codes}
The first Reed-Solomon code, $RS_1$, is defined over $\F_8$ and has length $n=8$
and dimension $k=4$. The second Reed-Solomon code, $RS_2$, is defined over $\F_{16}$ and has length $n=12$ and dimension $k=6$. Both RS codes are generated by the sequence $(0, 1, \alpha, \alpha^2, \ldots, \alpha^{n-2})$,
where $\alpha$ is a primitive element of $\F_q$, $q=8$ or $q=16$,
with minimal polynomial $x^3+x+1$ and $x^4+x+1$ respectively. 
The binary image $[24,12,6]_2$ of $RS_1$ has Hamming weight enumerator 
$1+ 56x^6 + 423x^8 + 840x^{10} + 1456x^{12} + 840x^{14} + 423x^{16} + 56x^{18}+ x^{24}$, and $d_{Hmin}(\cC_b)=6$. The binary image $[48,24,8]_2$ of $RS_2$ has Hamming weight enumerator 
$1+ 4x^8 + 73x^9 + 346x^{10} + 1414x^{11} + 4360x^{12} + 11712x^{13} + 28758x^{14} + 64613x^{15} + 133607x^{16} + 253027x^{17} + 436834x^{18} + 688212x^{19} + 997816x^{20} + 1330822x^{21} + 1629166x^{22} + 1844431x^{23} + 1926824x^{24} + \ldots + 4x^{40} + x^{48}$, and $d_{Hmin}(\cC_b)=8$. 

\begin{table}[!h]
\begin{center}
\caption{Achieved TE-QKD diversity for RS codes $[8,4]_8$ and $[12,6]_{16}$.\label{tab_rs}}
\includegraphics[width=0.9\columnwidth]{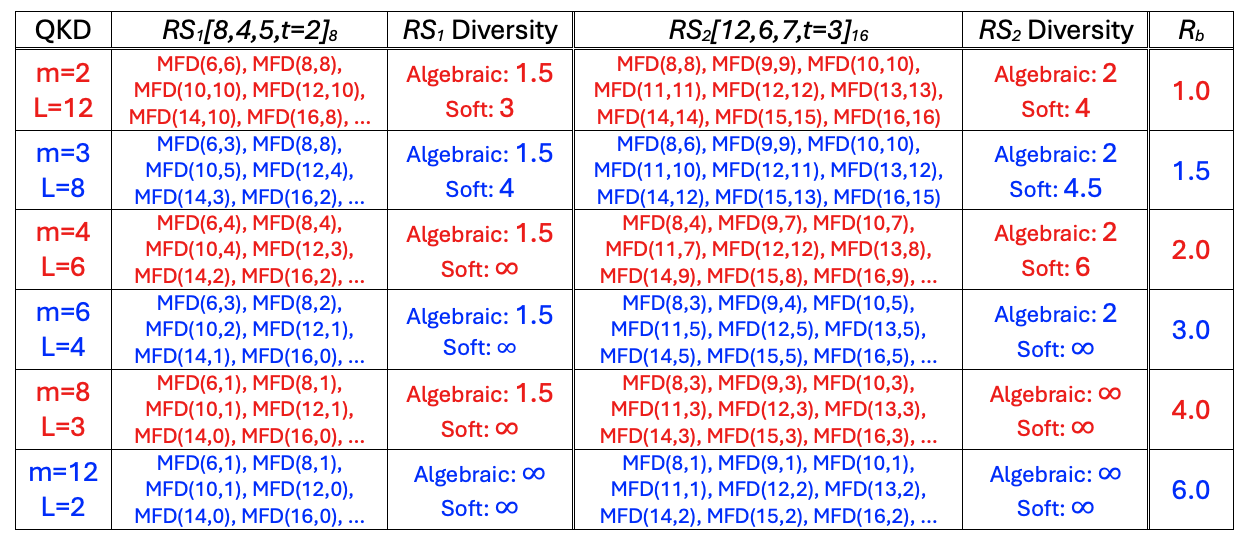}
\end{center}
\end{table}

The MFD distribution MFD$(\omega,\ell_{max})$ for both Reed-Solomon codes is shown in Table~\ref{tab_rs}. For the $RS_1$ code, finite diversity of soft reconciliation at $m=3$ is forced by the term MFD$(8,8)$, where $\omega=8 > d_{Hmin}(\cC_b)=6$. Thus, yielding a diversity $\tfrac{1}{2}\omega$ larger than the minimum diversity from $d_{Hmin}$. A similar surprise with the $RS_2$ code
is the MFD$(9,9)$ and MFD$(12,12)$ terms at $m=3$ and $m=4$ respectively. In all cases, soft reconciliation as for the Golay code attains infinite diversity much earlier than hard reconciliation. For practical applications, we recommend
soft reconciliation with $m=4$ or $m=6$ coded bits per photon.
%%------------------------------------------------------------------
%%------------------------------------------------------------------
\subsection{Binary $BCH_1[30,19,6]_2$ and Quaternary $BCH_2[15,7,7]_4$ Codes}
We build the cyclic binary $[31,20,t=2]_2$ code from the generator polynomial $g(x)=x^{11}+x^8+x^7+x^5+x^4+x^3+x^1+1$ whose roots are $1,\alpha,\alpha^2, \alpha^3$ in $\F_{16}$, $\alpha$ being a primitive element in $\F_{16}$ with minimal polynomial $x^4+x+1$. Then, we shorten by one information bit to get the $BCH_1[30,19,d_{Hmin}=6,t=2]_2$ shortened BCH code. Its Hamming weight enumerator is $1+650x^6+5865x^8+28182x^{10}+87400x^{12}+137700x^{14}
+146115x^{16}+81900x^{18}+30360x^{20}+5490x^{22}+595x^{24}+30x^{26}$.  
For the quaternary code, we directly build $BCH_2[15,7,7]_4$ from its
generator polynomial $g(x)=x^7+\beta x^6+\beta x^5+\beta^2 x^3+\beta^2 x+1$,
where $\beta$ is primitive in $\F_4$, $\beta^2+\beta+1=0$, $\beta=\alpha^5$,
and $g(x)$ has $1,\alpha,\alpha^2,\alpha^3,\alpha^4,\alpha^5$ as roots in 
$\F_{16}$. The Hamming weight enumerator of the binary image of the $BCH_2$ code
is $1+165x^8+1200x^{10}+1880x^{12}+5520x^{14}+3435x^{16}
+3280x^{18}+648x^{20}+240x^{22}+15x^{24}$, and $d_{Hmin}(\cC_b)=8$. 
We could have tested a binary BCH code of length $30$ with $t=3$, but the quaternary BCH code has a thinner weight distribution tail near the minimum distance. 
%%zoom now? Ah!
%%yes :-)
\begin{table}[!h]
\begin{center}
\caption{Achieved TE-QKD diversity for BCH codes $[30,19]_2$ and $[15,7]_4$.\label{tab_bch}}
\includegraphics[width=0.9\columnwidth]{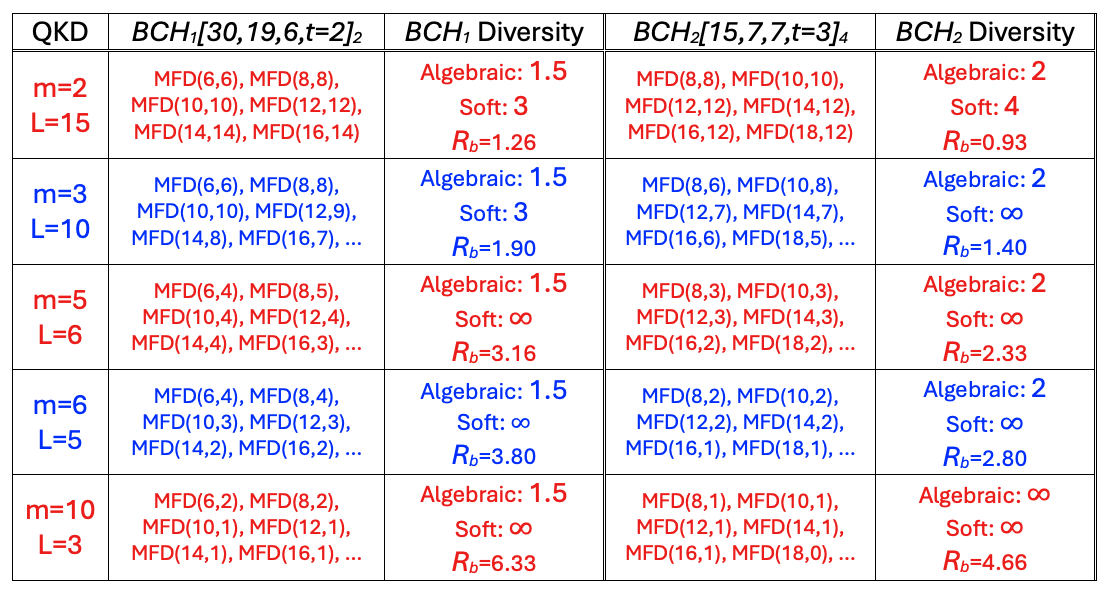}
\end{center}
\end{table}
The MFD$(\omega, \ell_{max})$ distribution of both BCH codes is illustrated
in Table~\ref{tab_bch}. We did not encounter any term of the form
MFD$(\omega, \omega)$, for $\omega > d_{Hmin}(\cC_b)$, when MFD$(d_{Hmin}(\cC_b), d_{Hmin}(\cC_b))$ does not exist. As in previous code examples,
soft reconciliation achieves infinite diversity very early at a relatively small number of coded bits per photon. Both codes are recommended for practical TE-QKD information reconciliation applications at $m=5$ and $m=6$ coded bits per photon.\\

\noindent
Remark: The $[30,20,d_{Hmin}=5,t=2]_2$ BCH code used in Section~\ref{sec_coded_div_QKD} has finite diversity at $m=3$. 
It is derived by shortening the very famous $[31,21]_2$ BCH code. 
The diversity of the $[30,20]_2$ BCH code becomes infinite at $m=5$ under soft-decision decoding, as for $BCH_1$. 
In the current sub-section, we preferred to construct $BCH_1[30,19]$ such that $d_{Hmin}=6$ to enhance its MFD deficiency. For soft reconciliation, the $BCH_1$ code attains a diversity order of $3$, instead of $2.5$ for the $[30,20]_2$ code, when $m=3$ coded bits per photon. 

%%------------------------------------------------------------------
%%------------------------------------------------------------------
\subsection{Reed-Muller RM$(2,4)$ and RM$(2,5)$ Codes}
The first considered binary Reed-Muller code is the RM$(2,4)$.
It is equivalent to an extended $[16,11,4]_2$ Hamming/BCH code.
For determining its MFD distribution, we build the RM$(2,4)$ code
from its standard generator matrix $G_1$ where each row is the evaluation of a Boolean basis monomial over $\F_2^4$. The $11$ basis monomials of four variables are $1, x_1, \ldots, x_4, x_1x_2, \ldots, x_3x_4$. We also considered $G_2$, the generator matrix of the equivalent extended Hamming/BCH code. $G_2$ is built by shifting the generator polynomial $x^4+x+1$ and extending via an all-1 column.

The second binary Reed-Muller code is the RM$(2,5)$; it is self-dual and equivalent to the extended BCH$[32,16,8]_2$ code. Similarly, we build the code from its generator matrix $G_1$ from $16$ Boolean basis monomials of five variables, $1, x_1, \ldots, x_5, x_1x_2, \ldots, x_4x_5$, evaluated over $\F_2^5$. The generator matrix $G_2$ is built by shifting the generator polynomial $x^{15}+x^{11}+x^{10}+x^9+x^8+x^7+x^5+x^3+x^2+x+1$ and extending 
via an all-1 column. 

\begin{table}[!h]
\begin{center}
\caption{Achieved TE-QKD diversity for the Reed-Muller code RM$(2,4)$ in three different versions.\label{tab_RM_2_4}}
\includegraphics[width=0.9\columnwidth]{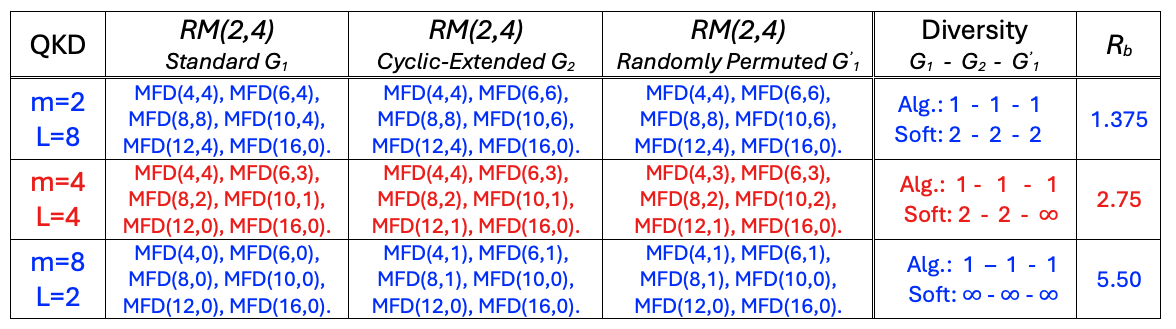}
\end{center}
\end{table}

\begin{table}[!h]
\begin{center}
\caption{Achieved TE-QKD diversity for the Reed-Muller code RM$(2,5)$ in three different versions.\label{tab_RM_2_5}}
\includegraphics[width=0.9\columnwidth]{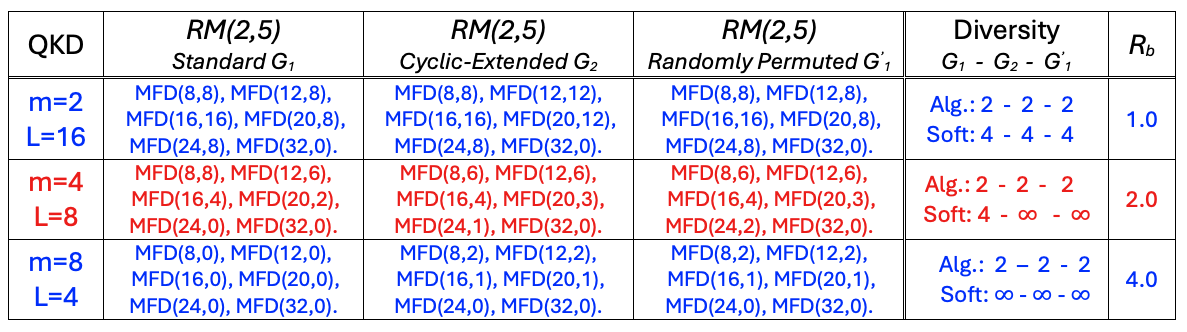}
\end{center}
\end{table}

Tables~\ref{tab_RM_2_4} and \ref{tab_RM_2_5} show the complete MFD$(\omega, \ell_{max})$ distributions of RM$(2,4)$ and RM$(2,5)$ codes respectively.
The information rate per photon ($R_b$ information bits/photon)
is given in the last column. The number of photons per word, $L=n/m$, is considered only for $n$ multiple of $m$ as in previous tables in this section. 
In both cases, for RM$(2,4)$ and RM$(2,5)$, the standard generator matrix $G_1$ is very structured blocking MFD deficiency at $m=4$ coded bits per photon. Hence, we propose to permute the binary digits of the code to achieve MFD deficiency and hence infinite diversity under soft-decision decoding. 
The matrix $G'_1$ is obtained by permuting the columns of $G_1$,
with the two permutations $(0, 2, 12, 1, 8, 13, 4, 15, 11, 6, 9, 5, 7, 10, 3, 14)$ and $(0, 3, 24, 2, 16, 23, 6, 31, 25, 11, 19, 10, 12, 17, 14, 5, 20, 13,$ $21, 22, 30, 1, 26, 4, 28, 9, 7, 8, 18, 15, 29, 27)$, for RM$(2,4)$ and RM$(2,5)$ respectively.
All codes versions attain infinite diversity at $m=8$ for soft reconciliation, as shown in Tables~\ref{tab_RM_2_4} and \ref{tab_RM_2_5}. The case $m=4$ coded bits per photon is the only recommended for a reasonable-complexity lab equipment and good performance of information reconciliation via soft-decision decoding of both Reed-Muller codes. 
\clearpage
\section{Conclusions \label{sec_conclusion}}
%%\nrd{=====SL edited on 24 July 2026 ======\\
%We will rewrite the Conclusions after completing all other Sections.
%}
This work studied the diversity of information reconciliation in time-entangled quantum key distribution systems. Diversity is a crucial parameter to indicate how fast the probability of error is vanishing versus the signal-to-noise ratio. Finite diversity is created by a fading process that weakens a signal with respect to noise. 
In time-entangled QKD systems, the random position of the jitter-free photon relative to the bin borders plays the role of fading. 
At high signal-to-noise ratio, raw key disagreements split into two families with very different statistics.
An error confined to adjacent bins is a polynomial event, whereas a displacement of two bins or more is an exponential event, see Proposition~\ref{prop:single-double-jumps}.
Therefore, a coded system reaches infinite diversity when its decoder absorbs every dominant pattern built from adjacent bin errors, a situation with no equivalent in wireless communications.
Indeed, under conditions stated by Theorems~\ref{thm:infinite-diversity-hard} and~\ref{thm:infinite-diversity-soft}, infinite diversity is created out of finite diversity on the QKD channel with the use of finite length error-correcting codes, an infinite diversity amplification never observed in communication theory.\\ 

%Starting from the PPM-based framework, we showed that the hard-output channel exhibits two distinct high-SNR error mechanisms. Single-bin jumps occur with probability on the order of $\gamma^{-1/2}$ and determine the finite uncoded diversity. Multi-bin jumps decay exponentially with $\gamma$ and yield infinite diversity. 
%This separation explains why coded TE-QKD channels can exhibit a behavior that is not observed in conventional fading channels: a finite-diversity uncoded channel can become an infinite-diversity coded system.

Under bounded-decoding algebraic reconciliation, Theorem~\ref{thm:infinite-diversity-hard} and Corollary~\ref{cor_infinite-diversity-hard-nonbinary} identify the threshold at which this absorption occurs: the $L$ photons carrying a codeword must all fit within the correction radius, i.e., $L \le t$. When this threshold is violated, the diversity falls back to $\tfrac{1}{2}(t+1)$, in agreement with the classical wireless result.\\

Under soft reconciliation, the boundary between finite and infinite diversity is no longer set by $t$ but by the MFD property of the code together with the Gray labeling of the frame.
As stated in Theorem~\ref{thm:infinite-diversity-soft} for soft reconciliation, a code has infinite diversity when no non-zero codeword is formed by neighboring labels only, a property referred to as MFD deficiency. On the contrary, the property referred to as full MFD renders finite diversity for soft reconciliation where diversity falls back to $\tfrac{1}{2}\omega$, for some weight $\omega \ge d_{Hmin}$.\\ 
%%The demilitarized zone lemma (Lemma~\ref{lem_DMZ}) provided the geometric tool for this proof, by showing that leaving a zone of fixed width around a bin border is an event of infinite diversity. 
%%The associated rate bounds quantify the redundancy required by the two decoding strategies and explain why soft reconciliation can operate with smaller frames or higher code rates, see Corollaries~\ref{cor_rate_loss_hard} and~\ref{cor_rate_loss_soft}.

%For algebraic hard-decision decoding, we proved that infinite diversity is achieved if and only if the number of photons per codeword does not exceed the correction radius of the code.
%Under this condition, all dominant single-bin jumps can be corrected, and any remaining decoding error must involve at least one multi-bin jump. If this condition is not satisfied, an error event composed only of single-bin jumps gives a polynomial lower bound on the decoding error probability, and the diversity remains finite.
%Infinite diversity is achieved if and only if the number of photons per codeword is smaller than the minimum Hamming distance. This condition ensures that any confusion between two codewords cannot be caused solely by single-bin jumps. At least one photon must undergo a multi-bin displacement, which forces the pairwise error probability to decay exponentially with SNR.
%%The short code examples confirm that moderate frame sizes, around $4$ to $6$ coded bits per photon, already suffice for the considered Golay, RS, BCH, and RM code constructions to achieve infinite diversity. 

Short codes combined with a slightly large number of coded bits per photon allow the infinite diversity conditions of Theorems~\ref{thm:infinite-diversity-hard} and~\ref{thm:infinite-diversity-soft} to be satisfied. The sudden change in the diversity order, from finite to infinite, causes an exponential decrease of the probability of error instead of a polynomial decrease.  
Consequently, we recommend short error-correcting codes for soft information reconciliation in time-entangled quantum key distribution systems where a photon carries three or more coded bits.

%%They also show that conventional parameters such as length, dimension, and minimum distance do not fully determine the diversity performance of TE-QKD reconciliation. Instead, the grouping of coded bits into photon labels, and the coordinate ordering of a code can alter the MFD distribution at no cost in rate.
%%This observation suggests a practical design principle: the code, coordinate permutation, Gray labeling, and the number of bins per frame should be jointly optimized.
%%The efficient MFD testing and permutation search procedures for longer codes, and tighter finite-SNR error estimates for soft-decision decoding are left for future work.

%These results show that the diversity order in TE-QKD is governed jointly by the frame dimension, the number of photons per codeword, and the distance properties of the reconciliation code.
%This suggests a new design principle for TE-QKD information reconciliation. That is, codes should be selected not only for rate and decoding efficiency, but also for their ability to eliminate all dominant single-bin jumps.

%%\nrd{=====SL edited on 3 June 2026 =====}
%%--------------------------------------------------------------
%%--------------------------------------------------------------
\clearpage
\appendices

\section{List of Fundamental Equations \label{app_fundamental}}
Equations (\ref{equ_hpi}), (\ref{equ_Y_cond_hX_tY}), (\ref{equ_pij}), and (\ref{equ_pdfU_valid}) are exact equations of probabilities and densities for the TE-QKD channel from \cite{Boutros2023}. 
We list below a) the a priori probability of a bin given that both frames
are valid, then~b) the conditional density of the photon position
on Bob's side given the bin position of Alice, then~c) the transition probability of the discrete-input discrete-output TE-QKD channel,
and finally~d) the density of the exact photon position 
(before jitter) given that both frames are valid.
Here, the integer set $\Z_N$ is the set $\{0, 1, \ldots, N-1 \}$.\\

\noindent
a) The apriori $\hpi_i$ is the probability that Alice's photon falls in bin $i$ 
given that Bob's photon is inside the frame, $i \in \Z_N$:
\begin{equation}
\label{equ_hpi}
\hpi_i = \pr(\hX=i|\tY \in [0,N)) 
= \frac{\int_0^N \left[ Q\left(\frac{i-u}{\sigma}\right)-Q\left(\frac{i+1-u}{\sigma}\right) \right]\cdot
\left[ Q\left(\frac{-u}{\sigma}\right)-Q\left(\frac{N-u}{\sigma}\right) \right]\, du}
{\int_0^N \left[ Q\left(\frac{-u}{\sigma}\right)-Q\left(\frac{N-u}{\sigma}\right) \right]^2\, du.} 
\end{equation}

\noindent
After a change of variable $u \leftarrow u/N$, the denominator becomes
$N\cdot \int_0^1 f^2_{\sigma/N}(u) du=N (1-\cO(\tfrac{1}{\sqrt{\gamma}}))$ from~d)
in Lemma~1 in~\cite{Boutros2023}, 
where $f_{\sigma}(x)=Q(\tfrac{-x}{\sigma})-Q(\tfrac{1-x}{\sigma})$. 
The numerator becomes
$\int_0^N f_{\sigma}(u-i) f_{\sigma/N}(\tfrac{u}{N}) du$. 
Split the integral into a first integral over the range $[i,i+1]$
and the second outside $[i,i+1]$. 
Using~b) and~c) from Lemma~1 in~\cite{Boutros2023}, you find that
the first integral is $1-\cO(\tfrac{1}{\sqrt{\gamma}})$ and the second integral
is $\cO(\tfrac{1}{\sqrt{\gamma}})$. 
Hence, we have 
\begin{equation}
\label{equ_hpi_simple}
\hpi_i=\frac{1}{N}+\cO\left(\frac{1}{\sqrt{\gamma}}\right),
\end{equation}
where the big $\cO$ term is positive for bins in the middle of the frame
and negative for side bins.\\ 

\noindent
b) The density $p(y|\hx)$ of Bob's photon position given that both frames are valid and Alice's photon is in bin $i$:
\begin{equation}
\label{equ_Y_cond_hX_tY}
    p_{Y|\hX,\tY \in [0,N)}(y|\hx=i) ~=~\frac
    {\int_0^N \frac{1}{\sqrt{2\pi\sigma^2}} ~e^{-\tfrac{(y-u)^2}{2\sigma^2}} \cdot \left[Q\left(\frac{i-u}{\sigma}\right)-Q\left(\frac{i+1-u}{\sigma}\right)\right] \,du}
    {\int_0^N \left[Q\left(\frac{-t}{\sigma}\right)-Q\left(\frac{N-t}{\sigma}\right)\right] \cdot \left[Q\left(\frac{i-t}{\sigma}\right)-Q\left(\frac{i+1-t}{\sigma}\right)\right] \,dt}, 
\end{equation}
where $i \in \Z_N$ and $y \in [0,N[$. At low SNR, $p(y|\hx)$ has a Gaussian shape, whereas it converges towards a square function at high SNR; see Figures~5 and 6 in \cite{Boutros2023}.\\

\noindent
c) Given that Alice's photon is in bin $i$, the probability that Bob's photon is in bin $j$ is given by: 
\begin{equation}
\label{equ_pij}
    p_{ij} ~=~\pr(\hY=j|\hX=i) = \frac
    {\int_0^N \left[Q\left(\frac{j-u}{\sigma}\right)-Q\left(\frac{j+1-u}{\sigma}\right)\right] \cdot \left[Q\left(\frac{i-u}{\sigma}\right)-Q\left(\frac{i+1-u}{\sigma}\right)\right] \,du}
    {\int_0^N \left[Q\left(\frac{-t}{\sigma}\right)-Q\left(\frac{N-t}{\sigma}\right)\right] \cdot \left[Q\left(\frac{i-t}{\sigma}\right)-Q\left(\frac{i+1-t}{\sigma}\right)\right] \,dt},
\end{equation}
where $i, j \in \Z_N$.
From Proposition~2 in \cite{Boutros2023}, we know that $p_{ij}=\cO(\sigma)=\cO(\tfrac{1}{\sqrt{\gamma}})$
for $|i-j|=1$, i.e., a single-bin jump from $\hX$ to $\hY$. Such events dominate the performance
without coding and yield a $1/2$-diversity uncoded performance. For $|i-j|=2$,
it turns into $p_{ij}=\cO(e^{-\tfrac{\gamma}{4}})$. These double-jump events create infinite diversity if the error-correcting code wipes out all single-jump errors.\\ 

\noindent
d) The density $p_U(u)$ of the exact photon position $U \in [0,N[$, conditioned on both frames being valid, is given by the expression: 
\begin{equation}
\label{equ_pdfU_valid}
p(u|\tX,\tY \in [0,N))= \frac{\left[Q\left(\frac{-u}{\sigma}\right)-Q\left(\frac{N-u}{\sigma}\right)\right]^2}
{\int_0^N \left[Q\left(\frac{-t}{\sigma}\right)-Q\left(\frac{N-t}{\sigma}\right)\right]^2 \,dt}. 
\end{equation}
From Lemma~1 in \cite{Boutros2023}, the numerator in (\ref{equ_pdfU_valid}) is
\[
\left[ Q\left(\frac{-u}{\sigma}\right) - Q\left(\frac{N-u}{\sigma}\right) \right]^2 = 
1-\cO(\exp(-u^2 \cdot \tfrac{\gamma}{2})),~~~~\text{for}~~u \in ]0,\tfrac{N}{2}].
\]
Then, with the properties of the Gaussian tail function, we have
\[
\int_0^{N/2} \exp(-u^2 \cdot \tfrac{\gamma}{2}) du = \cO(\tfrac{1}{\gamma}).
\]
Hence, at high SNR, for $u \in ]0,\tfrac{N}{2}]$, the density of $U$ becomes
\begin{equation}
\label{equ_pdfU_highsnr}
p(u)=p(N-u)=\frac{1-\cO(\exp(-u^2 \cdot \tfrac{\gamma}{2}))}{N-\cO(\tfrac{1}{\gamma})} ~\sim~\frac{1}{N}.
\end{equation}

%%-------------------------------------------------------------------
%%-------------------------------------------------------------------
\clearpage
\section{Big $\cO$ Notations of Useful Integrations\label{app:bigO}}
In~\cite{Boutros2023}, we have shown that
\begin{align}
    I_1 = \int_0^1 Q\left( \frac{v}{\sigma} \right) ~dv =
    \int_0^1 Q\left( \frac{1-v}{\sigma} \right) ~dv  ~=~ \frac{1}{ \sqrt{2\pi \gamma} } + \cO\left( e^{-\frac{\gamma}{2}}\right),
    \label{eq:v/s}
    \\
    I_2 = \int_0^1 \left[ Q\left( \frac{v}{\sigma} \right) \right]^2 ~dv =
    \int_0^1 \left[ Q\left( \frac{1-v}{\sigma} \right) \right]^2 ~dv
    ~=~ \frac{(\sqrt{2} -1)}{ 2 \sqrt{\pi \gamma} } + \cO\left( e^{-\gamma}\right),
    \label{eq:v^2/s^2}
    \\
    I_3 = \int_0^1 Q\left( \frac{v}{\sigma} \right) Q\left( \frac{1- v}{\sigma} \right)~dv 
    ~=~  \cO\left( e^{-\frac{\gamma}{4}}\right),
    \label{eq:v/s(1-v)/s}
\end{align}
where $\gamma = \frac{1}{\sigma^2}$. The integrals $I_1$ and $I_2$ correspond to finite diversity of order $1/2$, while $I_3$ has infinite diversity. We propose in this appendix a general technique to cover almost all integrals over a finite interval involving one or two Gaussian tail functions, useful in the infinite diversity case.\\

Let $a$, $b$, $c$, $d$, $e$, $f$ be finite real numbers. Also assume that $e\ge 0$, $f\ge 0$, and for $v \in [e,f]$, we have $av+b\ge 0$
and $cv+d \ge 0$. Consider
the integral
\begin{equation}
I=\int_e^f Q\left(\frac{av+b}{\sigma}\right) \cdot Q\left(\frac{cv+d}{\sigma}\right)~dv.
\end{equation}
Now we use the inequality $Q(x) \le \exp(-x^2/2)$, for $x\ge 0$, to get an upper bound of $I$,
\begin{equation*}
I \le \int_e^f \exp\left(-\frac{(av+b)^2}{2\sigma^2}-\frac{(cv+d)^2}{2\sigma^2}\right) ~dv
= \int_e^f \exp\left(-\frac{(a^2+c^2)v^2+2(ab+cd)v+b^2+d^2}{2\sigma^2}\right) ~dv
\end{equation*}
The parabola $g(v)=(a^2+c^2)v^2+2(ab+cd)v+b^2+d^2$ in the numerator of the exponent in the integrand has its minimum at $v_{min}=-\frac{(ab+cd)}{(a^2+c^2)}$ and its value $g_{min}=(a^2d^2+b^2c^2-2abcd)/(a^2+c^2)$.
Three cases are to be distinguished:\\
Case 1: $e \le v_{min} \le f$.
\begin{equation}
I \le (f-e) \cdot \exp\left(-\frac{g_{min}}{2\sigma^2}\right) ~=~ \cO\left(\exp\left(-\frac{\gamma}{2} \cdot \frac{a^2d^2+b^2c^2-2abcd}{a^2+c^2}\right)\right).
\end{equation}
Case 2: $v_{min} \le e$.
\begin{equation}
I \le (f-e) \cdot \exp\left(-\frac{g(e)}{2\sigma^2}\right) ~=~
\cO\left(\exp\left(-\frac{\gamma}{2} \cdot ((a^2+c^2)e^2+2(ab+cd)e+b^2+d^2)\right)\right).
\end{equation}
Case 3: $f \le v_{min}$.
\begin{equation}
I \le (f-e) \cdot \exp\left(-\frac{g(f)}{2\sigma^2}\right) ~=~
\cO\left(\exp\left(-\frac{\gamma}{2} \cdot ((a^2+c^2)f^2+2(ab+cd)f+b^2+d^2)\right)\right).
\end{equation}
We use the above $\cO()$ results to cover infinite-diversity integrals encountered in this paper. The reader can check that $I_3$ can be solved via Case~1. We also obtain
\begin{equation}
\label{eq:2-v-O}
\int_0^1 Q\left( \frac{2- v}{\sigma } \right) dv ~=~ \cO\left( e^{- \frac{\gamma}{2}} \right),~~~\text{and}~~~
\int_0^1 \left(Q\left( \frac{2- v}{\sigma } \right)\right)^2 dv ~=~ \cO\left( e^{-\gamma} \right).
\end{equation}
Similarly, 
\begin{align}
\int_0^1 Q\left( \frac{3 - v}{\sigma } \right) dv
= \cO\left( e^{- 2\gamma } \right),
\label{eq:3-v-O}\\
        %^^^^^^^^^^^^^^^^^^^^^^^^^^^
\int_0^1 Q\left( \frac{v}{\sigma} \right) Q\left( \frac{2- v}{\sigma} \right) dv  = \cO\left( e^{-\gamma}\right), 
\label{eq:v-2-v-O}\\
        %^^^^^^^^^^^^^^^^^^^^^^^^^^^
    \int_0^1 Q\left( \frac{1-v}{\sigma} \right) Q\left( \frac{1 + v}{\sigma} \right) dv 
    = \cO\left( e^{-\gamma }\right),
    \label{eq:1-v-1+v-O}
        %^^^^^^^^^^^^^^^^^^^^^^^^^^^
    \\
       \int_0^1 Q\left( \frac{v}{\sigma} \right) Q\left( \frac{3 - v}{\sigma} \right) dv 
    = \cO\left( e^{-\frac{5}{2}\gamma }\right),
    \label{eq:v-3-v-O}
        %^^^^^^^^^^^^^^^^^^^^^^^^^^^
         \\
       \int_0^1 Q\left( \frac{2-v}{\sigma} \right) Q\left( \frac{ 1 + v}{\sigma} \right) dv 
    = \cO\left( e^{- \frac{9}{2}\gamma }\right),
    \label{eq:2-v-1+v-O}
        %^^^^^^^^^^^^^^^^^^^^^^^^^^^
        \\
     \int_0^1 Q\left( \frac{1-v}{\sigma} \right) Q\left( \frac{2- v}{\sigma} \right) dv 
     =
     \int_0^1 Q\left( \frac{v}{\sigma} \right) Q\left( \frac{1+ v}{\sigma} \right) dv
     = \cO\left( e^{-\frac{\gamma}{2}}\right),
    \label{eq:1-v-2-v-O}
    %^^^^^^^^^^^^^^^^^^^^^
    \\
    \int_0^1 Q\left( \frac{1-v}{\sigma} \right) Q\left( \frac{3- v}{\sigma} \right) dv 
     =
     \int_0^1 Q\left( \frac{v}{\sigma} \right) Q\left( \frac{2 + v}{\sigma} \right) dv
     = \cO\left( e^{-2\gamma }\right),
    \label{eq:1-v-3-v-O}
    %^^^^^^^^^^^^^^^^^^^^^^^^^^
    \\
    \int_{0}^1 Q\left(\frac{2-v}{\sigma}\right)
    Q\left(\frac{3-v}{\sigma}\right) dv
    = \cO\left(e^{-\frac{5}{2} \gamma} \right).
\end{align}
%\end{itemize}

%%\nrd{Hello Siyao, I made Appendix B general for all integrals of infinite diversity. You typed most of the equations! Thank You. June 2, 2026. ===================}
%%\nrd{SL: The idea of extending the technique we used for the general case is brilliant! I'm glad these equations were helpful :)}
%=========================================

%%-------------------------------------------------------------------
%%-------------------------------------------------------------------
%%----------------------- Appendix C --------------------------------
%%\section{Decision Boundary in the Finite-Diversity Case\label{app_C}}
%%\nrd{Added on June 26.\\}
%%Fill here. 

%%\begin{figure}[!h]
%%\centerline{\includegraphics[width=0.5\linewidth]{boundary_Lemma6.png}}
%%\begin{center}
%%\caption{Boundary between the decision regions of $c_A=000~000~000$ and $c'_A=100~100~100$ in dimension $3$, in the finite diversity case.\label{fig_boundary_Lemma6}}
%%\end{center}
%%\end{figure}
%%-------------------------------------------------------------------
%%------------------- End of Appendix C -----------------------------
%%-------------------------------------------------------------------

%-------------------------- Comment out -----------------------------

%=================================================

%%-----------------------------------------------------------------------
%%-----------------------------------------------------------------------
\clearpage
\bibliographystyle{IEEEtran}

%%--------------------------------------------------------------
\end{document}